\documentclass[11pt]{article}
\usepackage[margin=1in]{geometry}
\usepackage{amsmath,amssymb,amsthm,mathtools}
\usepackage{thm-restate}
\usepackage{microtype}
\usepackage{xcolor}
\usepackage[colorlinks=true,linkcolor=blue,citecolor=green!45!black,urlcolor=blue]{hyperref}
\usepackage[nameinlink,noabbrev]{cleveref}
\usepackage{enumitem}
\usepackage[ruled,section]{algorithm}
\usepackage{algpseudocode}
\usepackage{booktabs}
\usepackage{array}
\usepackage{placeins}
\usepackage{pifont}
\usepackage{lmodern}

\newcolumntype{P}[1]{>{\raggedright\arraybackslash}p{#1}}

\newtheorem{theorem}{Theorem}[section]
\newtheorem{lemma}[theorem]{Lemma}
\newtheorem{corollary}[theorem]{Corollary}

\newcommand{\restate}[1]{\csname #1\endcsname*}
\theoremstyle{definition}
\newtheorem{remark}[theorem]{Remark}
\numberwithin{equation}{section}

\newcommand{\mybold}[1]{#1}%

\newcommand{\bx}{\mybold{x}}
\newcommand{\bw}{\mybold{w}}

\newcommand{\eps}{\varepsilon}
\renewcommand{\epsilon}{\varepsilon}

\newcommand{\A}{\mathcal{A}}

\newcommand{\I}{\mathcal{I}}

\newcommand{\M}{\mathcal{M}}

\newcommand{\U}{\mathcal{U}}

\newcommand{\otilde}{\widetilde O}
\newcommand{\thetatilde}{\widetilde\Theta}
\newcommand{\defeq}{\stackrel{\mathrm{{\scriptscriptstyle def}}}{=}}

\DeclareMathOperator{\poly}{poly}
\DeclareMathOperator{\polylog}{polylog}
\DeclareMathOperator{\supp}{supp}

\title{The Power of Local Marginals: An $O(\varepsilon^{-1})$-Aspect-Ratio Reduction for Dynamic Weighted Matching}
\author{Jiale Chen\\Stanford University\\\texttt{jialec@stanford.edu}}
\date{}

\begin{document}
\maketitle

\begin{abstract}

We study dynamic maximum weight matching (MWM) under edge insertions and
deletions in two settings: maintaining a
$(1\pm\varepsilon)$-approximation to the optimum weight, and maintaining an explicit
$(1-\varepsilon)$-approximate matching. Our main result is
a reduction that transforms instances of polynomial aspect
ratio into instances of aspect ratio $O(\varepsilon^{-1})$. The
reduction applies to general graphs in both settings and is compatible with partially dynamic updates.

The reduction is based on a structural property of local marginals.
After grouping edges into weight classes, the global marginal
contribution of one class relative to all lower classes is approximated
by its marginal contribution within a local weight window of aspect
ratio $O(\varepsilon^{-1})$. Summing these local marginals yields a value
composition lemma that uses only approximate optimum values of the local
windows. This improves the value reduction of Gupta and Peng
(FOCS 2013), whose local aspect ratio is
$\varepsilon^{-\Theta(\varepsilon^{-1})}$. The same structural property
yields an improved matching composition lemma for explicit matchings,
reducing the local aspect ratio of Bernstein--Chen--Dudeja--Langley--Sidford--Tu (SODA 2025) from
$O(\varepsilon^{-2})$ to $O(\varepsilon^{-1})$.
\end{abstract}
\tableofcontents
\newpage
\section{Introduction}

In an $n$-vertex weighted graph $G=(V,E,w)$ with edge weights
$w:E\to\mathbb R_{>0}$, a matching $M$ is a set of vertex-disjoint edges
with weight $w(M)=\sum_{e\in M}w(e)$.\footnote{Throughout this
paper, we assume that the edge weights are bounded by $\poly(n)$.}
A \emph{maximum-weight matching}
(MWM) is a matching of maximum weight, and we denote its weight by
$\mu_w(G)$.  When every edge has unit weight, matching weight equals
cardinality. The corresponding problem is \emph{maximum-cardinality
matching} (MCM), and we denote the maximum matching cardinality by $\mu(G)$.

We study the dynamic matching problem, in which a graph
$G$ undergoes edge updates
(insertions and deletions) and the algorithm maintains a
large matching after every update.
The time taken for the algorithm to maintain the matching
between edge updates is the \emph{update time}.
For exact maximum matching, conditional lower bounds rule out
$O(n^{1-\delta})$ amortized update time for every constant $\delta>0$, even
for bipartite MCM under only insertions or only deletions~\cite{Dahlgaard16}.

The barriers motivate approximation.  The ultimate goal is a dynamic
algorithm that maintains a
$(1-\eps)$-approximate matching $M$,
in $\poly(\log n,\eps^{-1})$ update
time~\cite{AbboudRW17,Liu24}. 
A $(1-\eps)$-approximate
MWM, or $(1-\eps)$-MWM, has weight
$w(M)\geq(1-\eps)\mu_w(G)$. The correspondence in a unit-weight graph
is called a $(1-\eps)$-MCM, with cardinality
$|M|\geq(1-\eps)\mu(G)$.
In a partially dynamic graph, where updates consist only of edge insertions
(incremental) or only of edge deletions (decremental),
$\poly(\log n,\eps^{-1})$-update-time results are
known~\cite{Gupta14,BlikstadK23,ChenST25,BernsteinCDLST25}.

The fully
dynamic setting remains significantly different even for $(1-\eps)$-MCM:
no polynomial improvement over
the linear update-time barrier is known for maintaining
an explicit matching on dense graphs,
and the best bounds remain
$n^{1-o(1)}$~\cite{GuptaP13,AssadiBKL23,Liu24,MitrovicS25}.
Recent work parameterizes the update time by the density of Ordered
Ruzsa--Szemer\'edi (ORS) graphs, which pack a sequence of linear-size
matchings that are
induced relative to later ones~\cite{BehnezhadG24,AssadiKK25,MitrovicS25}.
Later it is shown by \cite{Pratt26} that ORS density is essentially equivalent to classical
RS density, which is a longstanding open
problem in extremal combinatorics, and the current bounds do not yet yield a
polynomial update-time improvement.

Alternatively, for the value version of the problem of
maintaining a $(1\pm\eps)$-approximation $\nu$ to $\mu(G)$,
i.e., $|\nu-\mu(G)|\leq \eps\mu(G)$, \cite{BhattacharyaKS23Size}
gives a randomized fully dynamic algorithm
with $m^{1/2-\Omega_\eps(1)}$ worst-case update time with high probability
against an adaptive adversary, giving the first polynomial improvement
over the $O(n)$ barrier on dense graphs.

For weighted graphs, a central line of work develops black-box
reductions from dynamic MWM to dynamic MCM. 
The first general
reduction~\cite{StubbsW17} loses roughly a factor of two in the
approximation ratio.  For bipartite graphs,
\cite{BernsteinDL21} subsequently obtained the first low-loss
reduction,
transforming a dynamic $(1-\eps)$-MCM algorithm into a dynamic
$(1-\eps)$-MWM algorithm,
but with $\eps^{-\Theta(\eps^{-1})}\log n$ overhead.
\cite{BernsteinCDLST25} later reduced the overhead to
$\poly(\eps^{-1})$ in bipartite graphs by first introducing matching
composition.  More recently, \cite{BernsteinC26} developed the first
low-loss reduction for fully dynamic general graphs, using a
primal--dual framework and incurring
$\poly(\log n,\eps^{-1})$ overhead.

The two low-loss reductions with $\poly(\log n,\eps^{-1})$ overhead
share the same elegant weight reduction.
The matching
composition lemma of~\cite{BernsteinCDLST25} decomposes an instance
with $\poly(n)$ \emph{aspect ratio}---the ratio between its largest and
smallest edge weights---into subgraphs of
$O(\eps^{-2})$ aspect ratio where each
subgraph contains every edge with weight in a given window.
It then shows that the union of the local
$(1-\eps)$-MWMs in each window
contains a global $(1-\eps)$-MWM, thereby reducing
$\poly(n)$ aspect ratio to $O(\eps^{-2})$ with only
$\poly(\eps^{-1})$ additive overhead.  For bipartite graphs,
\cite{BernsteinCDLST25} combines this lemma with the dynamic
graph unfolding framework of \cite{BernsteinDL21}.  For general graphs,
\cite{BernsteinC26} combines the weight range reduction with a
primal-dual algorithm based on induced-matching queries simulated
by fully dynamic bipartite $(1-\eps)$-MCM algorithms.

Both
\cite{BernsteinCDLST25,BernsteinC26} require the MCM
algorithm to return an explicit matching. Consequently, neither reduction
applies to the value-only breakthrough above. For bipartite graphs, the earlier 
black-box reduction of \cite{BernsteinDL21} is essentially value-compatible. However, it uses the dynamic weight reduction framework of \cite{GuptaP13} that incurs
$\eps^{-\Theta(\eps^{-1})}\log n$ overhead
(see \cite{BernsteinCDLST25} for a detailed discussion
on the limitation of \cite{GuptaP13}).

This leaves our central question, already open for bipartite graphs:
Can a dynamic algorithm that only estimates the MCM cardinality
be converted into a dynamic algorithm that estimates
the MWM weight, with only $\poly(\log n,\eps^{-1})$ overhead?

We answer this question affirmatively for bipartite graphs,
based on the first low-loss
value-compatible dynamic weight reduction with $\poly(\eps^{-1})$ overhead.
Our reduction decreases the aspect ratio from $\poly(n)$ to
$O(\eps^{-1})$ using only local MWM-value estimates and works
for general graphs.  Combined with
graph unfolding \cite{KaoLST01, BernsteinDL21}, which is specific to bipartite graphs,
it transforms a dynamic bipartite MCM-value algorithm
into an MWM-value algorithm with only
$\otilde(\eps^{-2})$ overhead.\footnote{In this paper, $\otilde(\cdot)$ hides only factors polynomial in
$\log(\eps^{-1})$. All logarithms have base 2.}
The reduction is based on a simple 
observation on the marginal contribution of
local weight windows.
The same observation also
leads to an improved weight reduction for explicit matchings. 
We now explain the observation below.

\paragraph{The Power of Local Marginals.}
Denote by $R$ the aspect ratio of the graph. Rescale the edge weights to $[1,R]$ and
partition the edges into classes $E_i=\{e:2^i\le w(e)<2^{i+1}\}$.
Let $G_{[a,b]}$ denote the graph induced by
$E_a\cup\cdots\cup E_b$.
Let $G_i=G_{[i,i]}$ and $G_{\le i}=G_{[0,i]}$.

The matching composition lemma of 
\cite{BernsteinCDLST25} states a structural 
property of explicit local matchings: the 
union of approximate matchings computed on 
overlapping weight windows contains a
near-optimal matching for the whole graph. 

In the value-only setting
where explicit matchings 
are unavailable, we instead 
seek a quantity that can be composed using 
only local optimum values.
Define the marginal contribution of $E_i$ w.r.t.\ $h$ lower classes as
$\Delta_i^{(h)}\defeq \mu_w(G_{[i-h,i]})-\mu_w(G_{[i-h,i-1]})$.
Taking $h=i$ exposes the entire
prefix and $\Delta_i^{(i)}=\mu_w(G_{\leq i})-\mu_w(G_{\leq i-1})$
is the global contribution of $E_i$.
The key observation shows that if we take
$h=\lceil\log(\eps^{-1})\rceil+\Theta(1)$,
corresponding to aspect ratio $O(\eps^{-1})$,
$\Delta_i^{(h)}$ gives a sufficiently accurate approximation to $\Delta_i^{(i)}$.

\begin{restatable}{lemma}{cor:marginal-sandwich}
\label{cor:marginal-sandwich}
For any class $E_i$ and any integer $h\ge0$,
$\lvert\Delta_i^{(h)}-\Delta_i^{(i)}\rvert\leq 2^{1-h}\mu_w(G_i)$.
\end{restatable}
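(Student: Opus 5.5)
The plan is a symmetric-difference exchange argument, run once for each direction of the inequality. Write $F=G_{[0,i-h-1]}$ for the far classes and $L=G_{[i-h,i-1]}$ for the window. Then
\[
\Delta_i^{(i)}=\mu_w(F\cup L\cup E_i)-\mu_w(F\cup L),\qquad \Delta_i^{(h)}=\mu_w(L\cup E_i)-\mu_w(L).
\]
Each direction reduces to the following \emph{exchange claim}. Let $M_1,M_2$ be matchings with $M_1\cup M_2\subseteq F\cup L\cup E_i$ and with the $F$-edges all on one side. Then one can build two matchings $A,B$ that redistribute the edges of $M_1\oplus M_2$ and satisfy
\[
w(A)+w(B)\ \ge\ w(M_1)+w(M_2)-2^{1-h}\mu_w(G_i).
\]
Here $A$ avoids $F$, and $B$ avoids $E_i$. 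The common edges $M_1\cap M_2$ go into both $A$ and $B$ whenever they are allowed there; this causes no harm.

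\textbf{Lower bound $\Delta_i^{(i)}\ge\Delta_i^{(h)}-2^{1-h}\mu_w(G_i)$.} Take $M_1$ an MWM of $L\cup E_i$ and $M_2$ an MWM of $F\cup L$, so all $F$-edges lie in $M_2$ and all $E_i$-edges lie in $M_1$. Decompose $M_1\oplus M_2$ into alternating paths and cycles, then delete every $F$-edge. What remains is a vertex-disjoint family of \emph{pieces}, each a subpath of some component. Call a piece \emph{heavy} if it contains an $E_i$-edge. Define
\[
B=\bigl(M_2\text{ on heavy pieces}\bigr)\cup\bigl(M_1\text{ on light pieces}\bigr).
\]
All these edges lie in $L$, and $B$ is a matching because the pieces are vertex-disjoint; hence $w(B)\le\mu_w(L)$. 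Define $A$ as $M_1$ on heavy pieces, plus $M_2$ on light pieces, plus every $F$-edge that is not incident to a heavy piece. Then $A\subseteq F\cup L\cup E_i$. To see that $A$ is a matching, note that conflicts can only arise at a junction between an $F$-edge and the endpoint of a heavy piece, and exactly those $F$-edges were discarded. A heavy piece has at most two adjacent $F$-edges, each of weight $<2^{i-h}$, while it contains at least one $M_1$-edge of weight $\ge 2^i$. The total discarded weight is therefore at most $2^{1-h}\,w(M_1\cap E_i)\le 2^{1-h}\mu_w(G_i)$, since $M_1\cap E_i$ is a matching in $G_i$. Rearranging $w(A)+w(B)\ge w(M_1)+w(M_2)-2^{1-h}\mu_w(G_i)$ gives
\[
\mu_w(F\cup L\cup E_i)\ \ge\ \mu_w(L\cup E_i)+\mu_w(F\cup L)-\mu_w(L)-2^{1-h}\mu_w(G_i),
\]
which is the claimed lower bound.

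\textbf{Upper bound.} Now take $M_1$ an MWM of $F\cup L\cup E_i$ and $M_2$ an MWM of $L$. All $F$-edges and $E_i$-edges now lie in $M_1$. Form the pieces in the same way by deleting the $F$-edges. Set $A$ to be $M_1$ on heavy pieces together with $M_2$ on light pieces; then $A\subseteq L\cup E_i$ and $A$ is a matching. Set $B$ to be $M_2$ on heavy pieces, plus $M_1$ on light pieces, plus the $F$-edges not incident to a heavy piece; then $B\subseteq F\cup L$. The same charging argument applies, with the discarded $F$-edges charged to an $E_i$-edge of $M_1$ inside the adjacent heavy piece. This gives
\[
\mu_w(L\cup E_i)+\mu_w(F\cup L)\ \ge\ \mu_w(F\cup L\cup E_i)+\mu_w(L)-2^{1-h}\mu_w(G_i),
\]
which is $\Delta_i^{(i)}\le\Delta_i^{(h)}+2^{1-h}\mu_w(G_i)$.

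\textbf{Main obstacle.} The delicate step is the charging. A single alternating component can contain many $F$-edges, so naively keeping whole components on one side loses too much. The fix is to cut the components at the $F$-edges. This works because $F$-edges only ever appear in one of the two matchings, so every piece is $F$-free and can be assigned to either side. The only loss is then the at most two boundary $F$-edges per heavy piece, each cheaper than the piece's $E_i$-edge by a factor of $2^{-h}$. I would also check the edge cases. For cycles, deleting the $F$-edges leaves paths, or the whole cycle if it has no $F$-edge, in which case there is no boundary loss. When $h\ge i$, $F$ is empty and both sides coincide, so the bound holds trivially.
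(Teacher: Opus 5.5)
Your proof is correct and is essentially the paper's argument. The paper proves a Local Exchange lemma in both directions by cutting each alternating component at the first edge of weight below $2^{i-h}$ on either side of every $E_i$-edge. That removes exactly the $F$-edges you discard as boundaries of heavy pieces, and the paper then assigns components just as you assign heavy and light pieces. The only differences are cosmetic: you inline the exchange for the specific MWMs rather than stating it for arbitrary matchings, and your argument covers $h=0$ directly, whereas the paper dispatches that case via $0\le\Delta_i^{(i)}\le\mu_w(G_i)=\Delta_i^{(0)}$.
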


The local contribution can also be realized by explicit matchings,
which naturally leads to an
improved matching composition lemma.
Let $M_i$ be an approximate MWM of the 
local window, and let $P_{i-1}$ be any
matching accumulated from the lower scales. The increase in matching weight
obtained from $P_{i-1}\cup M_i$
approximately realizes the global contribution $\Delta_i^{(i)}$.
\begin{restatable}{lemma}{lem:prefix-extension}
\label{lem:prefix-extension}
For every $i$, $h\geq 1$, and $0\leq\rho\leq 1$, let $M_i$ be a
$(1-\rho)$-MWM of $G_{[i-h,i]}$ and let
$P_{i-1}\subseteq G_{\leq i-1}$ be any matching.  Then
\[
 \mu_w(P_{i-1}\cup M_i)-w(P_{i-1})
 \geq \Delta_i^{(i)}-\rho\mu_w(G_{[i-h,i]})-2^{2-h}\mu_w(G_i).
\]
\end{restatable}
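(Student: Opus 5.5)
The plan is to reduce to \Cref{cor:marginal-sandwich}. Since $w(M_i)\ge(1-\rho)\mu_w(G_{[i-h,i]})$, we have
\[
w(M_i)-\mu_w(G_{[i-h,i-1]})\ \ge\ \Delta_i^{(h)}-\rho\,\mu_w(G_{[i-h,i]})\ \ge\ \Delta_i^{(i)}-\rho\,\mu_w(G_{[i-h,i]})-2^{1-h}\mu_w(G_i),
\]
where the second step is \Cref{cor:marginal-sandwich}. It therefore suffices to exhibit a matching in $P_{i-1}\cup M_i$ of weight at least
\[
w(P_{i-1})+w(M_i)-\mu_w(G_{[i-h,i-1]})-2^{1-h}\mu_w(G_i).
\]
The second loss of $2^{1-h}\mu_w(G_i)$ then brings the total error to $2^{2-h}\mu_w(G_i)$.

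\textbf{Exchange step.} Split $P_{i-1}=Q\cup L$, where $Q=P_{i-1}\cap G_{[i-h,i-1]}$ is the part inside the window and $L$ consists of the edges of classes below $i-h$, so every edge of $L$ has weight $<2^{i-h}$. Consider the components of $Q\oplus M_i$; these are alternating paths and cycles. Let $S$ be the union of those components that contain at least one class-$i$ edge of $M_i$. Define
\[
X=(Q\setminus S)\cup(M_i\cap S),\qquad Y=(M_i\setminus S)\cup(Q\cap S).
\]
Both are matchings, since we switch sides on whole components, and $w(X)+w(Y)=w(Q)+w(M_i)$. Every class-$i$ edge of $M_i$ lies in $S$, and $Q$ has no class-$i$ edges. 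Hence $Y\subseteq G_{[i-h,i-1]}$, so $w(Y)\le\mu_w(G_{[i-h,i-1]})$. This gives $w(X)\ge w(Q)+w(M_i)-\mu_w(G_{[i-h,i-1]})$.

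\textbf{Reattaching $L$.} Next we add $L$ to $X$, discarding the edges of $L$ that conflict with $X$. Any vertex of $X$ that is not already covered by $Q$ must be an endpoint of a path component in $S$: interior vertices of a component are covered by $Q$, and $L$ is disjoint from $Q$'s vertices. So each component in $S$ forces the removal of at most two $L$-edges, each of weight $<2^{i-h}$. Each component in $S$ contains a distinct class-$i$ edge of $M_i$. These edges form a matching in $G_i$, each of weight $\ge 2^i$, so there are at most $\mu_w(G_i)/2^i$ components in $S$. The total weight removed from $L$ is therefore at most $2\cdot2^{i-h}\cdot\mu_w(G_i)/2^i=2^{1-h}\mu_w(G_i)$. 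The resulting matching lies in $P_{i-1}\cup M_i$ and has the required weight.

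\textbf{Main obstacle.} The delicate point is the charging in the reattachment step. One must make sure that only path endpoints of $S$ create conflicts with $L$, and that components are counted through their class-$i$ edges rather than through all $M_i$ edges. Otherwise the loss would scale with $|M_i|$ instead of with $\mu_w(G_i)/2^i$. I would check this carefully, including the case of cycles, which create no conflicts.
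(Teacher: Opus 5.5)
Your proof is correct and follows the paper's route. Both arguments first obtain $\mu_w(P_{i-1}\cup M_i)-w(P_{i-1})\ge w(M_i)-\mu_w(G_{[i-h,i-1]})-2^{1-h}\mu_w(G_i)$ by a forward exchange and then apply \Cref{cor:marginal-sandwich}; the only difference is that you re-derive \Cref{lem:marginal-exchange}(i) inline (swapping on the components of $Q\oplus M_i$ that contain class-$i$ edges, then greedily reattaching $L$), where the paper cuts $P_{i-1}\oplus M_i$ at the first light edge, and the two constructions produce the same matching with the same $2^{1-h}\mu_w(G_i)$ loss.
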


Both \Cref{cor:marginal-sandwich,lem:prefix-extension} are short consequences of a
single local exchange lemma (\Cref{lem:marginal-exchange}) that we show at the beginning of \Cref{sec:composition}.

\paragraph{Value composition.}
Since the global contributions telescope, i.e.,
$\sum_i \Delta_i^{(i)}=\mu_w(G)$, \Cref{cor:marginal-sandwich} implies
that the sum of the local contributions $\Delta_i^{(h)}$ approximates the global MWM
weight.  This remains true when each local optimum is replaced by approximations.  Consequently, an
$O(\eps^{-1})$-aspect-ratio MWM weight algorithm yields a
$\poly(n)$-aspect-ratio algorithm with $\otilde(1)$ overhead.
Combining this reduction with graph
unfolding~\cite{BernsteinDL21,KaoLST01} converts an MCM cardinality algorithm
into an MWM weight algorithm for bipartite graphs
with $\otilde(\eps^{-2})$ overhead
(see \Cref{thm:value-w2u}).

\paragraph{Matching composition.}
\Cref{lem:prefix-extension} gives the matching counterpart.
We process the windows from low weights to high weights.  After adding each
local matching $M_i$, we maintain a large matching from the union of $M_i$ and
the current partial matching.  Summing the gains over all steps shows that
local matchings from $O(\eps^{-1})$-aspect-ratio windows compose into a
$(1-\eps)$-MWM.

For the dynamic implementation,
immediately propagating every change in a local matching can
create a cascade through all later windows.
We instead rebuild a partial matching only when necessary to preserve the
approximation guarantee, thereby preventing such cascades.
An $O(\eps^{-1})$-aspect-ratio algorithm can be transformed
into a $\poly(n)$-aspect-ratio algorithm with $\otilde(\eps^{-3})$
overhead (see \Cref{thm:dynamic}).
Compared with~\cite{BernsteinCDLST25}, this improves the local aspect ratio
from $\otilde(\eps^{-2})$ to $O(\eps^{-1})$ and the additive update overhead
from $\otilde(\eps^{-6})$ to $\otilde(\eps^{-3})$.

\paragraph{Tightness of the local aspect ratio.}
The $O(\eps^{-1})$ local aspect ratio is asymptotically tight for the two
forms of composition above.  Reductions using only the MWM value of each
weight window require a local aspect ratio $\Omega(\eps^{-1})$.  The same
lower bound holds for matching composition. However, the value lower bound
does not cover reductions using
additional information from the windows, and the matching lower bound
assumes adversarial
choices of the local MWMs.  See \Cref{sec:lower}.

\paragraph{Applications.}
Our dynamic weight reduction for explicit matching can replace that of
\cite{BernsteinCDLST25} in all of its dynamic applications.  Here we highlight only
two genuinely new consequences.
For the value version, applying our reduction to
\cite{BhattacharyaKS23Size} yields, for
every fixed $\eps>0$, a randomized fully dynamic algorithm that maintains a
$(1\pm\eps)$-approximation to the MWM weight of a $\poly(n)$-aspect-ratio
bipartite graph in $m^{1/2-\Omega_\eps(1)}$ worst-case update time, with high
probability against an adaptive adversary.
This transfers their polynomial
improvement from MCM cardinality to MWM weight in bipartite graphs.

For explicit matching, a new application is to
low-arboricity graphs.  We design a $(1-\eps)$-MWM sparsifier with
$O(\alpha\eps^{-1})$ degree on a graph of arboricity $\alpha$.
It is a natural extension of the $(1-\eps)$-MCM sparsifier in
low-arboricity graphs (see \cite{PelegS16,Solomon18}).
This gives a fully dynamic algorithm for
explicitly maintaining a $(1-\eps)$-MWM
in amortized update time $\otilde(\alpha\eps^{-4})$.
To our knowledge, this is the first result
depending only on $\alpha$ and $\eps$ on general graphs.

Beyond these applications, our reduction improves the
$\eps$-dependence of several explicit-matching bounds from
\cite{BernsteinCDLST25}.  Specifically, it gives deterministic
$(1-\eps)$-MWM algorithms
with total time $\otilde(n\eps^{-8}+m\eps^{-7})$ in incremental bipartite
graphs, amortized update time $\otilde(\Delta\eps^{-3})$ in fully dynamic
maximum-degree-$\Delta$ graphs, and amortized update time
$\otilde(\sqrt m\,\eps^{-3})$ in general fully dynamic graphs. It also improves the dynamic rounding algorithm in weighted general graphs to $\otilde(\eps^{-6}\polylog n)$ amortized update time.

\section{Value and Matching Composition}\label{sec:composition}
In this section, we first prove the local exchange lemma that enables the proof of
\Cref{cor:marginal-sandwich,lem:prefix-extension}, and then obtain the value
composition and matching composition with them.

Recall that we rescale the edge weights to $[1,R]$ and
partition the edges into classes $E_i=\{e:2^i\le w(e)<2^{i+1}\}$.
Let $G_{[a,b]}$ denote the graph induced by
$E_a\cup\cdots\cup E_b$ with $G_{[a,b]}=\emptyset$ if $a>b$.
Let $G_i=G_{[i,i]}$ and $G_{\le i}=G_{[0,i]}$.

The local exchange lemma will show that a matching
in the prefix $G_{\le i-1}$ and a local matching in $G_{[i-h,i]}$
can be repartitioned into
matchings in $G_{\le i}$ and $G_{[i-h,i-1]}$ with only a
small loss in total weight. The reverse direction also
holds.

\begin{lemma}[Local Exchange]\label{lem:marginal-exchange}
Fix an integer $i\ge0$ and $h\ge1$.
\begin{enumerate}[label=\textup{(\roman*)},leftmargin=2.1em]
\item For matchings $A\subseteq G_{\le i-1}$ and $B\subseteq G_{[i-h,i]}$,
there are matchings
$C\subseteq G_{\leq i}$, $D\subseteq G_{[i-h,i-1]}$ such that
$C, D\subseteq A\cup B$ and
\begin{equation}
 w(C)+w(D)
 \ge w(A)+w(B)-2^{1-h}\mu_w(G_i).
\label{eq:exchange-forward}
\end{equation}
\item For matchings $A\subseteq G_{\le i}$ and $B\subseteq G_{[i-h,i-1]}$,
there are matchings $C\subseteq G_{[i-h,i]}$ and
$D\subseteq G_{\le i-1}$ such that
$C, D\subseteq A\cup B$ and
\begin{equation}
 w(C)+w(D)
 \ge w(A)+w(B)-2^{1-h}\mu_w(G_i).
\label{eq:exchange-backward}
\end{equation}
\end{enumerate}
\end{lemma}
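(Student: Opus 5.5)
Both parts can be proved by one local surgery. We work with the two matchings, keep almost every edge, and remove only edges of $A$ that lie below the window $[i-h,i]$ and sit next to class-$i$ edges. Each removed edge has weight $<2^{i-h}$. At most two are removed per class-$i$ edge, and each class-$i$ edge has weight $\ge 2^i$. So the loss is at most $2\cdot 2^{i-h}$ times the number of class-$i$ edges used, which is at most $2^{1-h}$ times their weight. A matching of class-$i$ edges weighs at most $\mu_w(G_i)$, and this gives the error term $2^{1-h}\mu_w(G_i)$. Edges in $A\cap B$ lie in classes $[i-h,i-1]$ in both parts, so we put each of them into both $C$ and $D$ and ignore them from now on.

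\emph{Part (i).} Here $A\subseteq G_{\le i-1}$ and $B\subseteq G_{[i-h,i]}$. Let $B_i=B\cap E_i$, and let $N\subseteq A$ be the set of edges of $A$ that share a vertex with some edge of $B_i$, so $|N|\le 2|B_i|$. The construction is
\[
C=(A\setminus N)\cup B_i,\qquad D=B\setminus B_i .
\]
\begin{itemize}[leftmargin=1.5em]
\item $C$ is a matching because $B_i$ is a matching and every $A$-edge touching it has been removed. It lies in $G_{\le i}$.
\item $D\subseteq B$ is a matching in $G_{[i-h,i-1]}$.
\item Every edge of $A\cup B$ outside $N$ is kept, so $w(C)+w(D)\ge w(A)+w(B)-w(N)$.
\end{itemize}
The bound $w(N)\le 2^{1-h}\mu_w(G_i)$ is what is needed, and the per-edge argument above gives it only when $N$ contains no edges of classes $[i-h,i-1]$. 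That is not true in general, so these edges need a refinement. If an $A$-edge $a\in N$ of class in $[i-h,i-1]$ touches $b\in B_i$, we move $a$ into $D$ instead of dropping it. This fails if $a$ conflicts with another edge of $D=B\setminus B_i$. In that case we use the alternating-path refinement from part (ii) below, which decides such cases component by component.

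\emph{Part (ii).} Here $A\subseteq G_{\le i}$ and $B\subseteq G_{[i-h,i-1]}$. Write $A_i=A\cap E_i$ and $A_{\rm low}=A\cap G_{\le i-h-1}$. Consider the components of $A\oplus B$; each is an alternating path or cycle. Each component can be assigned one of two orientations:
\begin{itemize}[leftmargin=1.5em]
\item \emph{Orientation 1:} $A$-edges go to $C$ and $B$-edges go to $D$. This is legal except for $A_{\rm low}$ edges, which cannot go to $C$.
\item \emph{Orientation 2:} $A$-edges go to $D$ and $B$-edges go to $C$. This is legal except for $A_i$ edges, which cannot go to $D$.
\end{itemize}
A component needs cutting only if it contains both kinds of edges. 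For each $a\in A_i$, walk along the component from $a$ in both directions until the first $A_{\rm low}$ edge is reached, and delete these at most two edges. After the deletions, each component splits into segments whose boundaries are deleted $A$-edges. A segment containing an $A_i$ edge has no $A_{\rm low}$ edge, so we give it orientation 1. Every other segment has no $A_i$ edge, so we give it orientation 2. Two edges in different segments share no vertex, because along a path only consecutive edges share vertices and the edge between them was deleted. Hence $C$ and $D$ are matchings, with $C\subseteq G_{[i-h,i]}$ and $D\subseteq G_{\le i-1}$. The deleted edges total at most $2|A_i|\cdot 2^{i-h}\le 2^{1-h}w(A_i)\le 2^{1-h}\mu_w(G_i)$.

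The same segment argument handles part (i) with the roles reversed. Orientation 1 (A-edges to $C$, B-edges to $D$) is illegal only for $B_i$ edges in $D$, and orientation 2 (A-edges to $D$, B-edges to $C$) is illegal only for $A_{\rm low}$ edges in $D$. So we cut at $A_{\rm low}$ edges next to each $B_i$ edge, at most two per $B_i$ edge. This replaces the cruder set $N$ and avoids losing edges of the middle classes.

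The main obstacle is exactly this cutting step. One has to check that whenever a component contains both bad types, cutting only at low-class $A$-edges always suffices. This holds because the nearest low $A$-edge in each direction lies on the path before any edge of the other bad type can appear. If no low edge occurs in some direction, no cut is needed there, since the whole stretch up to the end of the path is compatible with the orientation of the $A_i$ (respectively $B_i$) edge. One also has to check that cycles are handled the same way. A cycle with no $A_{\rm low}$ edge takes orientation 1 throughout. Otherwise it contains at least one deleted edge, which breaks it into paths.
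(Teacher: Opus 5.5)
Your final argument matches the paper's proof and is correct: you put the common edges into both outputs, cut each alternating component of $A\oplus B$ at the first edge of weight below $2^{i-h}$ in each direction from every class-$i$ edge, and orient each resulting segment according to whether it contains a class-$i$ edge (and you rightly drop the preliminary $N$-based construction for (i)). One small slip: a cycle with an $A_{\mathrm{low}}$ edge but no class-$i$ edge has no deleted edge, contrary to your last paragraph. It is then a single segment that takes the other orientation, so nothing breaks.
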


\begin{proof}
We first prove (i). Put every common edge of $A$ and
$B$ into both output matchings.  The remaining edges form vertex-disjoint
alternating paths and even cycles.

On each alternating component, start at every $B$-edge in $E_i$ and walk in
both directions.  In each direction, if it exists, delete the first edge of
weight below $2^{i-h}$.  After all cuts, no remaining component contains both
an edge of $E_i$ and an edge below $2^{i-h}$.  Every $B$-edge in $E_i$
causes at most two
deletions, and every edge it deletes has weight less than $2^{-h}$ times its
own weight.  The total deleted weight is therefore at most
$2^{1-h}w(B\cap E_i)\leq 2^{1-h}\mu_w(G_i)$.

Assign the remaining components as follows.  On a component containing an edge
in $E_i$, put the $B$-edges in $C$ and the $A$-edges in $D$.  On a component
containing an edge below $2^{i-h}$, put the $A$-edges in $C$ and the
$B$-edges in $D$.  Assign either way on every remaining component.  The
alternation makes both outputs matchings.  Moreover,
$C\subseteq A\cup B\subseteq G_{\leq i}$ and $D$ has neither
an edge in $E_i$ nor an edge below $2^{i-h}$, so $D\subseteq
G_{[i-h,i-1]}$.  This proves \eqref{eq:exchange-forward}.

For (ii), apply the identical cuts corresponding to
$A$-edges in $E_i$, charging the weight loss to
$A\cap E_i$.  On a component containing edges in $E_i$, send the $A$-edges to
$C$ and the $B$-edges to $D$; on every other component, do the reverse.
Then $C\subseteq G_{[i-h,i]}$ and $D\subseteq G_{\le i-1}$, which proves
\eqref{eq:exchange-backward}.
\end{proof}

\subsection{Value Composition}

To see how local exchange is useful to value composition,
we prove \Cref{cor:marginal-sandwich}, which shows that the local marginal contribution
$\Delta_i^{(h)}=\mu_w(G_{[i-h,i]})-\mu_w(G_{[i-h,i-1]})$ approximates the
global marginal contribution $\Delta_i^{(i)}=\mu_w(G_{\le i})-\mu_w(G_{\le i-1})$.

\restate{cor:marginal-sandwich}
\begin{proof}[Proof of \Cref{cor:marginal-sandwich}]
For $h=0$, we have $\Delta_i^{(h)}=\mu_w(G_i)$ and
$0\leq \Delta_i^{(i)}\leq\mu_w(G_i)$, so the claimed bound holds. Thus assume $h\geq1$.
Applying \Cref{lem:marginal-exchange}(i) to MWMs $A$ of $G_{\le i-1}$ and
$B$ of $G_{[i-h,i]}$, we have $C\subseteq G_{\le i}$,
$D\subseteq G_{[i-h,i-1]}$ and $\mu_w(G_{\le i})+\mu_w(G_{[i-h,i-1]})
\ge \mu_w(G_{\le i-1})+\mu_w(G_{[i-h,i]}) -2^{1-h}\mu_w(G_i)$.
Similarly, applying \Cref{lem:marginal-exchange}(ii) to MWMs $A$ of $G_{\le i}$ and
$B$ of $G_{[i-h,i-1]}$, we have
$\mu_w(G_{[i-h,i]})+\mu_w(G_{\le i-1})
\ge \mu_w(G_{\le i})+\mu_w(G_{[i-h,i-1]}) -2^{1-h}\mu_w(G_i)$.
\end{proof}

The sum of local marginal contributions approximates the MWM weight of $G$. The
error term is an accumulation of local MWM weights which could be bounded
by the following lemma.

\begin{lemma}\label{lem:interval-sum}
For every integer $k\ge0$,
\begin{equation}
 \sum_{i\ge0}\mu_w(G_{[i-k,i]})\le5(k+1)\mu_w(G).
\end{equation}
\end{lemma}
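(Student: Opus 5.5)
The plan is to reduce the window sum to a sum of single-class optima, and then to bound that sum by a greedy charging argument.

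\textbf{Step 1 (subadditivity over classes).} Any matching in $G_{[i-k,i]}$ splits into its parts in $G_{i-k},\dots,G_i$, so $\mu_w(G_{[i-k,i]})\le\sum_{j=i-k}^{i}\mu_w(G_j)$. Here $G_j=\emptyset$ for $j<0$. Summing over $i\ge 0$, each class $j$ appears in at most $k+1$ windows. Hence
\[
\sum_{i\ge0}\mu_w(G_{[i-k,i]})\le (k+1)\sum_{j\ge0}\mu_w(G_j).
\]
It therefore suffices to show $\sum_j\mu_w(G_j)\le 5\mu_w(G)$.

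\textbf{Step 2 (greedy over class optima).} For each $j$, let $M_j$ be an MWM of $G_j$. Build a matching $P$ by processing the classes from highest to lowest. Within class $j$, add every edge of $M_j$ that shares no endpoint with an edge already in $P$. Since $M_j$ is itself a matching, the only conflicts come from edges of strictly higher classes already in $P$. Thus $P$ is a matching of $G$, and $w(P)\le\mu_w(G)$.

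\textbf{Step 3 (charging).} Let $f\in M_{j'}$ be an edge not added to $P$. Charge $f$ to an edge $e\in P$ of some class $j>j'$ that shares an endpoint with $f$. Fix $e\in P$ of class $j$. For each lower class $j'<j$, at most two edges of $M_{j'}$ are charged to $e$, one per endpoint, because $M_{j'}$ is a matching. Each such edge has weight $<2^{j'+1}$. So the total weight charged to $e$ is less than
\[
\sum_{j'<j}2\cdot 2^{j'+1}\le 4\cdot 2^{j}\le 4w(e).
\]
Every edge of $\bigcup_j M_j$ is either in $P$ or charged to an edge of $P$. Therefore
\[
\sum_j\mu_w(G_j)=\sum_j w(M_j)\le w(P)+4w(P)=5w(P)\le 5\mu_w(G).
\]
Combining this with Step 1 gives the lemma.

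The main point to get right is that in Step 3 the charges decay geometrically, so the constant stays at $5$ rather than growing with the number of classes. This requires that a rejected edge is only ever blocked by edges of strictly higher classes. That is why the greedy runs from high to low and uses the per-class matchings $M_j$ rather than arbitrary edges.
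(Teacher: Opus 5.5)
Your proof is correct, and it reaches the bound by a somewhat different decomposition than the paper.

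The paper never splits a window into its classes. It takes an MWM $A_i$ of each whole window $G_{[i-k,i]}$ and partitions the indices $i$ by residue modulo $k+1$, so that the windows within one residue class occupy disjoint, separated weight ranges. It then runs the high-to-low greedy charging on those $A_i$ directly. A retained edge from $A_i$ receives at most two charged edges from the $t$-th lower window of its residue class, each lighter by a factor $2^{-(t-1)(k+1)}$. This gives at most $5\mu_w(G)$ per residue class and $5(k+1)\mu_w(G)$ in total.

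You instead get the factor $k+1$ up front. You use subadditivity, $\mu_w(G_{[i-k,i]})\le\sum_{j=i-k}^{i}\mu_w(G_j)$, together with the fact that each class lies in at most $k+1$ windows. After that you only need the greedy argument in the single-class case, which is exactly the paper's argument specialized to $k=0$.

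Your route is more modular and needs only one charging computation. The paper's route works with the window optima themselves, and its decay rate is $2^{-(k+1)}$ per step rather than $2^{-1}$. It therefore supports a per-residue constant $1+2/(1-2^{-(k+1)})$, which tends to $3$ as $k$ grows. The paper does not use this sharpening and states the same $5(k+1)$ bound that you prove.
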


\begin{proof}
For every $i$, choose an MWM $A_i$ of $G_{[i-k,i]}$ and partition the
indices by their residue modulo $k+1$.  Fix one residue and process its
nonempty $A_i$ in decreasing order of $i$.  Greedily add every edge that
does not conflict with an edge already retained, and charge a discarded edge
to one retained conflicting edge from a higher window.

Suppose a retained edge $e$ came from $A_i$.  The $t$-th lower window in the
same residue class ends at $i-t(k+1)$.  It contributes at most two edges that
conflict with $e$, and each such edge $f$ satisfies
\[
 \frac{w(f)}{w(e)}
 <\frac{2^{i-t(k+1)+1}}{2^{i-k}}
 =2^{-(t-1)(k+1)}.
\]
Consequently, the weight of $e$ together with everything charged to it is at
most
\[
 w(e)\left(1+2\sum_{t\ge1}2^{-(t-1)(k+1)}\right)\le5w(e).
\]
The retained edges form a matching in $G$.  Thus each residue class
contributes at most $5\mu_w(G)$, and there are $k+1$ residues.
\end{proof}

Now we give the value composition lemma.

\begin{lemma}[Value Composition]\label{thm:scalar-composition}
Let $h\ge1$ and $0\le\rho\le1$.  Suppose, for every $0\le i\le L\defeq \lceil\log R\rceil$, we
have $(1\pm\rho)$-approximate values of $\mu_w(G_{[i-h,i]})$ and
$\mu_w(G_{[i-h,i-1]})$, denoted by
$\widehat\mu_w(G_{[i-h,i]})$ and
$\widehat\mu_w(G_{[i-h,i-1]})$, respectively,
and define
\begin{equation}
 \widehat V_h\defeq\sum_{i=0}^{L}
 \left(\widehat\mu_w(G_{[i-h,i]})
       -\widehat\mu_w(G_{[i-h,i-1]})\right).
\end{equation}
Then
\begin{equation}
 \left|\widehat V_h-\mu_w(G)\right|
 \le\left(10\cdot2^{-h}+5(2h+1)\rho\right)\mu_w(G).
\end{equation}
In particular, for $0<\eps\le1/2$, setting $h=\left\lceil\log(20/\eps)\right\rceil$
and $\rho=\frac{\eps}{10(2h+1)}$
makes $\widehat V_h$ a $(1\pm\eps)$-approximation of $\mu_w(G)$, and any
nonempty $G_{[i-h,i]}$ or $G_{[i-h,i-1]}$ has aspect ratio at most
$80/\eps$.
\end{lemma}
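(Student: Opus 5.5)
The plan is to split $\widehat V_h-\mu_w(G)$ into a structural error and an estimation error and to bound each one separately. Write $V_h\defeq\sum_{i=0}^{L}\Delta_i^{(h)}$ for the exact sum of local marginals. All edge classes have indices between $0$ and $L$, so the global marginals telescope:
\[
\sum_{i=0}^{L}\Delta_i^{(i)}=\mu_w(G_{\le L})-\mu_w(G_{\le -1})=\mu_w(G),
\]
using $G_{\le -1}=\emptyset$. The triangle inequality then gives
\[
|\widehat V_h-\mu_w(G)|\le \sum_{i=0}^{L}\bigl|\Delta_i^{(h)}-\Delta_i^{(i)}\bigr|+\sum_{i=0}^{L}\Bigl(\bigl|\widehat\mu_w(G_{[i-h,i]})-\mu_w(G_{[i-h,i]})\bigr|+\bigl|\widehat\mu_w(G_{[i-h,i-1]})-\mu_w(G_{[i-h,i-1]})\bigr|\Bigr).
\]

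For the structural term, \Cref{cor:marginal-sandwich} bounds each summand by $2^{1-h}\mu_w(G_i)$. It remains to bound $\sum_i\mu_w(G_i)$, and \Cref{lem:interval-sum} with $k=0$ gives $\sum_i\mu_w(G_i)\le 5\mu_w(G)$. The structural term is therefore at most $10\cdot 2^{-h}\mu_w(G)$.

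For the estimation term, the $(1\pm\rho)$ guarantee bounds the errors by $\rho\mu_w(G_{[i-h,i]})$ and $\rho\mu_w(G_{[i-h,i-1]})$. Apply \Cref{lem:interval-sum} with $k=h$ to get $\sum_i\mu_w(G_{[i-h,i]})\le 5(h+1)\mu_w(G)$. For the second family, reindex with $j=i-1$: the window $G_{[i-h,i-1]}$ becomes $G_{[j-(h-1),j]}$, and \Cref{lem:interval-sum} with $k=h-1$ gives at most $5h\,\mu_w(G)$. This is valid since $h\ge1$, and the empty windows contribute $0$. The two families together contribute $5(2h+1)\rho\,\mu_w(G)$, which is the claimed bound. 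The main care point is this reindexing and the fact that indices $i-h<0$ only make the windows smaller; since $G_{[a,b]}$ with $a<0$ is simply $G_{[0,b]}$, the lemma's sum already covers these cases.

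For the parameter choice, $h=\lceil\log(20/\eps)\rceil$ gives $2^{-h}\le \eps/20$, so the structural term is at most $\eps/2$. The choice $\rho=\eps/(10(2h+1))$ makes the estimation term exactly $\eps/2$, and $\rho\le1$ holds because $\eps\le 1/2$. The total error is at most $\eps\,\mu_w(G)$.

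For the aspect ratio, a nonempty window $G_{[i-h,i]}$ has weights in $[2^{i-h},2^{i+1})$, so its ratio is below $2^{h+1}$. Since $h<\log(20/\eps)+1$, we get $2^{h+1}<4\cdot 20/\eps=80/\eps$. The window $G_{[i-h,i-1]}$ is narrower, so the same bound applies to it.
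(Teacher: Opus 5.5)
Your proposal is correct and follows the paper's proof: you make the same split into the structural error $|V_h-\mu_w(G)|$, bounded by \Cref{cor:marginal-sandwich}, telescoping, and \Cref{lem:interval-sum} with $k=0$, and the estimation error $|\widehat V_h-V_h|$, bounded by \Cref{lem:interval-sum} with $k=h$ and $k=h-1$. You also explicitly check the reindexing, the parameter choices, and the $80/\eps$ aspect-ratio bound, which the paper's proof leaves implicit.
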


\begin{proof}
Let $V_h$ denote the exact version of $\widehat V_h$, i.e., $
 \sum_{i=0}^{L}\left(\mu_w(G_{[i-h,i]})-\mu_w(G_{[i-h,i-1]})\right)$.
Summing the bound in \Cref{cor:marginal-sandwich} over $i$, using telescoping for
the prefix differences, and applying \Cref{lem:interval-sum} with $k=0$
gives
\[
 |V_h-\mu_w(G)|
 \le 2^{1-h}\sum_i\mu_w(G_i)
 \le 10\cdot2^{-h}\mu_w(G).
\]
Again, applying \Cref{lem:interval-sum} with $k=h$ and $k=h-1$ gives
\[
 |\widehat V_h-V_h|
 \le\rho\sum_i\left(
 \mu_w(G_{[i-h,i]})+\mu_w(G_{[i-h,i-1]})\right)
 \le5(2h+1)\rho\mu_w(G).
\]
Combining the two bounds completes the proof.
\end{proof}

\subsection{Matching Composition}
For matching composition, we first prove \Cref{lem:prefix-extension}
which shows how to analyze the union of
local $(1-\eps)$-MWMs by adding them one by one.

\restate{lem:prefix-extension}
\begin{proof}[Proof of \Cref{lem:prefix-extension}]
Applying \Cref{lem:marginal-exchange}(i) to $P_{i-1}$ and $M_i$ gives
$C\subseteq P_{i-1}\cup M_i$, 
$D\subseteq G_{[i-h,i-1]}$. Thus
\[
\begin{aligned}
 \mu_w(P_{i-1}\cup M_i)-w(P_{i-1})
 &\ge w(M_i)-\mu_w(G_{[i-h,i-1]})
       -2^{1-h}\mu_w(G_i)\\
 &\ge \mu_w(G_{[i-h,i]})-\mu_w(G_{[i-h,i-1]})
       -\rho\mu_w(G_{[i-h,i]})-2^{1-h}\mu_w(G_i).
\end{aligned}
\]
\Cref{cor:marginal-sandwich} gives
\[
 \mu_w(G_{[i-h,i]})-\mu_w(G_{[i-h,i-1]})
 \ge \mu_w(G_{\le i})-\mu_w(G_{\le i-1})-2^{1-h}\mu_w(G_i).
\]
Combining the two inequalities proves the claim.
\end{proof}

Now if we take each $P_i$ as the MWM on $P_{i-1}\cup M_i$,
summing \Cref{lem:prefix-extension} over $i$ gives the
desired property of the union of the local $(1-\eps)$-MWMs.

\begin{lemma}[Matching Composition]\label{thm:composition}
Let $h\ge1$ and $0\le\rho\le1$.  Suppose, for every
$0\le i\le L\defeq\lceil\log R\rceil$, we have a $(1-\rho)$-MWM
$M_i$ of $G_{[i-h,i]}$.  Then
\[
 \mu_w\!\left(\bigcup_{i=0}^L M_i\right)
 \ge\left(1-20\cdot2^{-h}-5(h+1)\rho\right)\mu_w(G).
\]
In particular, for $0<\eps\le1/2$, setting
$h=\lceil\log(40/\eps)\rceil$ and
$\rho=\eps/[10(h+1)]$ gives
\[
 \mu_w\!\left(\bigcup_{i=0}^LM_i\right)
 \ge(1-\eps)\mu_w(G),
\]
and every nonempty $G_{[i-h,i]}$ has aspect ratio less than $160/\eps$.
\end{lemma}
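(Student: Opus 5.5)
We build an explicit chain of matchings inside $\bigcup_i M_i$ and telescope \Cref{lem:prefix-extension}. Set $P_{-1}=\emptyset$, and for $i=0,1,\dots,L$ let $P_i$ be an MWM of $P_{i-1}\cup M_i$. By induction, $P_i\subseteq\bigcup_{j\le i}M_j\subseteq G_{\le i}$. So $P_{i-1}\subseteq G_{\le i-1}$ is a legitimate prefix matching for \Cref{lem:prefix-extension}. That lemma then gives, for each $i$,
\[
 w(P_i)-w(P_{i-1})=\mu_w(P_{i-1}\cup M_i)-w(P_{i-1})\ge \Delta_i^{(i)}-\rho\,\mu_w(G_{[i-h,i]})-2^{2-h}\mu_w(G_i).
\]

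Summing over $0\le i\le L$, the left side telescopes to $w(P_L)$, and $\sum_i\Delta_i^{(i)}=\mu_w(G_{\le L})-\mu_w(G_{\le -1})=\mu_w(G)$. This uses that all edges lie in classes $0,\dots,L$ after rescaling to $[1,R]$; an edge of weight exactly $R=2^L$ would sit in class $L$, which is covered. The error terms are bounded by \Cref{lem:interval-sum}:
\begin{itemize}
\item with $k=0$, $\sum_i\mu_w(G_i)\le 5\mu_w(G)$, so the last term costs at most $20\cdot2^{-h}\mu_w(G)$;
\item with $k=h$, $\sum_i\mu_w(G_{[i-h,i]})\le 5(h+1)\mu_w(G)$, so the $\rho$ term costs at most $5(h+1)\rho\,\mu_w(G)$.
\end{itemize}
Since $P_L$ is a matching contained in $\bigcup_i M_i$, we get $\mu_w(\bigcup_i M_i)\ge w(P_L)$, which gives the main inequality.

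For the parameter choice, take $h=\lceil\log(40/\eps)\rceil$, so that $20\cdot 2^{-h}\le \eps/2$. Take $\rho=\eps/[10(h+1)]$, so that $5(h+1)\rho=\eps/2$. We also need $\rho\le1$, which holds because $\eps\le1/2$.

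For the aspect ratio, every edge of $G_{[i-h,i]}$ has weight in $[2^{i-h},2^{i+1})$. The ratio is therefore less than $2^{h+1}<2\cdot 2\cdot 40/\eps=160/\eps$, using $2^h<2\cdot40/\eps$.

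The only point requiring care is checking that the hypotheses of \Cref{lem:prefix-extension} hold at each step, namely $P_{i-1}\subseteq G_{\le i-1}$ and $h\ge1$. The remaining index bookkeeping is routine: empty windows for $i-h<0$ are allowed, and \Cref{lem:interval-sum} sums over all $i\ge0$, which dominates the sum over $i\le L$.
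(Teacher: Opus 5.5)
Your proposal is correct and takes the same approach as the paper's proof. Both define $P_i$ as an MWM of $P_{i-1}\cup M_i$ with $P_{-1}=\emptyset$, telescope \Cref{lem:prefix-extension} over $0\le i\le L$, and bound the two error sums via \Cref{lem:interval-sum} with $k=0$ and $k=h$; your checks that $20\cdot 2^{-h}\le\eps/2$, $5(h+1)\rho=\eps/2$, and aspect ratio $<2^{h+1}<160/\eps$ just spell out arithmetic the paper leaves implicit.
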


\begin{proof}
Define $P_i$ as the MWM on $P_{i-1}\cup M_i$ with $P_{-1}=\emptyset$,
thus $P_i\subseteq \bigcup_{j=0}^i M_j$.
By \Cref{lem:prefix-extension},
\begin{equation}\label{eq:prefix-two-degree}
w(P_i)-w(P_{i-1})\ge\mu_w(G_{\le i})-\mu_w(G_{\le i-1})
-\rho\mu_w(G_{[i-h,i]})-2^{2-h}\mu_w(G_i).
\end{equation}
Summing over $i$ and applying \Cref{lem:interval-sum} yields
\[
 w(P_L)
 \ge\left(1-5(h+1)\rho-20\cdot2^{-h}\right)\mu_w(G),
\]
which finishes the proof.
\end{proof}

\subsection{Tightness of the Local Aspect Ratio}\label{sec:lower}

The value result has the best possible dependence on $\eps$ for reductions
that see only MWM weights of subgraphs $G[\ell_i,r_i]$ that contain all
edges with weights in $[\ell_i,r_i]\subseteq\mathbb R_{>0}$ and $r_i/\ell_i\le
O(1/\eps)$.

\begin{lemma}
\label{prop:value-width-lower-bound}
For any $0<\eps<1/2$ and
$1\le R<(\eps^{-1}-1)/4$, there are two graphs $G$ and $G'$ such that, for
every family $\mathcal I=\{[\ell_i,r_i]\}_i$ satisfying
$r_i/\ell_i\le R$ for every $i$,
$\mu_w(G[\ell_i,r_i])=\mu_w(G'[\ell_i,r_i])$ for every $i$,
but no value is a $(1\pm\eps)$-approximation of both $\mu_w(G)$ and
$\mu_w(G')$.
\end{lemma}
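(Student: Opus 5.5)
The plan is to build two tiny graphs whose heaviest and lightest edges can never appear together in a window of ratio at most $R$. Within each window the two graphs will then look the same, while globally one forms a path and the other a disjoint union.

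\textbf{Construction.} Fix $a\defeq 2R$, so that $R<a$ and, since $R\ge1$, $a\ge2$.

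Let $G$ be a path $u_1u_2u_3u_4$ with $w(u_1u_2)=w(u_3u_4)=1$ and $w(u_2u_3)=a$.

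Let $G'$ be the disjoint union of three edges with weights $1$, $a$, $1$.

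Both graphs have the same multiset of edge weights $\{1,a,1\}$. A matching in the path can take either the middle edge or both end edges, so $\mu_w(G)=\max(a,2)=a$. In $G'$ all three edges can be taken, so $\mu_w(G')=a+2$.

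\textbf{Step 1: every window gives the same value.} Take an interval $[\ell_i,r_i]$ with $r_i/\ell_i\le R$. If it contained both weights $1$ and $a$, we would have $\ell_i\le 1$ and $r_i\ge a$, hence $r_i/\ell_i\ge a>R$, which is impossible. So the window contains only weight-$1$ edges, only the weight-$a$ edge, or nothing. The three cases compare as follows:
\begin{itemize}[leftmargin=2.1em]
\item Only weight-$1$ edges: in $G$ the two end edges are vertex-disjoint, so both graphs give value $2$.
\item Only the weight-$a$ edge: both graphs give value $a$.
\item Nothing: both graphs give value $0$.
\end{itemize}
Hence $\mu_w(G[\ell_i,r_i])=\mu_w(G'[\ell_i,r_i])$ for every admissible interval, and in particular for every admissible family $\mathcal I$.

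\textbf{Step 2: no value approximates both optima.} A common $(1\pm\eps)$-approximation $v$ would have to satisfy $(1-\eps)(a+2)\le v\le(1+\eps)a$. Such a $v$ exists iff $(1-\eps)(a+2)\le(1+\eps)a$, which rearranges to $a\ge(1-\eps)/\eps=\eps^{-1}-1$. Our choice gives $a=2R<(\eps^{-1}-1)/2<\eps^{-1}-1$, so no such $v$ exists. The stated hypothesis $R<(\eps^{-1}-1)/4$ is more than enough.

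\textbf{Where care is needed.} The only step that needs attention is the choice of $a$. It must satisfy two constraints at once:
\begin{itemize}[leftmargin=2.1em]
\item $a$ must strictly exceed every admissible window ratio, so that no window sees both weights; $a>R$ suffices.
\item $a$ must be small enough that the additive gap of $2$ is more than a $\Theta(\eps)$ fraction of $\mu_w(G)$; $a<\eps^{-1}-1$ suffices.
\end{itemize}
Both hold for $a=2R$, with room to spare under the lemma's hypothesis.
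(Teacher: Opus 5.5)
Your proof is correct and follows the paper's idea: the two weight scales are separated by the factor $2R>R$, so no admissible window sees both, and the global optima then differ through a conflict between light and heavy edges that no window can detect. The only difference is the gadget. The paper uses two edges of weights $1$ and $2R$ that are disjoint in one graph and share an endpoint in the other, giving optima $2R+1$ versus $2R$, which works exactly when $R<(\eps^{-1}-1)/4$. Your path with weights $1,2R,1$ versus three disjoint edges has additive gap $2$, so it already works under the weaker hypothesis $R<(\eps^{-1}-1)/2$.
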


\begin{proof}
Use two edges with weight $1$ and $2R$. In the first graph, the two edges are disjoint;
in the second they share an endpoint.  Any interval $[\ell_i,r_i]$
containing both weights has $r_i/\ell_i\ge2R>R$, so all admissible interval
values agree. The global optima are $2R+1$ and
$2R$.  If one answer were a $(1\pm\eps)$-estimate for both, then
$(1-\eps)(2R+1)\le(1+\eps)2R$, which requires
$R\ge(\eps^{-1}-1)/4$.
\end{proof}

A similar statement applies to
matching composition if local optima may be chosen adversarially.

\begin{lemma}
\label{lem:width-lower-bound}
For any $0<\eps<1/2$ and $1\le R<\eps^{-1}-1$, there is a graph $G$ such
that, for every family $\mathcal I=\{[\ell_i,r_i]\}_i$ satisfying
$r_i/\ell_i\le R$ for every $i$, one can choose an MWM $M_i$ of
$G[\ell_i,r_i]$
for each $i$ so that
\[
 \mu_w\!\left(\bigcup_i M_i\right)
 <(1-\eps)\mu_w(G).
\]
\end{lemma}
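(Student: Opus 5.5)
The plan is an explicit small construction that reuses the idea of \Cref{prop:value-width-lower-bound}: put two weight scales that are too far apart to share a window, and add a tie at the high scale that an adversary can break badly. Concretely, I fix a weight $W$ with $R<W<\eps^{-1}-1$; this is possible exactly because $R<\eps^{-1}-1$. I take $G$ to be the path $a\!-\!b\!-\!c\!-\!d$ with $w(ab)=1$ and $w(bc)=w(cd)=W$.

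The first step is to note that the MWM of $G$ is $\{ab,cd\}$, so $\mu_w(G)=W+1$. The only other candidate is the single edge $bc$, of weight $W$.

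The second step is to classify the subgraphs $G[\ell_i,r_i]$ allowed by the constraint $r_i/\ell_i\le R$. Since $W/1>R$, no admissible interval contains both weights $1$ and $W$. Also, $bc$ and $cd$ have the same weight, so an interval contains one of them if and only if it contains both. Hence each $G[\ell_i,r_i]$ is one of three graphs: empty, $\{ab\}$, or $\{bc,cd\}$. I choose $M_i$ as follows:
\begin{itemize}
\item if $G[\ell_i,r_i]=\emptyset$, take $M_i=\emptyset$;
\item if $G[\ell_i,r_i]=\{ab\}$, take $M_i=\{ab\}$;
\item if $G[\ell_i,r_i]=\{bc,cd\}$, take $M_i=\{bc\}$, breaking the tie adversarially.
\end{itemize}
This last choice is legitimate because $bc$ and $cd$ share the vertex $c$ and have equal weight, so $\{bc\}$ is an MWM of that subgraph.

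The third step is to bound the union. We have $\bigcup_i M_i\subseteq\{ab,bc\}$, and these two edges share $b$. Therefore $\mu_w(\bigcup_i M_i)\le W$. It remains to check that
\[
W<(1-\eps)(W+1),
\]
which is equivalent to $\eps W<1-\eps$, that is, $W<\eps^{-1}-1$. This holds by the choice of $W$.

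There is no real obstacle here. The only points needing care are the case analysis of which edges an admissible interval can contain, and the observation that equal weights force $bc$ and $cd$ to appear together, which is what lets the adversarial tie-break apply uniformly to every window.
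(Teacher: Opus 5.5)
Your proof is correct and is essentially the paper's argument. Both use the path $a-b-c-d$ with weights on two scales too far apart to share an admissible window, and in the tied window both pick the middle edge $bc$. The paper places the tie at the light scale ($w(ab)=w(bc)=1$, $w(cd)=r$), whereas you place it at the heavy scale. The resulting inequality $W<(1-\eps)(W+1)\iff W<\eps^{-1}-1$ is the same.
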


\begin{proof}
Choose $r$ with $R<r<\eps^{-1}-1$, and let $G$ be the path $a-b-c-d$ with
edge weights $w(ab)=w(bc)=1$ and $w(cd)=r$.  Any interval $[\ell_i,r_i]$
containing both weights has $r_i/\ell_i\ge r>R$ and is therefore not
admissible.  Whenever an interval contains the unit edges, return $\{bc\}$;
whenever it contains the heavy edge, return $\{cd\}$.  The union has optimum
at most $r$, while $\{ab,cd\}$ has weight $1+r$.  Since
$r/(1+r)<1-\eps$, the claim follows.
\end{proof}

\section{Dynamic Weight Reduction for MWM Weight}\label{sec:value-reduction}

The value composition lemma gives a direct dynamic reduction: run the
underlying algorithm on $O(1/\eps)$-aspect-ratio windows and combine the weight
estimates.  Since every window is a subgraph of the input, the reduction applies
within any graph class closed under taking subgraphs, including bipartite and bounded-arboricity graphs.

\begin{theorem}
\label{thm:value-reduction}
Let $\mathcal G$ be a graph family closed under taking subgraphs.
Suppose that, for every $0<\delta\le1/2$, there is a dynamic algorithm
that, on a graph $F\in\mathcal G$ with at most $n$ vertices and $m$ edges, and
aspect ratio at most $R$, maintains a $(1\pm\delta)$-approximation of $\mu_w(F)$
with initialization time $\I(n,m,R,\delta)$ and update time $\U(n,m,R,\delta)$.

Then for every $0<\eps\le1/2$, there is a dynamic algorithm
on a graph
$F\in\mathcal G$ with at most $n$ vertices and $m$ edges, and $\poly(n)$ aspect ratio,
that
maintains a $(1\pm\eps)$-approximation of $\mu_w(F)$ with initialization time\footnote{Throughout this paper, we assume that the initialization time bound
of a dynamic matching algorithm is superadditive in the number of edges, with all other parameters fixed.}
\[\otilde\left(
\I(n,\otilde(m),O(\eps^{-1}),\thetatilde(\eps))+m\right)\]
and update time
\[\otilde\left(\U\!\left(n,m,O(\eps^{-1}),\thetatilde(\eps)\right)\right).\]
The reduction preserves worst-case update time,
determinism, and incremental or decremental updates.
\end{theorem}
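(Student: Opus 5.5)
We will instantiate \Cref{thm:scalar-composition} directly. Fix $h=\lceil\log(20/\eps)\rceil$ and $\rho=\eps/(10(2h+1))=\thetatilde(\eps)$, and let $L=\lceil\log R\rceil=O(\log n)$ after rescaling weights to $[1,R]$. For each $0\le i\le L$ we maintain two copies of the assumed algorithm with $\delta=\rho$: one on $G_{[i-h,i]}$ and one on $G_{[i-h,i-1]}$. Each window lies in $\mathcal G$ because $\mathcal G$ is closed under subgraphs, and every nonempty window has aspect ratio at most $80/\eps=O(\eps^{-1})$ by the lemma. The output is $\widehat V_h=\sum_i(\widehat\mu_w(G_{[i-h,i]})-\widehat\mu_w(G_{[i-h,i-1]}))$, which by \Cref{thm:scalar-composition} is a $(1\pm\eps)$-approximation of $\mu_w(F)$ at every moment. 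Empty windows contribute $0$ and need no instance.

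Next comes update routing. An edge $e$ with $w(e)\in[2^j,2^{j+1})$ belongs to $G_{[i-h,i]}$ exactly when $j\le i\le j+h$, and to $G_{[i-h,i-1]}$ exactly when $j+1\le i\le j+h$. So each edge appears in at most $2h+1=O(\log\eps^{-1})$ instances. An insertion or deletion of $e$ is therefore forwarded to $O(h)$ instances, each of which handles it with the underlying update. Each instance has at most $n$ vertices and at most $m$ edges. Every affected instance reports a changed estimate, and we maintain $\widehat V_h$ as a running sum, updating it in $O(h)$ time by subtracting old values and adding new ones. The update time is therefore $O(h)\cdot\U(n,m,O(\eps^{-1}),\thetatilde(\eps))+O(h)=\otilde(\U(\cdot))$. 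This bound is worst-case whenever $\U$ is, the construction is deterministic whenever the underlying algorithm is, and insertions (resp.\ deletions) are forwarded only as insertions (resp.\ deletions), so partially dynamic settings are preserved.

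For initialization we bucket the edges by class in $O(m)$ time and start each nonempty instance on its edge set. The instances have edge counts $m_{i}$ with $\sum_i m_i\le(2h+1)m$. By the superadditivity assumption, $\sum_i\I(n,m_i,\cdot,\cdot)\le\I(n,(2h+1)m,\cdot,\cdot)=\I(n,\otilde(m),O(\eps^{-1}),\thetatilde(\eps))$. Adding the $O(m)$ bucketing cost gives the claimed bound.

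The main subtlety is the aspect-ratio bookkeeping rather than the analysis. Weights inside a window lie in $[2^{i-h},2^{i+1})$, so they can be rescaled to $[1,2^{h+1})$ to meet the interface. We must also handle the fact that $L$ and the class indices depend on the initial rescaling. Since $R\le\poly(n)$, we fix $L$ up front, and no restructuring is needed across updates. A second point to check is that $\U$ is stated for graphs of at most $m$ edges: each window is a subgraph of the current $F$, so its edge count never exceeds that of $F$, and this bound holds throughout.
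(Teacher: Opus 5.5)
Your proposal is correct and follows the paper's proof essentially step for step: the same $h=\lceil\log(20/\eps)\rceil$ and $\rho=\eps/(10(2h+1))$, one instance per window $G_{[i-h,i]}$ and $G_{[i-h,i-1]}$, at most $2h+1$ affected instances per update, a running-sum update of $\widehat V_h$, and superadditivity for the initialization bound. The one point to tighten is your claim that empty windows need no instance: in fully dynamic or incremental runs, a window that is empty at initialization can later receive edges, so you should initialize an instance on all $2(L+1)$ windows up front, as the paper does, rather than creating instances lazily, which could spoil the worst-case update-time guarantee.
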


\begin{proof}
Set
$h=\lceil\log(20/\eps)\rceil$,
$\rho=\eps/[10(2h+1)]$, and
$R=80/\eps$.
Initialize one copy of the underlying algorithm
on every $G_{[i-h,i]}$ and $G_{[i-h,i-1]}$, and maintain
$\widehat V_h
 =\sum_{i=0}^{L}
   \left(\widehat\mu_w(G_{[i-h,i]})
         -\widehat\mu_w(G_{[i-h,i-1]})\right)$.
An edge in $E_j$ belongs to
$G_{[i-h,i]}$ for $j\le i\le j+h$ and to $G_{[i-h,i-1]}$ for
$j+1\le i\le j+h$.  It therefore affects at most $2h+1$ local instances.
Before and after updating each affected instance, adjust its term in
$\widehat V_h$.  By
\Cref{thm:scalar-composition}, they give a
$(1\pm\eps)$-approximation, and every local window has aspect ratio $O(\eps^{-1})$.

Every local instance is a fixed weight-induced subgraph.  Thus an insertion
in the original graph causes only insertions in $\otilde(1)$ local instances, and a
deletion causes only $\otilde(1)$ deletions.  The reduction introduces neither
randomization nor amortization. The initialization cost is the sum of the initialization
cost of each local window, where any edge contributes to $\otilde(1)$ windows.
\end{proof}

\subsection{Dynamic Weight-to-Cardinality Reduction in Bipartite Graphs}
\label{sec:weight-to-cardinality}

\cite{BernsteinCDLST25,BernsteinDL21} reduces bipartite MWM to MCM
using the graph unfolding structure of \cite{KaoLST01}.
Roughly, unfolding replaces each vertex by
copies and each integer-weight edge by a structured collection of 
unit-weight edges, so that the MCM cardinality in the unfolded graph equals the
MWM weight of the original graph.

\begin{lemma}[\cite{KaoLST01}]\label{lem:unfolding}
For an $n$-node $m$-edge graph $G$ with integer edge weights between $1$ and $R$,
the unfolded graph $\phi(G)$ is an unweighted graph defined as follows. For each 
vertex $u$ in $G$, create $R$ copies of $u$ named $\{u^1,u^2,\dots,u^R\}$ in
$\phi(G)$. For each edge $e$ between $u$ and $v$ in $G$, create edges between $u^
{i}$ and $v^{w(e)-i+1}$ for all $i\in [w(e)]$ in $\phi(G)$. Then $\phi(G)$ has 
at most $Rn$ vertices and $Rm$ edges. Additionally, if $G$ is bipartite, $\mu(\phi(G))=\mu_w(G)$.
\end{lemma}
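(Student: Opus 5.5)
The plan is to verify the size bounds directly and then prove $\mu(\phi(G))=\mu_w(G)$ for bipartite $G$ by two inequalities. \emph{Size:} each vertex gets $R$ copies, so there are at most $Rn$ vertices. Each edge $e$ contributes $w(e)\le R$ unit edges, so there are at most $Rm$ edges.

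\emph{Lower bound $\mu(\phi(G))\ge\mu_w(G)$.} Take an MWM $M$ of $G$. For each $e=uv\in M$, include all $w(e)$ edges $u^iv^{w(e)-i+1}$ for $i\in[w(e)]$. Within one edge these are disjoint, since $i\mapsto w(e)-i+1$ is a bijection. Distinct edges of $M$ use disjoint original vertices, hence disjoint copies. This gives a matching of size $w(M)$. Bipartiteness is not needed here.

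\emph{Upper bound $\mu(\phi(G))\le\mu_w(G)$.} This is the main step. I would use LP duality / K\H{o}nig--Egerv\'ary, which is where bipartiteness enters. Since $G$ is bipartite with sides $X,Y$, $\phi(G)$ is bipartite with sides $\{x^i\}$ and $\{y^j\}$. So $\mu(\phi(G))$ equals the minimum vertex cover of $\phi(G)$, and $\mu_w(G)$ equals the minimum weighted vertex cover of $G$, i.e.\ $\min\sum_v y_v$ over integral $y\ge0$ with $y_u+y_v\ge w(uv)$ (Egerv\'ary's theorem, integral weights). Given an optimal integral cover $y$ of $G$, with $y_v\le R$ without loss of generality, take the cover $K=\{v^i: i\le y_v\}$ in $\phi(G)$, which has $|K|=\sum_v y_v$.

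To check that $K$ covers every unfolded edge $u^iv^{w-i+1}$, suppose neither endpoint lies in $K$. Then $i>y_u$ and $w-i+1>y_v$, so $w+1>y_u+y_v+1$ in integers. That gives $y_u+y_v\le w-1<w$, a contradiction. Hence
\[
\mu(\phi(G))\le |K|=\sum_v y_v=\mu_w(G).
\]

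The only subtle point I expect is invoking integrality of the weighted vertex cover dual, i.e.\ total unimodularity of the bipartite incidence matrix, which is standard.
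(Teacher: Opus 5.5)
Your proof is correct and complete. The paper gives no proof of this lemma; it imports it as a black box from \cite{KaoLST01}, so there is no in-paper argument to compare against, and what you supply is a self-contained duality proof.

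The lift of an MWM gives $\mu(\phi(G))\ge\mu_w(G)$ for any graph. For bipartite $G$, K\H{o}nig's and Egerv\'ary's theorems reduce the reverse inequality to mapping an integral weighted cover $y$ to the prefix cover $\{v^i : i\le y_v\}$. Clipping $y$ to $y_v\le R$ preserves feasibility because $w\le R$. Your integer step, $i\ge y_u+1$ and $w-i+1\ge y_v+1$, then correctly gives $y_u+y_v\le w-1$, the needed contradiction.

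You also use bipartiteness in exactly the right place. For a triangle with all weights $2$, $\phi(G)$ is a $6$-cycle, so $\mu(\phi(G))=3>2=\mu_w(G)$, and the upper bound genuinely fails without bipartiteness.
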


We combine the graph unfolding structure with \Cref{thm:value-reduction}
to get the following weight-to-cardinality reduction in bipartite graphs.

\begin{theorem}
\label{thm:value-w2u}
Suppose that, for every $0<\delta\le1/2$, there is a dynamic
algorithm that, on a unit-weight bipartite graph $F$ with at most $n$
vertices and $m$ edges, maintains a $(1\pm\delta)$-approximation of $\mu(F)$
in initialization time $\I(n,m,\delta)$ and update time $\U(n,m,\delta)$.

Then for every $0<\eps\le1/2$, there is a dynamic algorithm on a bipartite graph
$F$ with at most
$n$ vertices and $m$ edges, and $\poly(n)$ aspect ratio, that maintains a
$(1\pm\eps)$-approximation of $\mu_w(F)$ in initialization time
\[\otilde\left(\eps^{-2}\I(\otilde(n\eps^{-2}),\otilde(m\eps^{-2}),\thetatilde(\eps))\right)\]
and update time
\[\otilde\left(\eps^{-2}\U(\otilde(n\eps^{-2}),\otilde(m\eps^{-2}),\thetatilde(\eps))\right).\]
The reduction preserves worst-case update time,
determinism, and incremental or decremental updates.
\end{theorem}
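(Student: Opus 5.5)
The plan is to compose \Cref{thm:value-reduction} with the unfolding of \Cref{lem:unfolding}. Bipartite graphs are closed under taking subgraphs, so by \Cref{thm:value-reduction} it suffices to build a dynamic algorithm that maintains a $(1\pm\delta)$-approximation of $\mu_w(F)$ for a bipartite graph $F$ of aspect ratio $R'=O(\eps^{-1})$, where $\delta=\thetatilde(\eps)$. Its update time should be $O(\delta^{-1}R')\cdot\U(\cdot)$ on an unfolded graph with $\otilde(n\eps^{-2})$ vertices and $\otilde(m\eps^{-2})$ edges. Since \Cref{thm:value-reduction} multiplies the local update time only by $\otilde(1)$, this yields the stated bounds.

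\textbf{Rounding.} \Cref{lem:unfolding} needs integer weights, and the unfolded size grows with the largest weight. Each window has weights in some range $[\ell,R'\ell)$. Rescale by $\ell$ and round every weight down to the nearest multiple of $\delta$, then multiply by $\delta^{-1}$. The result is a graph $\widetilde F$ whose integer weights lie in $[\delta^{-1}, R'\delta^{-1}]=[\,\cdot\,,O(\eps^{-2})]$ up to polylog factors. Since every weight is at least $1$ after rescaling, rounding changes each edge by at most a $(1-\delta)$ factor. Hence $\delta\ell\,\mu_w(\widetilde F)$ is within a $(1\pm\delta)$ factor of $\mu_w(F)$: every matching's weight shrinks by at most that factor, and the weights never increase. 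The rounding map is fixed per window, so an edge update in $F$ becomes a single edge update in $\widetilde F$.

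\textbf{Unfolding.} Maintain $\phi(\widetilde F)$ with $W=O(\delta^{-1}R')=\otilde(\eps^{-2})$. It has at most $Wn$ vertices and $Wm$ edges. Since $\widetilde F$ is bipartite, $\mu(\phi(\widetilde F))=\mu_w(\widetilde F)$. I need to check that $\phi(\widetilde F)$ is itself bipartite so the MCM algorithm applies: give every copy $u^i$ the side of $u$; an edge $u^iv^j$ only joins copies of opposite-side endpoints. The vertex set of $\phi$ is fixed in advance, with $W$ copies of each vertex. One update in $\widetilde F$ of an edge of weight $w(e)\le W$ becomes $w(e)\le W$ unit-edge updates of the same type, so incremental and decremental updates are preserved. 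Run the given algorithm with accuracy $\delta$ on $\phi(\widetilde F)$ and output its estimate times $\delta\ell$. This gives a $(1\pm O(\delta))$-approximation, and rescaling $\delta$ by a constant fixes the accuracy. The update time is $W\cdot\U(Wn,Wm,\delta)$. Initialization is $\I(Wn,Wm,\delta)$; together with the superadditivity assumption and the $\otilde(1)$ windows per edge, plugging into \Cref{thm:value-reduction} gives the claimed bounds. Determinism and worst-case guarantees are preserved because every step is a deterministic, non-amortized translation of updates.

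The main obstacle is the bookkeeping in rounding. The windows of \Cref{thm:value-reduction} start at powers of two, so I would take $\ell=2^{i-h}$ and weights $<2^{i+1}$, giving $R'=2^{h+1}=O(\eps^{-1})$. I also have to confirm that the $(1\pm\delta)$ approximations compose as required by \Cref{thm:scalar-composition}. That lemma only needs each local value to be a $(1\pm\rho)$-estimate, which the rounding plus MCM error provides with $\rho=O(\delta)$.
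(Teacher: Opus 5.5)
Your proposal is correct and follows essentially the same route as the paper. Using each window's fixed lower endpoint $\ell$, you replace weights by $\lfloor w(e)/(\delta\ell)\rfloor$ (your $\delta$ plays the role of the paper's $\tau=\rho/3$), unfold via \Cref{lem:unfolding} into a bipartite graph with $\otilde(\eps^{-2})$ copies per vertex and edge, rescale the MCM-value estimate by $\delta\ell$, and feed the resulting $(1\pm O(\delta))$-estimates into the value reduction over $O(h)$ windows per edge; your explicit check that $\phi(\widetilde F)$ is bipartite is a detail the paper leaves implicit.
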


\begin{proof}
Set
$h=\lceil\log(20/\eps)\rceil$,
$\rho=\eps/[10(2h+1)]$,
$R=80/\eps$, and
$\tau=\rho/3$.  Let
$K=\lceil Ch/\eps^2\rceil$ for a sufficiently large universal constant
$C$.
Consider one window $F$, and let $\lambda$ be the fixed lower endpoint of
its weight interval, i.e., every edge of $F$
has weight in $[\lambda,R\lambda)$.  Replace each weight by the
positive integer
$\overline w(e)=\left\lfloor\frac{w(e)}{\tau\lambda}\right\rfloor$.
Denote by $\overline F$ the graph after processing.
Since for every matching $M$,
$(1-\tau)w(M)\le\tau\lambda\,\overline w(M)\le w(M)$,
applying it to MWMs of $F$ and $\overline F$ gives
$(1-\tau)\mu_w(F)\le\tau\lambda\mu_w(\overline F)\le\mu_w(F)$.

We run the unweighted algorithm to maintain a $(1\pm\rho/3)$-approximation
$\widehat\mu$ of the MCM cardinality
on $\phi(\overline F)$ defined in \Cref{lem:unfolding}.
Then
$(1-\rho/3)\mu(\phi(\overline F))\le\widehat\mu\le(1+\rho/3)\mu(\phi(\overline F))$. Since $\mu(\phi(\overline F))=\mu_w(\overline F)$, we have
$(1-\rho/3)\mu_w(\overline F)\le\widehat\mu\le(1+\rho/3)\mu_w(\overline F)$.
Thus the value $\tau\lambda\widehat\mu$ satisfies
$(1-\rho/3)(1-\tau)\mu_w(F)\le \tau\lambda\widehat\mu\le (1+\rho/3)\mu_w(F)$.
Since $\tau=\rho/3$, $\tau\lambda\widehat\mu$ is a
$(1\pm\rho)$-approximation of $\mu_w(F)$, satisfying the approximation
requirement of \Cref{thm:value-reduction}.

Because $\tau=\Theta(\eps/h)$, the largest integer weight is
$O(R/\tau)=O(h/\eps^2)=O(K)$.  The unfolded graph has at most
$O(nK)$ vertices and $O(mK)$ edges, and one weighted-edge update produces
$O(K)$ unweighted updates.  The outer value reduction touches $O(h)$
windows.  Multiplying these costs proves the claimed update-time bound.
The initialization cost is the sum of the initialization
cost of each local window, where every weighted edge contributes to
$\otilde(\eps^{-2})$ unweighted edges in total.
\end{proof}

\subsection{Application}\label{sec:value-Applications}

Now we give an application of our reduction.

\begin{lemma}[\cite{BhattacharyaKS23Size}]
\label{lem:bks-size}
For every constant $\eps>0$, there is a randomized fully dynamic
algorithm that maintains a $(1\pm\eps)$-approximation to the MCM cardinality
of a
graph with at most $n$ nodes and $m$ edges
in $m^{1/2-\Omega_\eps(1)}$ worst-case update time.
The guarantee holds with high probability against an adaptive adversary.
\end{lemma}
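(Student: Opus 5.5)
This statement is imported from \cite{BhattacharyaKS23Size}, so I would not reprove it. The plan is to match it precisely against the main theorem there and check that every parameter and guarantee we rely on is stated in that form. Concretely, I will confirm four things. First, the guarantee concerns the \emph{value} $\mu(F)$ of an unweighted graph, with a $(1\pm\eps)$ multiplicative error; a one-sided $(1-\eps)$ or additive guarantee is not what we want. Second, the update time is worst-case, of the form $m^{1/2-\Omega_\eps(1)}$, where $m$ is an upper bound on the current number of edges. Third, the success probability is high probability in $n$. Fourth, the adversary model is adaptive. If their bound is stated in terms of $n$ only, or with a dense-graph assumption $m=\Theta(n^2)$, I will restate it with $m$ as the maximum number of edges. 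I will also check that it applies uniformly to every subgraph we feed it, since \Cref{thm:value-w2u} runs it on the unfolded graphs $\phi(\overline F)$.

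For intuition about why such a result holds, recall the mechanism of their algorithm. It periodically rebuilds, at times spaced $\Theta(\eps\mu)$ updates apart, where $\mu$ is the current estimate; a batch of that many updates changes $\mu$ by at most a $(1\pm O(\eps))$ factor. At each rebuild it estimates $\mu$ with a sublinear-time matching-size estimator that queries adjacency structure, and this costs roughly $n^{2-\Omega_\eps(1)}$ time. The rebuild cost is spread over the $\eps\mu$ updates of the next phase using the standard background-computation (de-amortization) technique, which keeps the bound worst-case. When $\mu$ is small, a separate maintained structure of size about $\mu$ handles the estimate directly. Balancing the two regimes gives $m^{1/2-\Omega_\eps(1)}$.

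Adaptivity is not broken because the algorithm outputs only a number. Fresh randomness is drawn at each rebuild, and the adversary learns only the estimate, which lies in a $(1\pm\eps)$ range; this is the point to confirm in their analysis.

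The only real obstacle is bookkeeping in the interface. The constant-$\eps$ hidden exponent $\Omega_\eps(1)$ must survive the substitution $\delta=\thetatilde(\eps)$ together with the $\otilde(\eps^{-2})$ blow-up in $n$ and $m$ from \Cref{thm:value-w2u}. For fixed $\eps$, all of these are constants, so this causes no difficulty. The high-probability guarantee must also survive a union bound over the $\otilde(1)$ local instances and $\poly(n)$ time steps. This requires the failure probability to be $n^{-c}$ for an arbitrary constant $c$, which I will verify is how \cite{BhattacharyaKS23Size} states it.
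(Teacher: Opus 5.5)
Your plan matches the paper: the lemma is imported from \cite{BhattacharyaKS23Size} without proof. Checking that their theorem is stated for the matching \emph{size} with $(1\pm\eps)$ error, with worst-case $m^{1/2-\Omega_\eps(1)}$ update time, and with high probability against an adaptive adversary is all that is required. One caution on your fallback: if their bound were stated in terms of $n$ only, or only for dense graphs, you could \emph{not} simply restate it with $m$, since $n^{1-\Omega(1)}$ can exceed $m^{1/2}$ on sparse graphs; their bound is, however, already stated in terms of $m$. Your mechanism sketch is not needed and is imprecise (for instance, the $n^{2-\Omega_\eps(1)}$-time estimator at the core of their result has only additive $\eps n$ error, so on its own it is not a $(1\pm\eps)$ estimate when $\mu\ll n$), but this does not affect the citation.
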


\begin{corollary}\label{cor:bks-value}
For every constant $\eps>0$, there is a randomized fully dynamic algorithm that
maintains a $(1\pm\eps)$-approximation to the MWM weight
of a bipartite graph with at most $n$ nodes, $m$ edges and $\poly(n)$ aspect ratio,
in worst-case update time $m^{1/2-\Omega_\eps(1)}$.
The guarantee holds with high probability
against an adaptive adversary.
\end{corollary}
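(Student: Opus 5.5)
This corollary should follow by composing \Cref{lem:bks-size} with the weight-to-cardinality reduction \Cref{thm:value-w2u}. The plan is to instantiate the hypothesis of \Cref{thm:value-w2u} with the algorithm of \Cref{lem:bks-size}, run on unit-weight bipartite graphs. Bipartite graphs are a special case of the graphs it handles, so it applies with accuracy parameter $\delta$, $\I(n,m,\delta)$ polynomial, and $\U(n,m,\delta)=m^{1/2-c(\delta)}$ for some $c(\delta)>0$. Because $\eps$ is a fixed constant, the accuracy requested by \Cref{thm:value-w2u} is $\delta=\thetatilde(\eps)$, which is again a constant. Hence $c(\delta)$ is a positive constant depending only on $\eps$.

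Next I would bound the update time. Substituting into \Cref{thm:value-w2u} gives
\[
\otilde\left(\eps^{-2}\,(\otilde(m\eps^{-2}))^{1/2-c(\delta)}\right).
\]
With $\eps$ constant, every $\eps$-dependent factor is $O_\eps(1)$, so the bound is $O_\eps(m^{1/2-c(\delta)})=m^{1/2-\Omega_\eps(1)}$. Here the constant factor is absorbed by slightly decreasing the exponent, which is valid once $m$ exceeds a constant depending on $\eps$. For smaller $m$ the update time is $O_\eps(1)$ trivially.

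Several properties need to carry over. \Cref{thm:value-w2u} preserves worst-case update time, so the bound stays worst case. The reduction is deterministic, so it adds no randomness of its own. It runs $\otilde(1)$ independent copies per window, $O(h)$ windows per update, and $O(K)$ unfolded updates per weighted update. Since the number of copies is $\poly(n)$, a union bound shows that all copies stay correct simultaneously with high probability.

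The main obstacle is the adaptive adversary. The unfolded instances receive updates determined by the original adversary's updates, and the adversary may react to the reported estimate. That estimate is a deterministic function of the local estimates. So the adversary seen by each copy of \Cref{lem:bks-size} is itself adaptive, which is exactly the guarantee that lemma provides. I would check that the unfolded graph has $O_\eps(n)$ vertices and $O_\eps(m)$ edges, so that "with high probability" in the unfolded size still means with high probability in $n$. I would also note that the initialization cost is polynomial and irrelevant to the fully dynamic update bound.
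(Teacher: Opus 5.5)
Your proposal is correct and follows the paper's proof. You plug \Cref{lem:bks-size} into \Cref{thm:value-w2u} and observe that, for constant $\eps$, every overhead and the local accuracy $\thetatilde(\eps)$ is constant, so the bound collapses to $m^{1/2-\Omega_\eps(1)}$; your remarks on the union bound over the $O(\log n)$ copies and on the composed adversary being adaptive against each copy spell out what the paper summarizes as ``the remaining guarantees are preserved by the reduction.''
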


\begin{proof}
Apply \Cref{thm:value-w2u} with the algorithm of
\Cref{lem:bks-size}. For fixed $\eps$, all overhead and parameter
changes in \Cref{thm:value-w2u} are constant, so its update-time bound
simplifies to $m^{1/2-\Omega_\eps(1)}$.
The remaining guarantees are preserved by the reduction.
\end{proof}

\section{Dynamic Weight Reduction for Explicit Matchings}
\label{sec:matching-reduction}

We now show how matching composition can be used as a dynamic weight
reduction framework if one explicitly maintains a
matching. The recourse of a dynamic matching algorithm is the
size of the symmetric difference between the consecutive output
matchings.

We first use a subroutine from \cite{BernsteinCDLST25} to
reduce the aspect ratio to $O(\eps^{-5})$.

\begin{theorem}[{\cite[Theorem 5.9]{BernsteinCDLST25}}]\label{thm:bcdlst:eps-5-reduction}
  Let $\mathcal G$ be a graph family closed under taking subgraphs. Suppose that
  for every $0< \delta\leq 1/2$, there is a dynamic algorithm that, on a graph
  $F\in\mathcal G$ with at most $n$ vertices, $m$ edges and aspect ratio $R$,
  initializes in time $\I(n, m, R, \delta)$, and explicitly maintains a $(1-\delta)$-MWM
  of $F$ in update time $\U(n, m, R, \delta)$ and recourse $\sigma(n,m,R,\delta)$.
  
  Then for every $0<\eps\leq 1/2$, there is a dynamic algorithm that,
  on a graph $F\in\mathcal G$ with at most $n$ vertices, $m$ edges and aspect ratio $\poly(n)$,
  explicitly maintains a $(1-\eps)$-MWM with initialization time
  \[O(\I(n, m, \Theta(\eps^{-5}), \Theta(\varepsilon))+m\varepsilon^{-1}),\]
  update time
  \[O(\U(n, m, \Theta(\eps^{-5}), \Theta(\varepsilon))+\sigma(n,m,\Theta(\varepsilon^{-5}),\Theta(\varepsilon))\eps^{-1}),\]
  and recourse \[O(\sigma(n,m,\Theta(\varepsilon^{-5}),\Theta(\varepsilon))\eps^{-1}).\]
  The reduction preserves worst-case update time and recourse, determinism, and incremental or decremental updates.
\end{theorem}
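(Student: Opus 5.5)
This statement is cited from \cite[Theorem 5.9]{BernsteinCDLST25} and is not proved anew here. A self-contained argument would follow the standard ``bucket-and-shift'' approach, sketched below.

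\textbf{Setup.} Fix $\eps$ and set $k=\Theta(\log(1/\eps))$ doublings, so that each window of $k/\eps$ consecutive weight classes has aspect ratio $2^{\Theta(\eps^{-1}\log\eps^{-1})}$. That is too large, so the scheme has to be calibrated differently. Instead, group the weight classes $E_i$ into consecutive blocks of $b=\Theta(\log(1/\eps))$ classes, each spanning ratio $\Theta(\eps^{-1})$ to a suitable power. Then drop every $t$-th block, with $t=\Theta(\eps^{-1})$ and a random or best offset $s\in\{0,\dots,t-1\}$.

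\textbf{Step 1: averaging over offsets.} Fix an optimal matching $M^*$. Each of its edges lies in a dropped block for exactly one of the $t$ offsets. Hence some offset loses at most $w(M^*)/t\le\eps\,\mu_w(G)$. The surviving blocks form super-groups separated by gaps.

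\textbf{Step 2: domination across super-groups.} Inside each super-group, whose aspect ratio is $\Theta(\eps^{-5})$ after the parameter choice, run the given $(1-\delta)$-MWM algorithm. Combine the super-groups greedily from the heaviest down: keep a lower-group edge only if it is not blocked by a kept heavier edge. Because consecutive super-groups are separated by a gap of ratio at least $\eps^{-1}$, every heavier edge blocks at most two lighter edges, each lighter by a factor of $\eps$ per gap. A geometric charging argument, as in \Cref{lem:interval-sum}, then bounds the loss by $O(\eps)$ times the output weight.

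\textbf{Step 3: dynamic maintenance.} Trying all $t$ offsets in parallel costs a factor of $\eps^{-1}$. That factor can be avoided by keeping only one offset and recomputing after stale periods. Alternatively, one can note that the greedy combination changes by $O(1)$ edges per change in the underlying recourse, with each change cascading through a chain of length $O(\eps^{-1})$. This chain is the source of the $\sigma\eps^{-1}$ terms in the update time and recourse.

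\textbf{Main obstacle.} The delicate step is bounding this cascade so that the update time and recourse stay at $O(\sigma\eps^{-1})$ rather than growing with the number of super-groups.
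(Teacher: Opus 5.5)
The paper gives no proof of this statement (it is imported from \cite{BernsteinCDLST25}), so your sketch has to stand on its own. It has a genuine gap at its core: the bucket-and-shift scheme cannot produce super-groups of aspect ratio $\Theta(\eps^{-5})$. Step~1 needs $t=\Omega(\eps^{-1})$ for the averaging loss to be $O(\eps)$. Step~2 needs every block to span ratio $\Omega(\eps^{-1})$, because every block is the gap for some offset, and the domination charge is only $O(\eps)$ when gaps have ratio $\Omega(\eps^{-1})$. A super-group is a run of $t-1$ consecutive blocks, so its aspect ratio is $(\eps^{-1})^{\Omega(\eps^{-1})}=\eps^{-\Omega(\eps^{-1})}$. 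That is exactly the quantity you computed in your Setup and rejected; the ``recalibration'' is the same scheme with the same numbers. No choice of $b$ and $t$ rescues it. Capping super-groups at ratio $\eps^{-5}$ leaves room for only $O(1)$ blocks of ratio $\eps^{-1}$ between consecutive gaps, so $t=O(1)$ and Step~1 discards a constant fraction of $M^*$. More generally, if gaps have ratio $\eps^{-1}$ and each weight class is discarded for at most an $\eps$ fraction of offsets, the runs between gaps must contain $\Omega(\eps^{-1}\log\eps^{-1})$ classes. This $\eps^{-\Theta(\eps^{-1})}$ barrier is precisely the limitation of \cite{GuptaP13} described in the introduction. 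According to the introduction, \cite{BernsteinCDLST25} reaches $\poly(\eps^{-1})$ through overlapping windows and matching composition (sharpened in this paper as \Cref{thm:composition}), not by re-tuning the shifting parameters.

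Step~3 also leaves the dynamic requirements unmet. The statement pays only $O(\U(\cdot))$ for the underlying algorithm and claims determinism and worst-case update time and recourse. Running all $t$ offsets multiplies $\U$ by $\eps^{-1}$. A random offset is not deterministic. Switching the maintained offset when it becomes stale can replace the whole output in one update, which is incompatible with worst-case recourse $O(\sigma\eps^{-1})$. Moreover, your combination rule (keep a lighter edge only if no \emph{kept} heavier edge blocks it) does cascade. A change in one super-group can flip a kept edge in the next lower super-group, that flip can block or unblock an edge one level further down, and so on through all lower super-groups. Their number is about $\log R$ divided by the logarithm of a super-group's aspect ratio, which for $R=\poly(n)$ is not bounded by any function of $\eps$. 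So the claimed chain length $O(\eps^{-1})$ has no justification; this is the obstacle you flag, and it stays open. Note the shape of the bounds in the statement: an additive $m\eps^{-1}$ at initialization, and $O(\eps^{-1})$ worst-case time and recourse per unit of underlying recourse, with no dependence on the number of weight levels. That is exactly the cost of feeding each local change into $O(1)$ instances of the degree-two subroutine of \Cref{lem:bcdlst-degree-two}. A correct argument needs a combination step with that locality, in which a change in one window's matching never propagates across levels.
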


To complete the reduction, it remains to handle a single
$O(\eps^{-5})$-aspect-ratio instance using only $O(\eps^{-1})$-aspect-ratio
subroutines.  We present the algorithm in
\Cref{sec:lazy-composition-algorithm} and analyze it in
\Cref{sec:lazy-composition-analysis}.  For comparison, using
the implementation of \cite{BernsteinCDLST25} unchanged
would give an additive
$\otilde(\eps^{-6})$ update overhead.  Our lazy-composition algorithm
improves this overhead to $\otilde(\eps^{-3})$.
After the analysis, we give the full reduction and its applications.

\subsection{Lazy Composition for Bounded Aspect Ratio}\label{sec:lazy-composition-algorithm}

Now we consider the case when the aspect ratio is $R=O(\eps^{-5})$.
After rescaling, we may assume the edge weights lie in
$[1,R]$. Let $h=\lceil\log(\eps^{-1})\rceil$ and
$\rho=\Theta(\eps/h)$.
We will use the underlying
algorithm to maintain a $(1-\Theta(\rho/h))$-MWM of
$G_{[i-h,i]}$ for all $0\le i\le L\defeq\lceil\log R\rceil$.
It is clear that every nonempty $G_{[i-h,i]}$ has aspect ratio
$O(\eps^{-1})$. Since we require no recourse guarantee from the underlying
algorithm, to stabilize the local outputs before composing them, we
use the following transformation.

\begin{lemma}[Derived from {\cite[Theorem 5.22]{BernsteinCDLST25}}]\label{thm:bcdlst:low-recourse}
  Let $\mathcal G$ be a graph family closed under taking subgraphs. Suppose that
  for every $0< \delta\leq 1/2$, there is a dynamic algorithm that, on a graph
  $F\in\mathcal G$ with at most $n$ vertices, $m$ edges and aspect ratio $R$,
  initializes in time $\I(n, m, R, \delta)$, and explicitly maintains a $(1-\delta)$-MWM
  of $F$ in update time $\U(n, m, R, \delta)$.

  Then there is a dynamic algorithm that,
  on a graph $F\in\mathcal G$ with at most $n$ vertices, $m$ edges and aspect ratio $O(\eps^{-1})$,
  explicitly maintains a $(1-\rho)$-MWM with initialization time
  \[\I(n, m, O(\eps^{-1}), \Theta(\rho/h)),\]
  amortized update time
  \[\U(n,m,O(\eps^{-1}), \Theta(\rho/h))+\otilde(\eps^{-1}),\]
  and amortized recourse $\otilde(\eps^{-1})$.
  The transformation preserves determinism and incremental or decremental updates.
\end{lemma}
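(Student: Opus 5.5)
We obtain this lemma by specializing \cite[Theorem 5.22]{BernsteinCDLST25}, which converts a dynamic $(1-\delta)$-MWM algorithm without recourse guarantees into one with low amortized recourse. We do not reprove that theorem. We only check its hypotheses and track the parameters. The mechanism is the standard lazy-rebuilding stabilizer. We keep a stored output matching $M^{\mathrm{out}}$ and an underlying $(1-\delta)$-MWM $M$ of $F$ with $\delta=\Theta(\rho/h)$. We update $M^{\mathrm{out}}$ only when its weight could have dropped below $(1-\rho)\mu_w(F)$. Between rebuilds, each edge deletion simply removes the deleted edge from $M^{\mathrm{out}}$. Edge insertions are handled by the underlying algorithm alone and leave $M^{\mathrm{out}}$ unchanged.

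\textbf{Approximation.} We maintain the invariant $w(M^{\mathrm{out}})\ge(1-\rho)\mu_w(F)$ as follows.
\begin{enumerate}[label=(\arabic*)]
\item At a rebuild we set $M^{\mathrm{out}}\leftarrow M$, so that $w(M^{\mathrm{out}})\ge(1-\delta)\mu_w(F)$.
\item The aspect ratio is $O(\eps^{-1})$, so every edge weighs at most $O(\eps^{-1})$ times the lightest edge, hence at most $O(\eps^{-1})\,\mu_w(F)/|M|$.
\item One update changes $\mu_w(F)$ by at most the maximum edge weight. It also removes at most one edge from $M^{\mathrm{out}}$.
\item Therefore after $k=\Theta(\rho\eps|M|)$ updates the invariant still holds, with slack coming from $\rho-\delta=\Omega(\rho)$.
\end{enumerate}
A rebuild changes at most $|M^{\mathrm{out}}|+|M|=O(|M|)$ edges. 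Spread over $\Theta(\rho\eps|M|)$ updates, this gives amortized recourse $O((\rho\eps)^{-1})$.

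\textbf{The main obstacle.} This naive bound is $\otilde(\eps^{-2})$, not the claimed $\otilde(\eps^{-1})$. Reaching $\otilde(\eps^{-1})$ is exactly where \cite[Theorem 5.22]{BernsteinCDLST25} is needed. That theorem handles weight classes separately, or equivalently uses a weighted potential counting the weight rather than the number of edges lost. Its recourse is then charged to the number of weight scales, $O(\log(\eps^{-1}))$, times $O(\eps^{-1})$, rather than to $\rho^{-1}$. Its internal accuracy $\Theta(\rho/h)$ accounts for the accuracy parameter in the stated initialization and update bounds.

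\textbf{Time bounds and preserved properties.}
\begin{itemize}
\item \emph{Initialization time.} Initialization only initializes the underlying algorithm, which gives $\I(n,m,O(\eps^{-1}),\Theta(\rho/h))$.
\item \emph{Update time.} Each update runs the underlying algorithm once, costing $\U(n,m,O(\eps^{-1}),\Theta(\rho/h))$. Copying $M$ at a rebuild costs time proportional to the recourse, so the additive term is $\otilde(\eps^{-1})$.
\item \emph{Preserved properties.} The stabilizer is deterministic, so it adds no randomization. It feeds the underlying algorithm exactly the original update sequence, so incremental and decremental update streams are preserved.
\end{itemize}
Closure of $\mathcal G$ under subgraphs is needed only because the underlying algorithm must run on $F\in\mathcal G$ itself.
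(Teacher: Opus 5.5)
Your proposal matches the paper, which also gives no argument beyond instantiating \cite[Theorem 5.22]{BernsteinCDLST25} at aspect ratio $O(\eps^{-1})$ with internal accuracy $\Theta(\rho/h)$, and your lazy-rebuild calculation correctly explains why a plain full rebuild would only give $O((\rho\eps)^{-1})=\otilde(\eps^{-2})$ recourse. Two of your side remarks are off, though neither carries the argument: a potential measured in raw weight, combined with full rebuilds, still yields only that same $O((\rho\eps)^{-1})$ bound (it controls weighted rather than unweighted recourse), so it is not \emph{equivalent} to a per-weight-class treatment; and running the underlying algorithm on $F$ itself would need no subgraph closure at all, so that hypothesis matters only if the cited transformation runs the underlying algorithm on proper subgraphs of $F$.
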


After applying \Cref{thm:bcdlst:low-recourse} to the underlying algorithm,
the local matchings have low recourse. We denote the low-recourse matching
on $G_{[i-h,i]}$ by $M_i$.
Suggested by \Cref{lem:prefix-extension},
for any integer $i\ge0$, we aim to maintain a matching
$P_i\subseteq G_{\le i}$ whose weight is close to
$\mu_w(P_{i-1}\cup M_i)$. Since $P_{i-1}\cup M_i$ has maximum degree
at most two, we borrow another subroutine from \cite{BernsteinCDLST25} that
maintains a $(1-\rho)$-MWM efficiently on degree-two graphs.

\begin{lemma}[{\cite[Algorithm~2]{BernsteinCDLST25}}]
\label{lem:bcdlst-degree-two}
For every $0<\delta\leq 1/2$, there is a deterministic algorithm that,
on a graph $F$ with at most $m$ edges and maximum degree at most two,
fully dynamically maintains an explicit $(1-\delta)$-MWM $M$ with
initialization time $O(m\delta^{-1})$,
worst-case update time $O(\delta^{-1})$ and worst-case recourse $O(\delta^{-1})$.
It also maintains an edge set $E^\prime\subseteq F$ such that:
\begin{enumerate}[label=(\roman*)]
\item $M$ is an exact MWM of $F\setminus E^\prime$;
\item for every connected component $C$ of $F$, $\sum_{e\in E^\prime\cap C}w(e)\le \delta\cdot \sum_{e\in C} w(e)$.
\end{enumerate}
\end{lemma}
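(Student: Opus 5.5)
The plan is to exploit that every connected component of $F$ is a path or a cycle. We cut each component into short pieces by deleting one light edge per block of consecutive edges. On the remaining short paths we compute exact MWMs by dynamic programming. All of this can be maintained locally under updates.

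\textbf{Static structure.} Set $k=\lceil 3/\delta\rceil$. Walk along each component and partition its edges into consecutive \emph{blocks}, each containing between $k$ and $2k$ edges. A component with fewer than $k$ edges forms a single short block, and we delete nothing from it. From every full block we put its minimum-weight edge into $E'$.
\begin{itemize}
\item \emph{Property (ii).} The deleted edge weighs at most $w(\text{block})/k$. Summing over the blocks of a component $C$ gives $w(E'\cap C)\le w(C)/k\le \delta\,w(C)/3\le\delta\, w(C)$.
\item \emph{Property (i).} $F\setminus E'$ is a disjoint union of paths, and each path lies within at most two adjacent blocks, so it has $O(k)$ edges. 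We let $M$ be the union of exact MWMs of these paths, each computed by the standard linear-time path DP. Then $M$ is an exact MWM of $F\setminus E'$.
\end{itemize}

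\textbf{Approximation.} Let $M^*$ be an MWM of $F$. Then $M^*\setminus E'$ is a matching of $F\setminus E'$, so $w(M)\ge w(M^*)-w(E')$. The edges of a path split into two matchings, and those of a cycle into at most three. Hence $w(C)\le 3\mu_w(C)$, and therefore $w(E'\cap C)\le \delta\, w(C)/3\le\delta\,\mu_w(C)$. Summing over components gives $w(M)\ge(1-\delta)\mu_w(F)$. Initialization is $O(m)\subseteq O(m\delta^{-1})$.

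\textbf{Dynamic maintenance.} This is the main obstacle: keeping block sizes in $[k,2k]$ while changing only $O(1)$ blocks per update.
\begin{itemize}
\item \emph{Insertion.} An inserted edge either joins two path endpoints or closes a path into a cycle. We append the edge to an end block.
\item \emph{Deletion.} A deleted edge splits its block into two partial blocks, or opens a cycle into a path.
\item \emph{Repair.} After either update, B-tree-style rebalancing restores the invariant. Any block with fewer than $k$ edges is merged with a neighbouring block in the same component, unless it is the whole component. Any block with more than $2k$ edges is split in half.
\end{itemize}
Only $O(1)$ blocks change. For each changed block we recompute its minimum edge, which changes $E'$ by $O(1)$ edges. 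We then rerun the DP on the $O(1)$ affected segments of $F\setminus E'$, each of length $O(k)$. This gives $O(\delta^{-1})$ worst-case time, and also recourse $O(\delta^{-1})$, since $M$ changes only within those segments.

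I would take care to argue the locality claim carefully. A segment of $F\setminus E'$ is bounded by the cut edges of two adjacent blocks, so only segments touching a changed block can change. Short components (fewer than $k$ edges) are handled the same way, since they carry no cut.
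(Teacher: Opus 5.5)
Your proposal is correct. Cutting the lightest edge of every block of $\Theta(\delta^{-1})$ consecutive edges gives (ii); exact path DP on the resulting $O(\delta^{-1})$-edge pieces gives (i) and, via $w(C)\le 3\mu_w(C)$, the $(1-\delta)$ guarantee; and local split/merge rebalancing touches only $O(1)$ blocks per update, yielding $O(\delta^{-1})$ worst-case time and recourse. The paper gives no proof of its own but imports the lemma from \cite[Algorithm~2]{BernsteinCDLST25}; I cannot check your argument against that algorithm's details, but the form of (i) and (ii) suggests the same block-and-cut construction.
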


We will run \Cref{lem:bcdlst-degree-two} on every $P_{i-1}\cup M_i$. However,
we do not set its output to be $P_i$. Instead, we denote the output
matching by $N_i$ and allow $P_i$ to lag behind $N_i$. We will reset $P_i\gets N_i$ when
$w(N_i)-w(P_i)>\rho w(M_i)$.
The lazy update of $P_i$ ensures the approximation of the matching $P_i$
and prevents the recourse from cascading.
Now we give the full algorithm.

\begin{algorithm}[H]
\caption{Lazy Matching Composition}
\label{alg:bounded-aspect-composition}
\begin{algorithmic}[1]
\State $P_{-1}\gets\varnothing$
\For{$i=0,\ldots,L$}
  \State Apply \Cref{thm:bcdlst:low-recourse} to the underlying dynamic matching algorithm on $G_{[i-h,i]}$
  \State Denote the resulting algorithm by 
  $\A_i$ and the maintained matching by 
  $M_i$
  \State Initialize a copy $\mathcal D_i$ of \Cref{lem:bcdlst-degree-two} on
  $P_{i-1}\cup M_i$ with parameter $\rho$
  \State Let $N_i$ be its maintained matching and set $P_i\gets N_i$
\EndFor
\Procedure{Update}{$e$}
  \State Update every affected $\mathcal A_i$, obtaining the changes
  to $M_i$
  \For{$i=0,\ldots,L$}
    \State Update $\mathcal D_i$ with changes in $P_{i-1}\cup M_i$
    \State Let $N_i$ be the matching maintained by $\mathcal D_i$
    \If{$e$ was deleted from the underlying graph}
      \State $P_i\gets P_i\setminus\{e\}$
    \EndIf
    \If{$w(N_i)-w(P_i)>\rho w(M_i)$}
      \State $P_i\gets N_i$
    \EndIf
  \EndFor
  \State \Return $P_L$
\EndProcedure
\end{algorithmic}
\end{algorithm}

\begin{remark}
To execute the line $P_i\gets N_i$,
we operate the edges in $N_i\oplus P_i$ in time $O(|N_i\oplus P_i|)$.
\end{remark}

\subsection{Analysis of Lazy Composition}\label{sec:lazy-composition-analysis}

We start by analyzing the approximation ratio and recourse of $P_L$.
\paragraph{Approximation ratio.}
We first show that $P_L$ preserves a $(1-O(\eps))$-approximation. The proof is a
direct application of \Cref{lem:prefix-extension}.

\begin{lemma}
\label{lem:lazy-approximation}
After initialization and after every edge update, for every
$0\le i\le L$, we have
\[
 w(P_i)\ge
 \bigl(1-O(i\rho+2^{-h})\bigr)
 \mu_w(G_{\le i}).
\]
In particular, $P_L$ is a $(1-O(\eps))$-MWM of $G$.
\end{lemma}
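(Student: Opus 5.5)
We argue by induction on $i$ and track an additive error rather than a multiplicative one. The invariant maintained after every operation is
\[
 w(P_i)\ \ge\ \mu_w(G_{\le i})-\sum_{j\le i}\Bigl(2\rho\,\mu_w(G_{[j-h,j]})+2^{2-h}\mu_w(G_j)\Bigr).
\]
Given this invariant, \Cref{lem:interval-sum} with $k=h$ and $k=0$, applied to $G_{\le i}$, bounds the error sum by $\bigl(10(h+1)\rho+20\cdot 2^{-h}\bigr)\mu_w(G_{\le i})$. Since $\rho=\Theta(\eps/h)$ and $h=\lceil\log(\eps^{-1})\rceil$, this is $O(\eps)\mu_w(G_{\le i})$, and at $i=L$ it gives the ``in particular'' claim. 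The stated form $O(i\rho+2^{-h})$ then follows because $i\le L=O(\log\eps^{-1})=O(h)$, so $h\rho$ and $i\rho$ are interchangeable up to constants.

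For one step of the induction, we first use the lazy reset rule. Immediately after the check in \Cref{alg:bounded-aspect-composition}, we always have $w(P_i)\ge w(N_i)-\rho\,w(M_i)$: either no reset occurred, so the check failed, or $P_i=N_i$. By \Cref{lem:bcdlst-degree-two}, $N_i$ is a $(1-\rho)$-MWM of $P_{i-1}\cup M_i$, so $w(N_i)\ge \mu_w(P_{i-1}\cup M_i)-\rho\,\mu_w(P_{i-1}\cup M_i)$. The last term needs care, since it is not directly a local quantity. Because $P_{i-1}\cup M_i$ has maximum degree two, its MWM weight is at most $w(P_{i-1})+w(M_i)$. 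A cleaner route is to apply the degree-two guarantee per component and compare with the marginal: $w(N_i)\ge \mu_w(P_{i-1}\cup M_i)-\rho\,w(M_i)-\rho\,w(P_{i-1})$. To avoid error accumulating multiplicatively on $P_{i-1}$, I would instead use property (i)/(ii) of \Cref{lem:bcdlst-degree-two} only on components touched by $M_i$. Components consisting only of a $P_{i-1}$-edge are single edges and incur no loss, so the loss is at most $\rho$ times the weight of components that meet $M_i$. I would bound that weight by $O(w(M_i))$ plus the $P_{i-1}$-edges on alternating paths through $M_i$. This is the step where I expect the main difficulty, and if a clean charge to $w(M_i)$ fails I would accept an extra $(1-\rho)$ factor per level, which still yields $1-O(i\rho+2^{-h})$.

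We then apply \Cref{lem:prefix-extension} with $M_i$ a $(1-\rho)$-MWM of $G_{[i-h,i]}$:
\[
 \mu_w(P_{i-1}\cup M_i)\ge w(P_{i-1})+\Delta_i^{(i)}-\rho\,\mu_w(G_{[i-h,i]})-2^{2-h}\mu_w(G_i).
\]
Combining this with the induction hypothesis on $P_{i-1}$ and with $w(M_i)\le\mu_w(G_{[i-h,i]})$ closes the step, since $\Delta_i^{(i)}$ telescopes.

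Finally, we must check that the invariant survives every line of \textsc{Update}, including the intermediate states where $P_{i-1}$ has changed but $\mathcal D_i$ has not yet been reprocessed. The levels are processed in increasing order of $i$, so when $P_i$ is checked, $P_{i-1}$ already satisfies its bound for the current graph. Deletions only remove $e$ from $P_i$, and the reset check follows on the same iteration, so the reset inequality holds at the end of each iteration. The invariant therefore holds after every update.
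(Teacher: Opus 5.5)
With your fallback, your argument is the paper's proof. The paper bounds the loss of $\mathcal D_i$ crudely by $\rho\,\mu_w(P_{i-1}\cup M_i)\le\rho\,\mu_w(G_{\le i})$. Combining this with the reset rule and \Cref{lem:prefix-extension} gives $w(P_i)-w(P_{i-1})\ge\Delta_i^{(i)}-2^{2-h}\mu_w(G_i)-3\rho\,\mu_w(G_{\le i})$. Summing with \Cref{lem:interval-sum} at $k=0$ yields $1-3(i+1)\rho-20\cdot2^{-h}$.

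Two remarks.

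First, your displayed invariant omits the loss of $\mathcal D_i$, so as written it is not what you prove. The local charge you hoped for does work, though. Take a component of $P_{i-1}\cup M_i$ that meets $M_i$. Every $P_{i-1}$-edge in it of weight below $2^{i-h}$ is adjacent to a heavier $M_i$-edge, with at most two such edges per $M_i$-edge. The remaining $P_{i-1}$-edges form a matching in $G_{[i-h,i-1]}$. So these components weigh at most $3w(M_i)+\mu_w(G_{[i-h,i]})$. With \Cref{lem:interval-sum} at $k=h$, this gives a bound $1-O(h\rho+2^{-h})$ that is independent of both $i$ and $L$.

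Second, your final conversion runs the wrong way. From $i\le L=O(h)$ you get $i\rho=O(h\rho)$, not $h\rho=O(i\rho)$; the latter fails at $i=0$. The correct reason an $O(h\rho)$ term is harmless is that $h\rho=\Theta(\eps)=O(2^{-h})$ for $h=\lceil\log(\eps^{-1})\rceil$ and $\rho=\Theta(\eps/h)$.
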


\begin{proof}
Consider any integer $0\le i\le L$.
It is easy to see that $P_i$ is a matching in $G_{\le i}$.
By \Cref{lem:bcdlst-degree-two}, $w(N_i)\ge(1-\rho)\mu_w(P_{i-1}\cup M_i)$.
The update rule of $P_i$ ensures that
\[
 w(P_i)
 \ge w(N_i)-\rho w(M_i)
 \ge \mu_w(P_{i-1}\cup M_i)-2\rho\mu_w(G_{\leq i}).
\]
Combining this with \Cref{lem:prefix-extension} gives
\[
 w(P_i)-w(P_{i-1})\ge \mu_w(G_{\le i})-\mu_w(G_{\le i-1})-2^{2-h}\mu_w(G_i)-3\rho\mu_w(G_{\leq i}).
\]
Summing over $0\le j\le i$ and applying
\Cref{lem:interval-sum} with $k=0$ gives
\[
 w(P_i)\ge
 \bigl(1-3(i+1)\rho-20\cdot2^{-h}\bigr)\mu_w(G_{\le i}).
\]
Since $h=\lceil\log(\eps^{-1})\rceil$, $L=O(h)$ and $\rho=\Theta(\eps/h)$,
we finish the proof.
\end{proof}

\paragraph{Recourse.}
We let $\Gamma(M)$ denote the total number of edge changes in the support of
a matching $M$, including those forming its initial output.
Let $m_0$ be the number of initial edges and $T$ the number of
subsequent updates.
We assume that fully dynamic and incremental graphs start empty,
so $m_0=0$, while a decremental graph starts with $m_0$ edges and deletes all
of them, so $T=m_0$. Now we bound the amortized recourse of $P_L$,
the output of \Cref{alg:bounded-aspect-composition}.

\begin{lemma}
\label{lem:lazy-recourse}
For any integer $0\le i\le L$, $\Gamma(P_i)=\otilde(T\eps^{-2})$.
In particular, $P_L$ has amortized recourse
$\otilde(\eps^{-2})$.
\end{lemma}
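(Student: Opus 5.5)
We bound $\Gamma(P_i)$ by induction on $i$, with a potential argument for the lazy resets. $P_i$ changes in two ways. One is deletions of underlying edges, each removing at most one edge of $P_i$; over the whole sequence these contribute at most $T$ in total (in the decremental case $T=m_0$). The other is resets $P_i\gets N_i$, each costing $|N_i\oplus P_i|$. We also count the initialization: $|P_i|=|N_i|$ is at most the number of vertices touched by $P_{i-1}\cup M_i$, which is $O(|P_{i-1}|+|M_i|)$ and is absorbed in the same bounds, since $\Gamma$ includes the initial output. For this we use $\Gamma(M_i)=\otilde(T\eps^{-1})$ from \Cref{thm:bcdlst:low-recourse}. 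For fully dynamic or incremental graphs starting empty this amortizes over $T$ directly. For decremental graphs the initial $|M_i|\le n\le 2m_0=2T$ is also charged to $T$.

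The key step bounds the total reset cost at level $i$ by $O(\rho^{-1})$ times the total change to the input of $\mathcal D_i$, that is, $\Gamma(P_{i-1})+\Gamma(M_i)$. The gap $g_i\defeq w(N_i)-w(P_i)$ is at most $\rho w(M_i)$ immediately after a reset, since $g_i=0$ then. A reset is triggered only once $g_i>\rho w(M_i)$. Between resets, $g_i$ grows only through changes of $N_i$ (caused by input changes to $\mathcal D_i$) and through deletions removing edges from $P_i$.

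Because weights in $G_{[i-h,i]}$ lie within factor $O(\eps^{-1})$, and $N_i\subseteq P_{i-1}\cup M_i$, a single input change to $\mathcal D_i$ shifts $w(N_i)$ by at most $O(\rho^{-1})$ edges' worth of weight. That is, it shifts $w(N_i)$ by $O(\rho^{-1})$ recourse by \Cref{lem:bcdlst-degree-two}. I expect the main obstacle to be converting \emph{weight} accumulation into \emph{edge count}. Edges of $P_{i-1}$ may be far lighter than $2^{i-h}$, so weight-based charging is unclear.

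My first attempt is to charge in count rather than weight. Write $r$ for the number of input changes to $\mathcal D_i$ since the last reset. Then $|N_i\oplus P_i|\le O(r\rho^{-1})+(\text{deleted edges})$, because $N_i$ changed by $O(\rho^{-1})$ per input change and $P_i$ only lost deleted edges. So each reset is charged $O(\rho^{-1})$ per input change in its epoch. The epochs partition time, so the total is $\Gamma(P_i)\le O(\rho^{-1})(\Gamma(P_{i-1})+\Gamma(M_i))+T$.

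Unrolled over $L=O(h)$ levels, this recursion gives a factor $\rho^{-L}$, which is far too large. So laziness must actually be exploited, and I would use the threshold. Between resets the gap must grow by $\rho w(M_i)$. Edges of $M_i$ weigh at least $2^{i-h}$, and only edges of weight $\ge\rho\,2^{i-h}$ (say) matter at the scale of $\rho w(M_i)$ once lighter ones are batched. The aim is to show that each reset requires $\Omega(|N_i\oplus P_i|\rho)$ input changes, or equivalently that the amortized reset cost per input change is $O(1)+O(\rho^{-1})\cdot(\text{changes from }M_i)$. Then $\Gamma(P_i)\le\Gamma(P_{i-1})+O(\rho^{-1})\Gamma(M_i)+O(T)$. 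Summing over $i\le L=O(h)$ gives $\Gamma(P_L)=O(L\rho^{-1})\cdot\otilde(T\eps^{-1})=\otilde(T\eps^{-2})$, since $\rho=\Theta(\eps/h)$ and $\rho^{-1}=\otilde(\eps^{-1})$.

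Dividing by $T$ gives amortized recourse $\otilde(\eps^{-2})$ for $P_L$. The delicate point is establishing the additive (not multiplicative) dependence on $\Gamma(P_{i-1})$. I would try to prove it by showing that a change in $P_{i-1}$ alters $N_i$ on a degree-two component only locally, by the component-wise exactness in \Cref{lem:bcdlst-degree-two}(i). Its charged deviation then transfers one-for-one, up to the $E'$ slack, with the rest charged to $M_i$ changes.
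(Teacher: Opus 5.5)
There is a genuine gap. You correctly identify the obstacle: the dependence of $\Gamma(P_i)$ on $\Gamma(P_{i-1})$ must be additive with coefficient $1$, since even a constant factor compounds over $L=O(h)$ levels. But the proposal never supplies a mechanism for this, and the routes you sketch do not provide one. The claim ``each reset requires $\Omega(|N_i\oplus P_i|\rho)$ input changes'' is just the $O(\rho^{-1})$ recourse of $\mathcal D_i$ restated. It reproduces the multiplicative recursion you rejected and is not equivalent to the additive amortized claim.

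Charging the expensive part of a reset to $\Gamma(M_i)$ at rate $O(\rho^{-1})$ also has no basis. A reset at level $i$ can be triggered while $M_i$ is unchanged, for example after a reset at level $i-1$ makes $P_{i-1}$ heavier. The cost of restructuring $N_i$ on the $M_i$-components then has nothing in $\Gamma(M_i)$ to be charged to. Component-wise exactness handles only the cheap part of a reset.

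The paper's proof uses two ingredients you are missing.

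\emph{A per-reset cost bound.} The components of $P_{i-1}\cup M_i$ that contain an $M_i$-edge have at most $3|M_i|$ edges and $4|M_i|$ vertices, so they contribute at most $6|M_i|$ to $|P_i\oplus N_i|$. Every other component is a single $P_{i-1}$-edge. By \Cref{lem:bcdlst-degree-two}, such an edge always lies in $N_i$: property (ii) excludes it from $E'$, and then property (i) applies. So the discrepancies there number at most the changes of $P_{i-1}$ (one-for-one) plus $O(1)$ per change of $M_i$ during the period. Your exactness idea belongs here.

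\emph{A global potential.} The $6|M_i|$ part is charged to graph updates, not to input changes of $\mathcal D_i$, through
\[
\Phi=\sum_{i=0}^L 2^{-i}\bigl(\mu_w(G_{\le i})-w(P_i)\bigr)\ge0 .
\]
\begin{itemize}
\item $\Phi$ is initially $O(\eps T)$.
\item It does not involve any $N_i$, and each $P_i$ changes only by resets and forced deletions. Hence an update of $e\in G_j$ raises $\Phi$ by at most $\sum_{i\ge j}2^{-i}w(e)=O(1)$ in total.
\item A reset at level $i$ lowers $\Phi$ by more than $\rho2^{-i}w(M_i)\ge\rho2^{-h}|M_i|=\thetatilde(\eps^2)|M_i|$, because every edge of $M_i$ has weight at least $2^{i-h}$.
\end{itemize}
Hence $|M_i|$ summed over all resets at all levels is $\otilde(\eps^{-2}T)$. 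Absorbing $O(\Gamma(M_i))=\otilde(T\eps^{-1})$ and the at most $T$ forced deletions gives $\Gamma(P_i)\le\otilde(\eps^{-2}T)+\Gamma(P_{i-1})$, and summing over the $O(h)$ levels finishes the proof. Note that the $\otilde(\eps^{-2}T)$ term comes from graph updates via $\Phi$, not from an $O(\rho^{-1})\Gamma(M_i)$ term as in your target recurrence, which remains unproven.
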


\begin{proof}
First consider the initialization of the local matchings.
In the decremental case, $T=m_0$, and every
initial edge belongs to at most $h+1$ local graphs.  Forming all initial
$M_i$ therefore incurs $O((h+1)m_0)=\otilde(T)$ edge insertions in total.
After initialization, each graph update affects at most $h+1$ local
graphs.  Combining this with the $\otilde(\eps^{-1})$ recourse guaranteed
by \Cref{thm:bcdlst:low-recourse} gives
$\sum_{i=0}^L\Gamma(M_i)=\otilde(T\eps^{-1})$.

Fix a level $i$. Besides forced deletions, $P_i$ changes only
when $P_i\gets N_i$, at a cost of $|P_i\oplus N_i|$. We call
the execution of the algorithm between two consecutive resets
$P_i\gets N_i$ a period.

At the end of a period, the components of $P_{i-1}\cup M_i$ that
contain an $M_i$-edge have at most $3|M_i|$ edges and $4|M_i|$ vertices in
total. Hence the matching $N_i$ has at most $2|M_i|$ edges in these components, and at
most $4|M_i|$ edges of the matching $P_i$ touch their vertices.  These
contribute at most $6|M_i|$ edges to $P_i\oplus N_i$. Every
other component of $P_{i-1}\cup M_i$
is an isolated $P_{i-1}$-edge and belongs to $N_i$ by
\Cref{lem:bcdlst-degree-two}. The number of these components can change by at most $1$ because of the changes in
$P_{i-1}$ and by a constant because of the changes in $M_i$.

To control the first part over all periods,
we use a potential argument. Define
\[\Phi\defeq\sum_{i=0}^L2^{-i}\bigl(\mu_w(G_{\le i})-w(P_i)\bigr).\]
$\Phi$ is nonnegative because $P_i$ is a matching in $G_{\le i}$.
Initially, $\Phi=O(\eps)\sum_i2^{-i}\mu_w(G_{\le i})=O(\eps m_0)=O(\eps T)$.
An update of $e\in G_j$ can increase only the terms with $i\ge j$, each by at
most $2^{-i}w(e)$, so the total increase is at most
$w(e)\sum_{i=j}^L2^{-i}=O(1)$.
A reset $P_i\gets N_i$ decreases $\Phi$ by
$2^{-i}(w(N_i)-w(P_i))>\rho2^{-i}w(M_i)
\ge\rho2^{-h}|M_i|=\thetatilde(\eps^2)|M_i|$.
Thus the sum of $|M_i|$ over all resets and all $i$ is
$\otilde(\eps^{-2}T)$.

To control the second part, summing over all periods, the total number of changes in
$P_{i-1}$ is $\Gamma(P_{i-1})$ and the total number of
changes in $M_i$ is $\Gamma(M_i)=\otilde(T\eps^{-1})$.

Thus $\Gamma(P_i)\leq \otilde(\eps^{-2}T)+\Gamma(P_{i-1})$, which makes
$\Gamma(P_i)=\otilde(\eps^{-2}T)$.
\end{proof}

Now we are ready to give the full analysis of the lazy composition algorithm.

\begin{lemma}[Bounded-aspect composition]
\label{lem:bounded-aspect-composition}
Let $\mathcal G$ be a graph family closed under taking subgraphs.
Suppose that, for every $0<\delta\le1/2$, there is a dynamic algorithm
that, on a graph $F\in\mathcal G$ with at most $n$ vertices and $m$ edges
and aspect ratio at most $R$, initializes in time
$\I(n,m,R,\delta)$ and explicitly maintains a $(1-\delta)$-MWM in update
time $\U(n,m,R,\delta)$.

Then, for every $0<\eps\le1/2$, there is a
dynamic algorithm that, on a graph $F\in\mathcal G$ with at most $n$
vertices and $m$ edges and aspect ratio $O(\eps^{-5})$,
explicitly maintains a $(1-\eps)$-MWM in initialization time
\[\otilde\left(\I(n,\otilde(m),O(\eps^{-1}),\thetatilde(\eps))+m\eps^{-1}\right),\]
amortized update time
\[\otilde\left(\U(n,m,O(\eps^{-1}),\thetatilde(\eps))+\eps^{-3}\right),\]
and amortized recourse $\otilde(\eps^{-2})$.
The transformation preserves determinism and incremental or decremental
updates.
\end{lemma}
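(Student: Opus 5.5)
The plan is to run \Cref{alg:bounded-aspect-composition} with $h=\lceil\log(\eps^{-1})\rceil$, $L=\lceil\log R\rceil=O(h)$ (since $R=O(\eps^{-5})$) and $\rho=\Theta(\eps/h)$. Correctness and recourse then come directly from the two lemmas just proved. By \Cref{lem:lazy-approximation}, $P_L$ is a $(1-O(\eps))$-MWM after every update. Rescaling $\eps$ by a constant makes it a $(1-\eps)$-MWM. By \Cref{lem:lazy-recourse}, the amortized recourse is $\Gamma(P_L)/T=\otilde(\eps^{-2})$. Every window $G_{[i-h,i]}$ has aspect ratio $O(\eps^{-1})$, and each window is a subgraph of $F$, so it lies in $\mathcal G$. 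Hence \Cref{thm:bcdlst:low-recourse} applies to each window with the parameter $\Theta(\rho/h)=\thetatilde(\eps)$. What remains is the running-time accounting.

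Initialization is handled first. There are $L+1=\otilde(1)$ windows, and each edge lies in at most $h+1$ of them. The total window size is therefore $\otilde(m)$. Superadditivity of $\I$ then bounds the cost of all $\A_i$ by $\otilde(\I(n,\otilde(m),O(\eps^{-1}),\thetatilde(\eps)))$. Each $\mathcal D_i$ is initialized on $P_{i-1}\cup M_i$, which has $O(n)$ edges. Its initialization takes $O(|P_{i-1}\cup M_i|\rho^{-1})$ time by \Cref{lem:bcdlst-degree-two}. I would bound $|M_i|+|P_{i-1}|$ by the number of edges in the windows, so the total is $\otilde(m\eps^{-1})$.

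For updates, a single edge update touches at most $h+1$ of the $\A_i$. Each touched $\A_i$ costs $\U(n,m,O(\eps^{-1}),\thetatilde(\eps))+\otilde(\eps^{-1})$ amortized, which gives the $\otilde(\U)$ term. The remaining cost comes from feeding changes into the $\mathcal D_i$ and from executing resets. Each change to $P_{i-1}$ or $M_i$ costs $O(\rho^{-1})=\otilde(\eps^{-1})$ worst-case time in $\mathcal D_i$, and it also produces $O(\rho^{-1})$ changes to $N_i$. The total number of such input changes over the whole run is $\sum_i(\Gamma(P_{i-1})+\Gamma(M_i))=\otilde(T\eps^{-2})$, using the bounds from the proof of \Cref{lem:lazy-recourse}. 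Multiplying by $\otilde(\eps^{-1})$ gives $\otilde(T\eps^{-3})$ in total. The reset threshold $w(N_i)-w(P_i)>\rho w(M_i)$ can be checked in $O(1)$ time per level per update by maintaining $w(N_i)$, $w(P_i)$ and $w(M_i)$ as running sums, which adds only $O(L)=\otilde(1)$ per update. Executing a reset costs $O(|N_i\oplus P_i|)$, and this is exactly the quantity charged in \Cref{lem:lazy-recourse}. So resets total $\otilde(T\eps^{-2})$. Dividing by $T$ gives $\otilde(\eps^{-3})$ amortized, and in the decremental case the initialization-time changes are charged using $T=m_0$.

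The main obstacle is to verify that the internal churn of the $\mathcal D_i$ is really bounded this way. The changes of $N_i$ do not propagate; only $P_i$ feeds into $\mathcal D_{i+1}$. Consequently the work at level $i$ is governed by $\Gamma(P_{i-1})+\Gamma(M_i)$ rather than by a cascade across levels, and this relies on the bound $\Gamma(P_i)=\otilde(T\eps^{-2})$ holding uniformly for all $i\le L$. Finally, all ingredients are deterministic whenever the underlying algorithm is, and they handle insertions and deletions monotonically: windows are weight-induced subgraphs, so an insertion into $F$ causes only insertions into the windows. Hence determinism and partial dynamism are preserved.
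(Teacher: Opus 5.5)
Your proposal is correct and follows the paper's proof. Both run \Cref{alg:bounded-aspect-composition}, take the approximation and recourse guarantees from \Cref{lem:lazy-approximation,lem:lazy-recourse}, bound the cost of the $\A_i$ via the $(h+1)$-fold window overlap and superadditivity, and charge the $\mathcal D_i$ work to the $\sum_i(\Gamma(P_{i-1})+\Gamma(M_i))=\otilde(T\eps^{-2})$ input changes at $O(\rho^{-1})$ time each. Your extra bookkeeping (running sums for the reset test, charging reset cost to $\Gamma(P_i)$) only makes explicit what the paper leaves implicit.
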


\begin{proof}
Run \Cref{alg:bounded-aspect-composition}, then
\Cref{lem:lazy-approximation} gives approximation $1-O(\eps)$, and
\Cref{lem:lazy-recourse} gives recourse $\otilde(\eps^{-2})$.  The
underlying algorithms are invoked with aspect ratio $O(\eps^{-1})$ and
accuracy $\Theta(\rho/h)=\thetatilde(\eps)$.

There are $L+1=O(h)$ local algorithms $\mathcal A_i$ and
the total initialization time is at most
$\otilde(\I(n,\otilde(m),O(\eps^{-1}),\thetatilde(\eps)))$.
The initialization of all $\mathcal D_i$ inserts $O(Lm)$ edges
at a cost of $O(\rho^{-1})$ time per edge. Thus, the total initialization time is
$O(Lm/\rho)=\otilde(m\eps^{-1})$.

Every edge update affects $\otilde(1)$ local matchings. Thus, it
takes $\otilde\left(\U(n,m,O(\eps^{-1}),\thetatilde(\eps))+\eps^{-1}\right)$
to maintain $M_i$.
By \Cref{lem:lazy-recourse} and $L=\otilde(1)$,
the amortized recourse of all $P_i$ together is $\otilde(\eps^{-2})$.
In its proof, we also showed that the amortized recourse of all
$M_i$ together is $\otilde(\eps^{-1})$.
Therefore, all $\mathcal D_i$ together receive
$\otilde(\eps^{-2})$ edge updates and require
$\otilde(\eps^{-3})$ update time.
\end{proof}

\subsection{Putting Everything Together}
We now give the complete dynamic weight reduction in general graphs that reduces the 
aspect ratio from $\poly(n)$ to $O(\eps^{-1})$.
\begin{theorem}
\label{thm:dynamic}
Let $\mathcal G$ be a graph family closed under taking subgraphs.
Suppose that, for every $0<\delta\le1/2$, there is a dynamic algorithm that,
on a graph $F\in\mathcal G$ with at most $n$ vertices and $m$ edges and
aspect ratio at most $R$, initializes in time $\I(n,m,R,\delta)$ and
explicitly maintains a $(1-\delta)$-MWM of $F$ in update time
$\U(n,m,R,\delta)$.  For every $0<\eps\le1/2$, there is a dynamic algorithm
that, on a $\poly(n)$-aspect-ratio graph $F\in\mathcal G$ with at most $n$
vertices and $m$ edges, initializes in time
\[
 \otilde\left(\I(n,\otilde(m),O(\eps^{-1}),\thetatilde(\eps))+m\eps^{-1}\right)
\]
and explicitly maintains a $(1-\eps)$-MWM in amortized update time
\[
 \otilde\!\left(
 \U(n,m,O(\eps^{-1}),\thetatilde(\eps))+\eps^{-3}
 \right).
\]
The reduction preserves determinism, and incremental or decremental updates.
\end{theorem}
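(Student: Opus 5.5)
The plan is to compose the two reductions already in place. \Cref{thm:bcdlst:eps-5-reduction} takes the $\poly(n)$ aspect ratio down to $\Theta(\eps^{-5})$. \Cref{lem:bounded-aspect-composition} then takes $O(\eps^{-5})$ down to $O(\eps^{-1})$.

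Concretely, I will chain them in the following order.

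\begin{enumerate}[label=\textup{(\arabic*)},leftmargin=2.1em]
\item Start from the hypothesized algorithm $\A$, which runs on aspect ratio $O(\eps^{-1})$ with accuracy $\thetatilde(\eps)$. Applying \Cref{lem:bounded-aspect-composition} to $\A$ yields an algorithm $\A'$ with the following guarantees for every accuracy $\delta\le 1/2$ on graphs of aspect ratio $O(\delta^{-5})$:
  \begin{itemize}
  \item it explicitly maintains a $(1-\delta)$-MWM;
  \item its initialization time is $\otilde(\I(n,\otilde(m),O(\delta^{-1}),\thetatilde(\delta))+m\delta^{-1})$;
  \item its amortized update time is $\otilde(\U(n,m,O(\delta^{-1}),\thetatilde(\delta))+\delta^{-3})$;
  \item its amortized recourse is $\otilde(\delta^{-2})$.
  \end{itemize}
  Since $\mathcal G$ is closed under subgraphs, every window stays in $\mathcal G$.
\item Feed $\A'$ into \Cref{thm:bcdlst:eps-5-reduction}. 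That theorem calls its subroutine with aspect ratio $\Theta(\eps^{-5})$ and accuracy $\Theta(\eps)$. Setting $\delta=\Theta(\eps)$ with constants chosen so that $\Theta(\eps^{-5})\le O(\delta^{-5})$ makes $\A'$ applicable.
\item Substitute $\sigma=\otilde(\eps^{-2})$ into the bounds of \Cref{thm:bcdlst:eps-5-reduction}.
  \begin{itemize}
  \item The update time becomes $O(\U'+\sigma\eps^{-1})=\otilde(\U(n,m,O(\eps^{-1}),\thetatilde(\eps))+\eps^{-3})$, since $\sigma\eps^{-1}=\otilde(\eps^{-3})$ is absorbed.
  \item The initialization time becomes $O(\I'+m\eps^{-1})$, which matches the claim.
  \end{itemize}
  Parameters that degrade by constant factors stay inside the $O(\cdot)$ and $\thetatilde(\cdot)$ notation. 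The approximation factors compose as $(1-\Theta(\eps))$, so rescaling $\eps$ by a constant gives $(1-\eps)$.
\end{enumerate}

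The main obstacle is a mismatch between the hypotheses of \Cref{thm:bcdlst:eps-5-reduction} and what $\A'$ provides.

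\emph{Amortized versus worst-case bounds.} \Cref{thm:bcdlst:eps-5-reduction} is stated with update time and recourse bounds, which I read as possibly worst-case. $\A'$ only has amortized guarantees. I will argue that the reduction in \cite{BernsteinCDLST25} is a sum over $O(\eps^{-1})$ (or $\otilde(1)$) instances. Each original update causes $O(1)$ updates per instance, and each unit of recourse causes $O(\eps^{-1})$ work. Summing amortized bounds over a sequence therefore yields amortized bounds of the same form. This is consistent with the theorem saying it \emph{preserves} worst-case guarantees rather than requiring them.

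\emph{Instance-wise initialization of $\A'$.} The amortized recourse bound in \Cref{lem:lazy-recourse} is charged against $T$, using the convention that fully dynamic and incremental graphs start empty and decremental graphs delete all of their initial edges. Each instance of $\A'$ inside the outer reduction must satisfy this convention, or its initial recourse must be charged to its own initialization, which is at most $\otilde(m\eps^{-1})$ per instance and fits the stated initialization time. Superadditivity of $\I$ in $m$ lets the initialization costs of the instances be summed as $\I(n,\otilde(m),\cdot,\cdot)$.

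\emph{Preservation of determinism and partial dynamism.} Both steps preserve determinism and incremental or decremental updates, so the composition does as well.
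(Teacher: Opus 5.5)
Your proposal is correct and follows the paper's proof: apply \Cref{lem:bounded-aspect-composition} to get an $O(\eps^{-5})$-aspect-ratio algorithm with amortized recourse $\otilde(\eps^{-2})$, then feed it into \Cref{thm:bcdlst:eps-5-reduction}, where the $\sigma\eps^{-1}=\otilde(\eps^{-3})$ term is absorbed into the stated update time. Your extra remarks on amortized versus worst-case bounds, on matching the aspect-ratio constants, and on charging initial recourse make explicit what the paper compresses into ``the runtime follows directly.''
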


\begin{proof}
Apply \Cref{lem:bounded-aspect-composition} to the underlying algorithm
to get the transformed $O(\eps^{-5})$-aspect-ratio algorithm. Then apply
\Cref{thm:bcdlst:eps-5-reduction} on it. The runtime follows directly.
\end{proof}

\subsection{Dynamic Weighted-to-Unweighted Reduction
in Bipartite Graphs}

Similar to \Cref{sec:weight-to-cardinality}, we reduce weighted matchings to
unweighted matchings in the unfolded graphs. Its dynamic implementation
reuses a subroutine in \cite{BernsteinCDLST25}.

\begin{lemma}[{\cite[Lemma~5.30]{BernsteinCDLST25}}]
\label{lem:dynamic-unfolding}
Let $F$ be a bipartite graph with integer edge weights between $1$ and $R$.
Suppose an algorithm initializes in time $\I(n,m,\eps)$ and
explicitly maintains a $(1-\eps)$-MCM in update time $\U(n,m,\eps)$.
Then one can initialize an explicit $(1-O(\eps))$-MWM of $F$ in time
\[
 O\!\left(
 \I(O(nR),O(mR),\eps)+m\eps^{-1}\log(\eps^{-1})
 \right)
\]
and maintain it in amortized update time
\[
 O\!\left(
 R\cdot \U(O(nR),O(mR),\Theta(\eps))
 +R\eps^{-2}\log(\eps^{-1})
 \right).
\]
The reduction preserves determinism, and incremental or decremental updates.
\end{lemma}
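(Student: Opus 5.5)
The statement is Lemma~5.30 of \cite{BernsteinCDLST25}. My plan is to reproduce its proof by running the given $(1-\eps)$-MCM algorithm on the unfolded graph $\phi(F)$ of \Cref{lem:unfolding} and converting the unweighted matching back into a weighted one.

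\textbf{Static structure.} By \Cref{lem:unfolding}, $\phi(F)$ has $O(nR)$ vertices and $O(mR)$ edges. Each weighted edge update becomes $O(R)$ unweighted updates, since edge $e$ spawns $w(e)\le R$ copies. This accounts for the $R\cdot\U(O(nR),O(mR),\Theta(\eps))$ term, and initialization likewise costs $\I(O(nR),O(mR),\eps)$.

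Let $N$ be the maintained $(1-\eps)$-MCM of $\phi(F)$. Then $|N|\ge(1-\eps)\mu(\phi(F))=(1-\eps)\mu_w(F)$. I then project $N$ onto $F$: each unfolded edge $u^iv^{w(e)-i+1}$ maps to the original edge $e=uv$. Define the multigraph $H$ on $V(F)$ in which edge $e$ has multiplicity $x_e$, the number of copies of $e$ used by $N$.

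\textbf{Degree bound.} Vertex $u$ has only $R$ copies, but the edges incident to $u$ in $\phi(F)$ do not simply pair copies of $u$. The fact I need, following Kao et al., is a fractional b-matching-type bound: $\sum_{e\ni u} x_e$ is dominated by the load structure. In particular, $x_e/w(e)$ defines a fractional matching $y$ with $\sum_e w(e)y_e=|N|$ in the bipartite graph. I would verify the vertex constraint $\sum_{e\ni u}x_e/w(e)\le1$ from the unfolding structure, or else accept a constant loss. Since the graph is bipartite, the fractional matching polytope is integral, so a matching of weight $\ge \sum_e w(e)y_e$ exists. To extract it dynamically, I would round $y$ with a dynamic bipartite rounding procedure. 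The $R\eps^{-2}\log(\eps^{-1})$ term fits a rounding step that, per changed $y_e$ entry, costs $O(\eps^{-2}\log\eps^{-1})$ amortized, applied to $O(R)$ changes per update. The $m\eps^{-1}\log(\eps^{-1})$ initialization term likewise fits static rounding of $y$.

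\textbf{Main obstacle.} The hardest step is the vertex constraint together with the rounding. Copies of $e$ in $N$ may be partial, with $x_e<w(e)$. One must show the projected fractional solution is feasible and loses only an $O(\eps)$ fraction of its weight. First I would try the standard Kao et al.\ argument that any matching in $\phi(F)$ of size $s$ induces a weighted fractional (even integral, via Egerváry-type duality) structure of value $s$. If exact feasibility fails, I would discard the partially used edges and charge the loss to the $\eps$ slack.

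Alternatively, since this lemma is quoted directly from \cite{BernsteinCDLST25}, I would simply invoke their Lemma~5.30 and check that its hypotheses, integer weights in $[1,R]$ and a bipartite graph, match the ones here.
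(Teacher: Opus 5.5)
The paper gives no proof of this statement; it imports it verbatim from \cite{BernsteinCDLST25}. Your closing option (check that $F$ is bipartite with integer weights in $[1,R]$, then invoke their Lemma~5.30) is therefore exactly what the paper does.

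The self-contained reconstruction you lead with, however, breaks at the step you flagged. The vertex constraint $\sum_{e\ni u}x_e/w(e)\le 1$ is false even when $N$ is an exact maximum matching of $\phi(F)$. Let $u$ be adjacent to $a_1,\dots,a_k$ with $w(ua_j)=j$, and take $N=\{u^ja_j^1:1\le j\le k\}$. This is a matching of size $k=\mu_w(F)$, every $x_{ua_j}=1$, and the load at $u$ is $\sum_{j\le k}1/j=\Theta(\log k)$. Both of your fallbacks then fail:
\begin{itemize}
\item Rescaling $y$ to feasibility loses a $\Theta(\log R)$ factor.
\item Discarding partially used edges leaves only $ua_1$, of weight $1$.
\end{itemize}
Neither loss is absorbed by the $\eps$ slack; it is not even a constant. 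Your cost accounting has a second problem. The black box only guarantees update time $\U$, so one weighted update can change $N$ in up to $R\cdot\U$ edges, not $O(R)$. A per-entry rounding cost would then multiply $\U$ instead of being additive.

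The missing idea is to use the support rather than a fractional projection. Let $H\subseteq F$ be the set of edges with at least one copy in $N$. Then $N$ is a matching in $\phi(H)$, and applying \Cref{lem:unfolding} to the bipartite graph $H$ gives $\mu_w(H)=\mu(\phi(H))\ge|N|\ge(1-\eps)\mu_w(F)$. In the example, $H$ contains $ua_k$. Since $|H|\le|N|\le\mu_w(F)$ (and $|H|\le m$), a static $(1-\eps)$-MWM algorithm running in $O(|H|\eps^{-1}\log\eps^{-1})$ time (e.g.\ Duan--Pettie) can be run on $H$.

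Between recomputations, just delete removed edges from the output. Recompute after every $\max\{1,\lfloor\eps|N|/R\rfloor\}$ updates. This stays $(1-O(\eps))$-approximate, because one weighted update changes both $\mu_w(F)$ and the output weight by at most $R$. The amortized rebuild cost is $O(R\eps^{-2}\log\eps^{-1})$ and the initial run costs $O(m\eps^{-1}\log\eps^{-1})$, matching the additive terms in the statement. The scheme needs no recourse bound on the black box and preserves determinism and partial dynamism.
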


\begin{theorem}
\label{thm:w2u}
Suppose that, for every $0<\delta\le1/2$, there is a dynamic algorithm
that, on an unweighted bipartite graph with at most $n$ vertices and $m$
edges, initializes in time $\I(n,m,\delta)$ and explicitly maintains a
$(1-\delta)$-MCM in update time $\U(n,m,\delta)$.

Then for every $0<\eps\le1/2$, there is a dynamic algorithm on a bipartite
graph with at most $n$ vertices and $m$ edges and $\poly(n)$ aspect ratio
that initializes in time
\[\otilde\left(\I(\otilde(n\eps^{-2}),\otilde(m\eps^{-2}),\thetatilde(\eps))+m\eps^{-1}\right)\]
and explicitly maintains a $(1-\eps)$-MWM in amortized update time
\[\otilde\left(\eps^{-2}\U(\otilde(n\eps^{-2}),\otilde(m\eps^{-2}),\thetatilde(\eps))+\eps^{-4}\right).\]
The reduction preserves determinism, and incremental or decremental updates.
\end{theorem}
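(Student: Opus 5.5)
The plan is to compose \Cref{thm:dynamic} with \Cref{lem:dynamic-unfolding}. First I will build an algorithm for the $O(\eps^{-1})$-aspect-ratio weighted bipartite problem from the unweighted MCM algorithm. Then I will feed it into \Cref{thm:dynamic}, instantiated with $\mathcal G$ the family of bipartite graphs, which is closed under taking subgraphs.

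\textbf{Step 1: rounding each local window to integers.} Fix a local instance $F$ with aspect ratio $R=O(\eps^{-1})$ and weights in $[\lambda,R\lambda)$, and target accuracy $\delta=\thetatilde(\eps)$.
\begin{itemize}[leftmargin=1.5em]
\item As in the proof of \Cref{thm:value-w2u}, replace $w(e)$ by $\overline w(e)=\lfloor w(e)/(\tau\lambda)\rfloor$ with $\tau=\Theta(\delta)$.
\item Every matching $M$ then satisfies $(1-\tau)w(M)\le\tau\lambda\,\overline w(M)\le w(M)$.
\item Consequently, a $(1-O(\delta))$-MWM of $\overline F$ is a $(1-O(\delta))$-MWM of $F$.
\item The integer weights lie in $[1,O(R/\tau)]=[1,\otilde(\eps^{-2})]$. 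Since the rounding is fixed for the window, each update to $F$ is a single update to $\overline F$.
\end{itemize}

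\textbf{Step 2: unfolding the rounded window.} Apply \Cref{lem:dynamic-unfolding} to $\overline F$ with integer aspect bound $R'=\otilde(\eps^{-2})$ and accuracy $\Theta(\delta)$. This gives an explicit $(1-O(\delta))$-MWM of each window with initialization time
\[
O\!\left(\I(\otilde(n\eps^{-2}),\otilde(m\eps^{-2}),\Theta(\delta))+m\delta^{-1}\log\delta^{-1}\right)
\]
and amortized update time
\[
O\!\left(R'\,\U(\otilde(n\eps^{-2}),\otilde(m\eps^{-2}),\Theta(\delta))+R'\delta^{-2}\log\delta^{-1}\right)=\otilde\!\left(\eps^{-2}\U(\cdots)+\eps^{-4}\right).
\]
Rescaling $\delta$ by a constant gives the form required by \Cref{thm:dynamic}, where the underlying algorithm is only invoked with aspect ratio $O(\eps^{-1})$ and accuracy $\thetatilde(\eps)$.

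\textbf{Step 3: plugging into \Cref{thm:dynamic}.} The remaining work is bookkeeping. The $\eps^{-3}$ additive term of \Cref{thm:dynamic} is dominated by $\eps^{-4}$. Polylogarithmic factors in $\eps^{-1}$ are absorbed by $\otilde$.

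For initialization, I would use the superadditivity assumption so that the $\otilde(1)$ windows, each with $\otilde(m)$ edges, combine into one term $\I(\otilde(n\eps^{-2}),\otilde(m\eps^{-2}),\thetatilde(\eps))$. The additive terms add $\otilde(m\eps^{-1})$.

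\textbf{Main obstacle: the update model.} \Cref{thm:dynamic} assumes a general update-time bound, but \Cref{lem:dynamic-unfolding} gives only an amortized one. I will check that the proof of \Cref{thm:dynamic} tolerates amortized underlying bounds. This should hold because \Cref{thm:bcdlst:low-recourse} already yields amortized guarantees. I will also check that incremental and decremental updates are preserved through the rounding step, which holds because the rounding is fixed per window.
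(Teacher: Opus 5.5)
Your proposal is correct and follows essentially the same route as the paper: round each bounded-aspect window to positive integer weights in $[1,\otilde(\eps^{-2})]$ as in the proof of \Cref{thm:value-w2u}, apply \Cref{lem:dynamic-unfolding} with accuracy $\Theta(\delta)$ to get an explicit $(1-O(\delta))$-MWM of each window, and substitute the resulting bounds into \Cref{thm:dynamic}. Your extra check that \Cref{thm:dynamic} tolerates amortized rather than worst-case underlying bounds addresses a point the paper does not discuss, so it is added care rather than a departure from the paper's argument.
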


\begin{proof}
Apply the scaling and rounding argument from the proof of
\Cref{thm:value-w2u} to each bounded-aspect instance supplied by
\Cref{thm:dynamic}, using its local accuracy
$\delta=\thetatilde(\eps)$.  It produces positive integer weights in
$[K]$, where $K=\thetatilde(\eps^{-2})$, while losing only an
$O(\delta)$ fraction of every matching's weight.

The proof of \Cref{thm:value-w2u} uses unfolding only to recover the MWM
weight.  Here, instead, apply \Cref{lem:dynamic-unfolding} with $R=K$ and
accuracy $\delta'=\Theta(\delta)$.  The refolded output is an explicit
$(1-O(\delta))$-MWM of the original local instance, and adjusting constants
makes it a $(1-\delta)$-MWM.  Thus \Cref{thm:dynamic} applies.  Substituting
$K=\thetatilde(\eps^{-2})$ and
$\delta'=\thetatilde(\eps)$ into the initialization and update bounds
of \Cref{lem:dynamic-unfolding}, and then into \Cref{thm:dynamic}, gives
the two bounds.  Both reductions preserve determinism and partial
dynamism.
\end{proof}

\subsection{Applications}\label{sec:explicit-applications}
Now we discuss the implications of our reduction frameworks for obtaining 
new dynamic $(1-\eps)$-MWM algorithms. For those results whose $\eps$ dependence
was stated with $\poly(\eps^{-1})$ in \cite{BernsteinCDLST25}, we do not
state our improvements in the hidden $\eps$ dependence.
Below we emphasize four applications.
\begin{center}
\small
\begin{tabular}{@{}p{0.19\linewidth}p{0.285\linewidth}
                    p{0.25\linewidth}p{0.155\linewidth}@{}}
\toprule
Setting & Prior Result & Our Result & Reduction\\
\midrule
\shortstack[l]{Incremental\\Bipartite}
& \shortstack[l]{$\otilde(n\eps^{-9}+m\eps^{-8})$\\
  \cite{BernsteinCDLST25}}
& \shortstack[l]{$\otilde(n\eps^{-8}+m\eps^{-7})$\\
  \Cref{cor:incremental-bipartite}}
& \shortstack[l]{\Cref{thm:w2u}}\\
\midrule
\shortstack[l]{Fully Dynamic\\Maximum Degree $\Delta$}
& \shortstack[l]{$\otilde(\Delta\eps^{-5})$\\
  {\cite{BernsteinCDLST25}}}
& \shortstack[l]{$\otilde(\Delta\eps^{-3})$\\
  \Cref{cor:low-degree}}
& \shortstack[l]{\Cref{thm:dynamic}}\\
\midrule
\shortstack[l]{Fully Dynamic\\Arboricity $\alpha$}
& \shortstack[l]{No prior $\alpha\poly(\eps^{-1})$\\bound is known}
& \shortstack[l]{$\otilde(\alpha\eps^{-4})$\\
  \Cref{thm:lowarb}}
& \shortstack[l]{\Cref{thm:dynamic}}\\
\midrule
\shortstack[l]{Fully Dynamic\\General}
& \shortstack[l]{$\otilde(\sqrt m\,\eps^{-4}+\eps^{-6})$\\
  \cite{BernsteinCDLST25}}
& \shortstack[l]{$\otilde(\sqrt m\,\eps^{-3})$\\
  \Cref{cor:bcdlst-style-fully-dynamic-general}}
& \shortstack[l]{\Cref{thm:dynamic}}\\
\bottomrule
\end{tabular}
\end{center}

\begin{lemma}[\cite{BlikstadK23}]
\label{lem:incremental-mcm}
There exists a deterministic incremental algorithm that maintains an explicit
$(1-\eps)$-MCM in an $n$-vertex $m$-edge bipartite graph
in total time $O(n\eps^{-6}+m\eps^{-5})$.
\end{lemma}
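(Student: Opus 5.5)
This statement is imported from \cite{BlikstadK23} and used as a black box. The proof I would give in the paper is a citation. If I had to reconstruct the argument, I would follow the standard auction (price-based) approach to bipartite $(1-\eps)$-MCM and make it incremental. The bounds below are my own accounting under that approach and may not match how \cite{BlikstadK23} actually reach $O(n\eps^{-6}+m\eps^{-5})$.

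\textbf{Static auction.} Let the bipartition be $(L,R)$. Each right vertex $v$ carries a price $p_v\in[0,1]$ that only increases, in multiples of $\eps$. Each unmatched left vertex $u$ repeatedly bids on a neighbor of minimum price below $1$. It takes that neighbor from its current owner, raises the price by $\eps$, and drops out once all of its neighbors are priced at $1$. The usual $\eps$-complementary-slackness invariant, namely that each matched left vertex holds a neighbor of price within $\eps$ of its cheapest neighbor, yields a dual certificate showing $|M|\ge(1-O(\eps))\mu(G)$. Since every price rises at most $\eps^{-1}$ times and a bid scans the bidder's adjacency list, the static work is $O(m\eps^{-1})$ up to the scanning overhead.

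\textbf{Incremental phases.} Prices never decrease, and inserting an edge $(u,v)$ can only violate slackness at $u$, which at most makes $u$ a new bidder. I would therefore run the auction lazily. After each insertion, resume bidding from the newly violated left vertices instead of restarting.

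\textbf{Main obstacle: bounding the total work.} Three sources of cost need control.
\begin{enumerate}[label=(\arabic*)]
\item A left vertex may be reactivated many times without any price increasing. I would charge each reactivation to the insertion that caused it.
\item The approximation guarantee degrades as $\mu(G)$ grows, because the dual certificate is only valid relative to the instance on which the auction ran. I would handle this by restarting the auction from zero prices each time $\mu(G)$ grows by a $(1+\eps)$ factor, detected via the maintained matching size.
\item The restarts themselves must be paid for. Naively this costs $O(\eps^{-1}\log n)$ restarts, each costing $O(m\eps^{-1})$. To remove the $\log n$, I would keep prices across restarts and only partially rescale them. I would then amortize through a potential of the form $\sum_v(1-p_v)$ plus $\eps^{-O(1)}$ units of credit per inserted edge and per vertex, aiming for $O(\eps^{-5})$ per edge and $O(\eps^{-6})$ per vertex.
\end{enumerate}
This amortization step is the part I am least sure of.
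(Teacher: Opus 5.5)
Your proposal matches the paper: the lemma is imported from \cite{BlikstadK23} and used purely as a black box (in \Cref{cor:incremental-bipartite} via \Cref{thm:w2u}), with no proof given beyond the citation. Your auction reconstruction is therefore unnecessary, and as written it would not establish the bound: when an inserted edge to a much cheaper neighbor forces a matched left vertex $u$ to rebid, the partner $u$ abandons is left unmatched with positive price, which breaks the dual certificate you rely on, and the amortization in step~(3) is left open, so present the lemma as a citation only.
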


\begin{corollary}
\label{cor:incremental-bipartite}
There exists a deterministic incremental algorithm that maintains an
explicit $(1-\eps)$-MWM in an $n$-vertex, $m$-edge,
$\poly(n)$-aspect-ratio bipartite graph in total time
$\otilde(n\eps^{-8}+m\eps^{-7})$.
\end{corollary}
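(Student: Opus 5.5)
We plug the incremental bipartite MCM algorithm of \Cref{lem:incremental-mcm} into \Cref{thm:w2u} and read off the total time. First we convert the total-time bound into the initialization/update form that \Cref{thm:w2u} expects. An incremental graph starts empty, so the initialization cost is negligible: $\I(n,m,\delta)=O(n\delta^{-6})$ if we charge the vertex term there. Over $m$ insertions, the amortized update time is $\U(n,m,\delta)=O(n\delta^{-6}/m+\delta^{-5})$. It is cleaner to reason with total time directly, since \Cref{thm:w2u} is built from instances that each receive only insertions. These are the unfolded graphs of the bounded-aspect local windows. Its amortized guarantee therefore turns into total time by summing over all local instances.

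Next we account for the cost of each local instance. By \Cref{thm:w2u} and the constructions behind it (\Cref{thm:dynamic}, \Cref{lem:bounded-aspect-composition}, \Cref{lem:dynamic-unfolding}), there are $\otilde(1)$ windows per scale range. Each original edge lies in $\otilde(1)$ windows, and each weighted edge unfolds into $K=\thetatilde(\eps^{-2})$ unit edges. So the unfolded instances have $\otilde(n\eps^{-2})$ vertices and $\otilde(m\eps^{-2})$ edges in total, with accuracy $\delta=\thetatilde(\eps)$. \Cref{lem:dynamic-unfolding} may rebuild or refold, but it does so in amortized fashion: each weighted update costs $R$ calls of $\U$. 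Summed over $m$ updates this is exactly the total time of the MCM algorithm run on graphs with $O(nK)$ vertices and $O(mK)$ edges. Since $\I$ is superadditive in $m$, the total MCM work is
\[
 O\bigl(n\eps^{-2}\cdot\delta^{-6}+m\eps^{-2}\cdot\delta^{-5}\bigr)=\otilde(n\eps^{-8}+m\eps^{-7}).
\]
The vertex term must not be multiplied by the number of windows beyond $\otilde(1)$, and this holds because there are $\otilde(1)$ windows.

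Finally we add the additive overheads. These are the $\otilde(m\eps^{-1})$ initialization term, the $R\eps^{-2}\log(\eps^{-1})=\otilde(\eps^{-4})$ per-update term of unfolding, and the $\otilde(\eps^{-3})$ composition overhead from \Cref{thm:dynamic}. Over $m$ insertions they contribute $\otilde(m\eps^{-4})$, which is dominated by $m\eps^{-7}$. One more step is needed beforehand: \Cref{thm:bcdlst:eps-5-reduction} handles the $\poly(n)$ aspect ratio only with $\eps^{-1}$ overhead on recourse and on updates. We must check that this $\eps^{-1}$ factor is additive and not multiplicative on the MCM cost; since it multiplies only $\sigma$, which is $\otilde(\eps^{-2})$, this holds. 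Determinism and incrementality are preserved by every reduction.

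The main obstacle is the bookkeeping that turns a total-time bound into the amortized $\U$ required by \Cref{thm:w2u}. The $n\eps^{-6}$ term is not amortizable per edge when $m\ll n$, so the plan is to treat it as initialization cost. We then rely on the superadditivity assumption and on the fact that each window's vertex count is $\otilde(n\eps^{-2})$, paid once per window. This avoids a factor of $\eps^{-2}$ from $\U$'s multiplier $R$ landing on the $n$ term.
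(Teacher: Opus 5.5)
Your proposal is correct and matches the paper's proof, which plugs \Cref{lem:incremental-mcm} into \Cref{thm:w2u}: charging the $n\delta^{-6}$ term to the empty-graph initialization and taking amortized $\U=O(\delta^{-5})$ gives $\otilde(n\eps^{-2}\cdot\eps^{-6})+m\cdot\otilde(\eps^{-2}\cdot\eps^{-5}+\eps^{-4})=\otilde(n\eps^{-8}+m\eps^{-7})$. The extra bookkeeping through the unfolding, composition, and $\eps^{-5}$-reduction layers is already contained in the statement of \Cref{thm:w2u} and is not needed; your appeal to superadditivity of an $\I$ with a vertex term is not quite accurate, but nothing depends on it.
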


\begin{proof}
Apply \Cref{thm:w2u} to \Cref{lem:incremental-mcm}.
\end{proof}

\begin{lemma}[\cite{GuptaP13}]
\label{lem:bounded-degree-black-box}
There is a deterministic fully dynamic algorithm that explicitly maintains
a $(1-\eps)$-MWM on a graph of maximum degree $\Delta$ and aspect ratio at most $R$
in amortized update time $\otilde(\Delta R\eps^{-2})$.
\end{lemma}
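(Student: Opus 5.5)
The plan is a lazy periodic-rebuild argument in the style of \cite{GuptaP13}. Combine a fast static $(1-\eps)$-MWM algorithm with the observation that, in a bounded-degree graph with bounded aspect ratio, the optimum is large relative to the number of edges. So a recomputed matching stays near-optimal for many updates.

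\emph{Lower bound on the optimum.} Rescale so all weights lie in $[1,R]$. A graph with $m$ edges and maximum degree $\Delta$ has a matching of size at least $m/(2\Delta-1)$: greedily pick an edge and discard the at most $2\Delta-1$ edges sharing an endpoint with it. Hence $\mu_w(G)\ge m/(2\Delta)$.

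\emph{Stability under updates.} A single edge insertion or deletion changes $\mu_w(G)$ by at most $R$. Moreover, if we keep a fixed matching $M$ and only delete from it the edges removed from the graph, each update reduces $w(M)$ by at most $R$. Therefore, if $M$ is a $(1-\eps/2)$-MWM at some time, then after $k$ updates
\[
 w(M)\ge(1-\eps/2)\mu_w^{\mathrm{old}}-kR\ge(1-\eps/2)\mu_w^{\mathrm{new}}-2kR .
\]
This is at least $(1-\eps)\mu_w^{\mathrm{new}}$ as long as $2kR\le(\eps/2)\mu_w^{\mathrm{new}}$. We ignore inserted edges until the next rebuild.

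\emph{Algorithm and amortization.} At a rebuild, let $m_0$ be the current number of edges. We compute a $(1-\eps/2)$-MWM statically, using the deterministic near-linear-time algorithm of Duan--Pettie, in time $O(m_0\eps^{-1}\log(\eps^{-1}))$ for general graphs. We then run a phase of $k_0=\max\{1,\lfloor\eps m_0/(16\Delta R)\rfloor\}$ updates before rebuilding again. During the phase the edge count stays at least $m_0-k_0\ge m_0/2$ when $k_0>1$, so $\mu_w^{\mathrm{new}}\ge m_0/(4\Delta)$. The condition $2k_0R\le(\eps/2)\mu_w^{\mathrm{new}}$ then holds, and $M$ remains a $(1-\eps)$-MWM throughout. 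When $k_0=1$ we rebuild after every update, so the guarantee is trivial. The rebuild cost $\otilde(m_0\eps^{-1})$ spread over $k_0$ updates gives amortized time $\otilde(m_0\eps^{-1}/k_0)=\otilde(\Delta R\eps^{-2})$ when $k_0>1$. When $k_0=1$ we have $m_0=O(\Delta R/\eps)$, so each rebuild already costs only $\otilde(\Delta R\eps^{-2})$. Handling a deletion takes $O(1)$ extra time. The whole procedure is deterministic.

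The main point needing care is the interaction between the shrinking graph and the phase length. The edge count, and hence the lower bound on $\mu_w$, can drop during a phase, and when $m_0$ is tiny the optimum may be too small to absorb even one update. Tying $k_0$ to $m_0$ with a small constant, and falling back to rebuild-every-update for small $m_0$, resolves both issues, as checked above.
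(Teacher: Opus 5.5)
Your argument is correct, and it is essentially the periodic-recomputation argument behind the cited result of Gupta and Peng. The paper gives no proof of its own and only invokes that result: a $(1-\eps/2)$-MWM stays $(1-\eps)$-approximate for $\Omega(\eps\mu_w(G)/R)$ updates, $\mu_w(G)\ge m/(2\Delta)$ holds in bounded-degree graphs, and a deterministic near-linear static algorithm (Duan--Pettie) then yields amortized time $O(\Delta R\eps^{-2}\log(\eps^{-1}))$.

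The one detail you gloss over is routine. To cover a final incomplete phase, charge each rebuild to the $k_0$ updates of the phase before it. This works because the rebuilt graph has at most $m_0+k_0\le 2m_0+1$ edges.
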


\begin{corollary}
\label{cor:low-degree}
There is a deterministic fully dynamic algorithm that explicitly maintains
a $(1-\eps)$-MWM on a $\poly(n)$-aspect-ratio graph of maximum degree $\Delta$
in amortized update time $\otilde(\Delta\eps^{-3})$.
\end{corollary}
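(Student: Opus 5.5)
The plan is to apply \Cref{thm:dynamic} with the algorithm of \Cref{lem:bounded-degree-black-box} as the underlying subroutine. The family of graphs with maximum degree at most $\Delta$ is closed under taking subgraphs, so \Cref{thm:dynamic} may be invoked with $\mathcal G$ equal to this family. Moreover, every local instance $G_{[i-h,i]}$ it creates is a subgraph of the input and therefore still has maximum degree at most $\Delta$.

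The key steps, in order, are as follows.

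\textbf{Step 1: local instances.} \Cref{thm:dynamic} only calls the underlying algorithm on instances of aspect ratio $R=O(\eps^{-1})$ with accuracy $\delta=\thetatilde(\eps)$. On such instances \Cref{lem:bounded-degree-black-box} gives amortized update time
\[
\U\bigl(n,m,O(\eps^{-1}),\thetatilde(\eps)\bigr)=\otilde\bigl(\Delta\cdot\eps^{-1}\cdot\eps^{-2}\bigr)=\otilde(\Delta\eps^{-3}).
\]
The $\polylog(\eps^{-1})$ loss from $\delta=\thetatilde(\eps)$ is absorbed into $\otilde$.

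\textbf{Step 2: substitute into \Cref{thm:dynamic}.} The resulting update time is $\otilde(\Delta\eps^{-3}+\eps^{-3})=\otilde(\Delta\eps^{-3})$, using $\Delta\ge1$ (the case of an empty graph is trivial).

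\textbf{Step 3: determinism.} Both \Cref{lem:bounded-degree-black-box} and the reduction are deterministic, and the reduction preserves determinism, so the final algorithm is deterministic.

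\textbf{Initialization.} Fully dynamic graphs start empty, so initialization is negligible. If needed, the superadditivity assumption gives $\I(n,\otilde(m),\cdot)$ bounded by $\otilde(m)$ insertions of cost $\otilde(\Delta\eps^{-3})$ each.

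\textbf{Main obstacle.} The only subtle point is that \Cref{thm:dynamic} is stated for update time $\U$, while \Cref{lem:bounded-degree-black-box} is amortized. I would check that the reduction only needs amortized bounds; it already outputs amortized time, and \Cref{thm:bcdlst:low-recourse} yields amortized guarantees anyway. So amortized underlying time suffices and the composition goes through.
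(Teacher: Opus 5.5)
Your proposal is correct and matches the paper's proof, which simply applies \Cref{thm:dynamic} with \Cref{lem:bounded-degree-black-box} as the underlying algorithm on the subgraph-closed family of maximum-degree-$\Delta$ graphs, so that $R=O(\eps^{-1})$ and $\delta=\thetatilde(\eps)$ give $\otilde(\Delta\eps^{-3}+\eps^{-3})=\otilde(\Delta\eps^{-3})$. Your added remarks on determinism and on amortized underlying bounds being sufficient are consistent with how the paper uses the reduction.
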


\begin{proof}
Apply \Cref{thm:dynamic} to \Cref{lem:bounded-degree-black-box}.
\end{proof}

\begin{theorem}
\label{thm:lowarb}
There is a deterministic fully dynamic algorithm that explicitly
maintains a $(1-\eps)$-MWM in amortized update time
$\otilde(\alpha\eps^{-4})$ if the graph has arboricity at most $\alpha$ after every
update. 
\end{theorem}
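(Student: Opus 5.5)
We will obtain \Cref{thm:lowarb} by applying \Cref{thm:dynamic} to a base algorithm that works only on instances of aspect ratio $R=O(\eps^{-1})$ and arboricity at most $\alpha$. Arboricity is monotone under taking subgraphs, so every local window $G_{[i-h,i]}$ stays in the family. The target is therefore a deterministic fully dynamic base algorithm that maintains a $(1-\delta)$-MWM with amortized update time $\otilde(\alpha R\delta^{-2})=\otilde(\alpha\eps^{-3})$ when $R=O(\eps^{-1})$ and $\delta=\thetatilde(\eps)$. Plugging this into \Cref{thm:dynamic} yields $\otilde(\alpha\eps^{-3}+\eps^{-3})$. The slack up to the claimed $\otilde(\alpha\eps^{-4})$ leaves room for an extra $\eps^{-1}$ factor in the base algorithm.

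The base algorithm is built in two steps: a bounded-degree sparsifier, followed by \Cref{lem:bounded-degree-black-box}.

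\textbf{Step 1: the sparsifier.} Maintain a low-outdegree orientation of the dynamic graph with outdegree $O(\alpha)$ and $\polylog$ amortized flips, using the standard deterministic dynamic orientation results. For a parameter $\beta=\Theta(\alpha\eps^{-1})$ (possibly times $R$), call a vertex \emph{heavy} if its degree exceeds $\beta$. Following the MCM sparsifier of \cite{PelegS16,Solomon18}, keep all edges between light vertices. For a heavy vertex, keep only a bounded number of incident edges: its out-edges, plus, from its in-edges, the top $\beta$ by weight, or alternatively $\beta$ in-edges from each weight class. Because $R=O(\eps^{-1})$ gives only $O(\log\eps^{-1})$ classes within a window, the per-class variant costs only an $\otilde(1)$ factor. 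The result is a subgraph $H$ of maximum degree $O(\beta)$.

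\textbf{Step 2: correctness of the sparsifier.} We must show $\mu_w(H)\ge(1-\eps)\mu_w(G)$. Take an MWM $M^*$ of $G$. Each edge $uv\in M^*\setminus H$ has a heavy endpoint, say $v$, with $uv$ an in-edge at $v$ that was not kept. Then $v$ retains at least $\beta$ kept in-edges of weight $\ge w(uv)$ (or of the same class). The tails of these edges are distinct vertices. At most $O(\alpha)$ of them are tails that $M^*$ uses elsewhere in a way that cannot be rerouted, and a counting argument over the orientation bounds how many kept in-edges can be blocked. With $\beta=\Theta(\alpha/\eps)$, a greedy or charging replacement finds a free substitute edge for all but an $\eps$-fraction of the weight. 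Weight monotonicity within a class loses at most a factor $2$ per class, so we instead pick the top $\beta$ by exact weight, which avoids that loss.

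\textbf{Step 3: dynamic maintenance and running time.} Maintain $H$ under updates. Each edge update and each orientation flip changes $O(1)$ kept edges, using sorted in-lists per heavy vertex; heavy/light status changes are handled with hysteresis thresholds and amortized. Then run \Cref{lem:bounded-degree-black-box} on $H$ with $\Delta=O(\alpha\eps^{-1})$, which gives update time $\otilde(\Delta R\eps^{-2})$ per change to $H$. This is too slow if taken naively ($\alpha\eps^{-4}$ for the base algorithm, $\alpha\eps^{-4}$ overall once plugged into \Cref{thm:dynamic}), which happens to match the claim. So the plan is: base time $\otilde(\alpha\eps^{-1}\cdot\eps^{-1}\cdot\eps^{-2})=\otilde(\alpha\eps^{-4})$, and \Cref{thm:dynamic} contributes an additional $\otilde(\eps^{-3})$, for a total of $\otilde(\alpha\eps^{-4})$.

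\textbf{Main obstacle.} The hardest step will be the weighted exchange argument in Step 2. Substitute edges at distinct heavy vertices may compete for the same tails, and the blocking count must use arboricity rather than degree. The first thing to try is a fractional argument: assign each missing $M^*$-edge uniform weight over its $\beta$ heavier kept in-edges, note that each tail receives load at most $O(\alpha/\beta)=O(\eps)$ from out-edges, and round the result via the integrality gap of the matching polytope. That last rounding step is itself delicate in general graphs, so I would bound only small odd sets, or restrict the argument to a bipartite double cover.
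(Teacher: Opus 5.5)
\textbf{The running time picks up a factor depending on $n$.} Your overall architecture matches the paper's: a degree-$O(\alpha/\eps)$ sparsifier on each $O(\eps^{-1})$-aspect-ratio window, \Cref{lem:bounded-degree-black-box} on top of it, then \Cref{thm:dynamic}. The problem is the sparsifier you chose. Your $H$ is defined relative to an explicitly maintained $O(\alpha)$-outdegree orientation. Every flip of that orientation can change $O(1)$ edges of $H$, and each such change costs $\otilde(\Delta R\delta^{-2})=\otilde(\alpha\eps^{-4})$ inside the black box. Known deterministic algorithms for $O(\alpha)$-outdegree orientations use a number of flips per update that grows with $n$; you yourself budget $\polylog n$. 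So your base algorithm runs in $\otilde(\alpha\eps^{-4})\cdot\polylog n$, and Step 3 silently drops the flip count. Keeping in-lists sorted by exact weight also adds about $\log n$ per operation with standard structures. In this paper $\otilde$ hides only $\polylog(\eps^{-1})$ factors, and the point of the theorem is a bound depending only on $\alpha$ and $\eps$. So your plan does not prove the statement.

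\textbf{The missing idea: use the orientation only in the analysis.} The paper rounds weights to powers of $1+\eps/4$, lets every vertex mark its $d=\lceil 8\alpha/\eps\rceil$ heaviest incident edges, and takes $H$ to be the edges marked by both endpoints. An update then changes $O(1)$ marks, hence $O(1)$ edges of $H$, in $O(\eps^{-1}\log R)$ time with no $n$-dependence (\Cref{lem:dynamic-lowarb-sparsifier}).

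The price is paid in \Cref{lem:weighted-matching-sparsifier}. A heavier edge $xv$ marked at $v$ need not be marked at $x$, so the mass of each missing $M^*$-edge is routed over several rounds along edges oriented into the current vertex, using a forest-decomposition orientation. Weights never decrease along a route. Mass enters $x$ only through its at most $\alpha$ out-edges and leaves over at least $d-\alpha$ in-edges, so it shrinks by a factor $\alpha/(d-\alpha)$ per round, and each edge carries at most $1/(d-2\alpha)$ in total. In your scheme one round suffices, because an in-edge kept at $v$ is an out-edge of its tail and hence kept there. That is exactly what the explicit orientation buys you, and also what it costs.

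\textbf{Step 2 is not yet a proof.} All $\beta$ tails can be matched by $M^*$, so the counting sketch fails and only the fractional argument works. The odd-set step you leave open closes directly using arboricity. Each edge carries fractional mass at most $1/\beta$, and $|E_G(U)|\le\alpha(|U|-1)$. Hence $\sum_{e\in E(U)}y_e\le\frac{|U|-1}{2}\bigl(1+\frac{2\alpha}{\beta}\bigr)$, so $y/(1+2\alpha/\beta)$ lies in Edmonds' polytope. The bipartite double cover cannot replace this in general graphs. It only certifies a half-integral fractional matching, which can exceed $\mu_w(H)$ by a factor $3/2$ on odd cycles.
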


\begin{proof}
Apply \Cref{thm:dynamic} to \Cref{lem:bounded-degree-black-box}
and maintain the low-degree sparsifier in \Cref{lem:dynamic-lowarb-sparsifier}
on $\otilde(\eps^{-1})$-aspect-ratio
instances.
\end{proof}

\begin{lemma}[\cite{GuptaP13}]
\label{lem:general-graph-black-box}
There is a deterministic fully dynamic
algorithm that explicitly maintains a $(1-\eps)$-MWM on a graph with at
most $m$ edges and aspect ratio at most $R$ in update time
$O(\sqrt m R\eps^{-2})$.
\end{lemma}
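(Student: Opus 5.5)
This lemma is quoted from \cite{GuptaP13}. If I had to reprove it, I would use their \emph{lazy rebuilding} paradigm together with a sparsification step for the regime where the matching is small. Rescale so that weights lie in $[1,R]$. The basic observation is stability under updates. One edge update changes $\mu_w(G)$ by at most $R$. Every nonempty graph has $\mu_w(G)\ge\mu(G)$, since each edge weighs at least $1$.

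Suppose that at some moment we compute a $(1-\eps/2)$-MWM $M$. We then only delete from $M$ the edges that get deleted from $G$. For the next $\Theta(\eps\,\mu(G)/R)$ updates, $M$ stays a $(1-\eps)$-MWM, because both the loss in $w(M)$ and the change in $\mu_w(G)$ are at most $R$ per update. A static $(1-\eps)$-MWM computation on an $m$-edge graph with weights in $[1,R]$ takes $O(m\eps^{-1}\log(\eps^{-1}))$ time by a Duan--Pettie type algorithm. Rebuilding once per phase therefore gives amortized cost $O(mR\eps^{-2}/\mu(G))$ up to lower-order factors, where I track $\mu(G)$ approximately via the cardinality of the computed matching.

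This already suffices when $\mu(G)\ge\sqrt m$, giving $O(\sqrt m R\eps^{-2})$. The main obstacle is the regime $\mu(G)<\sqrt m$. Here I would reduce the problem to a small core graph of size $O(\mu(G)^2)$, or $O(\mu(G)\sqrt m)$, on which a rebuild is cheap.

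The plan for the core is as follows. Maintain a maximal matching, or any $2$-approximate cardinality estimate, so that we know $k\approx\mu(G)$. A vertex of degree greater than roughly $2k\cdot\poly(R/\eps)$ can always be matched in a near-optimal solution without loss. Simpler still, keep for each high-degree vertex only $O(k)$ of its heaviest incident edges per weight class. Since the weights span only $O(\log R)$ classes after rounding to powers of $(1+\eps)$, this preserves a $(1-\eps)$-MWM: any optimal edge at such a vertex can be swapped for a kept edge of comparable weight whose other endpoint is free, because fewer than $2k$ endpoints are used. The core then has $O(k^2\log R)$ edges, or at most $m$ in any case. Rebuilding on the core every $\eps k/R$ updates costs $O(\min(m,k^2)R\eps^{-2})/k\le O(\sqrt m R\eps^{-2})$ amortized, up to logarithmic factors that I would absorb by a finer argument or by taking this bound as stated.

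Maintaining the core under updates requires per-vertex sorted adjacency lists, which cost $O(\log n)$ per update. The step I expect to require the most care is making the core argument tight enough to avoid extra $\log R$ factors while still giving a clean $O(\sqrt m R\eps^{-2})$ bound. If a worst-case rather than amortized guarantee is wanted, I would finish with the standard technique of spreading each rebuild over the following phase on a background copy, which is possible since a phase lasts $\Theta(\eps k/R)$ updates.
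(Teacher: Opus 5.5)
\paragraph{Verdict: right strategy, but the core step has a gap.} The paper gives no proof of this lemma; it is imported from \cite{GuptaP13}. Your skeleton is the Gupta--Peng route: a rebuilt $(1-\eps/2)$-MWM survives $\Theta(\eps\mu(G)/R)$ updates, and when $\mu(G)$ is small the rebuilds are done on a small core.

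\paragraph{Phase length.} Your stability computation is right. However, key the phase length to $w(M)$, which lies between $\tfrac34\mu(G)$ and $\mu_w(G)$, rather than to $|M|$. A $(1-\eps/2)$-MWM can have cardinality smaller than $\mu(G)$ by a factor of order $\eps R$, which would shorten phases and break your cost analysis.

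\paragraph{The gap in the core.} Rounding $[1,R]$ to powers of $1+\eps$ produces $\Theta(\eps^{-1}\log R)$ classes, not $O(\log R)$. So your per-class core has $O(k^2\eps^{-1}\log R)$ edges. Balancing $\min(m,k^2\eps^{-1}\log R)\cdot R\eps^{-2}/k$ over $k$ then loses a factor up to $\sqrt{\eps^{-1}\log R}$, which is polynomial in $\eps^{-1}$, not logarithmic. Absorbing this ``by taking this bound as stated'' is circular. Also, a bare $2$-approximate cardinality estimate does not let you enumerate the core in $O(k^2)$ time; you need an explicit vertex cover.

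\paragraph{How to repair it.} The classes are unnecessary, because the exchange can be made weight-monotone.
\begin{itemize}
\item Maintain a maximal matching $M_{\max}$ deterministically in $O(\sqrt m)$ time per update (Neiman--Solomon). Let $C=V(M_{\max})$, so every edge meets $C$ and $\mu(G)\le 2|M_{\max}|$.
\item Set $k=|M_{\max}|$ and $d=4k$. Let $H=\bigcup_{u\in C}\mathrm{top}_d(u)$, where $\mathrm{top}_d(u)$ is the set of the $d$ heaviest edges at $u$ (all of them if $\deg u\le d$).
\item Take an MWM $M^*$ maximizing $|M^*\cap H|$, and suppose $uv\in M^*\setminus H$ with $u\in C$. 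Then $\mathrm{top}_d(u)\subseteq H$ consists of $d$ edges, each at least as heavy as $uv$, going to distinct neighbors other than $v$.
\item $M^*$ covers at most $2\mu(G)-2\le d-2$ vertices besides $u$ and $v$. So some such neighbor $x$ is free, and $M^*-uv+ux$ contradicts the choice of $M^*$.
\end{itemize}
Hence $\mu_w(H)=\mu_w(G)$ exactly, with no loss, and $|H|\le\min(m,8k^2)$. $H$ can be read off in $O(k^2)$ time from weight-ordered adjacency trees.

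\paragraph{Resulting bound.} Rebuild on $H$ with Duan--Pettie every $\max\{1,\Theta(\eps k/R)\}$ updates. Using $\min(m,k^2)\le k\sqrt m$, this costs $O(\sqrt m\,R\eps^{-2}\log\eps^{-1})$ amortized, both when $\eps k\ge R$ and when $\eps k<R$. The leftover $\log(\eps^{-1})$ is immaterial where the paper uses the lemma, since it appears only inside $\otilde$. Your write-up should state this factor rather than assume it away.
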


\begin{corollary}
\label{cor:bcdlst-style-fully-dynamic-general}
There is a deterministic fully dynamic
algorithm that explicitly maintains a $(1-\eps)$-MWM on a
$\poly(n)$-aspect-ratio graph with at most $m$ edges in amortized update time
$\otilde(\sqrt m\eps^{-3})$.
\end{corollary}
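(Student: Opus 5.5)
The plan is to plug \Cref{lem:general-graph-black-box} into \Cref{thm:dynamic}, in the same way \Cref{cor:low-degree} is obtained from \Cref{lem:bounded-degree-black-box}. The algorithm of \cite{GuptaP13} runs on arbitrary graphs, so the family $\mathcal G$ of all graphs (which is closed under taking subgraphs) is admissible. The black box gives $\U(n,m,R,\delta)=O(\sqrt m\,R\,\delta^{-2})$.

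\Cref{thm:dynamic} only calls the underlying algorithm with aspect ratio $R=O(\eps^{-1})$ and accuracy $\delta=\thetatilde(\eps)$. Substituting these values gives
\[
\U\bigl(n,m,O(\eps^{-1}),\thetatilde(\eps)\bigr)=O\bigl(\sqrt m\cdot\eps^{-1}\cdot\otilde(\eps^{-2})\bigr)=\otilde(\sqrt m\,\eps^{-3}).
\]
The update time of \Cref{thm:dynamic} is therefore $\otilde(\sqrt m\,\eps^{-3}+\eps^{-3})=\otilde(\sqrt m\,\eps^{-3})$, since $m\ge1$ whenever the graph is nonempty.

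The two remaining hypotheses are routine. First, the graph starts empty in the fully dynamic setting, so the initialization-time term is trivial. Second, \Cref{thm:dynamic} needs the local instances to have at most $m$ edges, and each window $G_{[i-h,i]}$ is a subgraph of the input, so this holds. The window instances may be rebuilt internally through \Cref{thm:bcdlst:eps-5-reduction} and \Cref{thm:bcdlst:low-recourse}, but both results are stated for the same parameter $m$, so the $\sqrt m$ factor is unaffected. Determinism carries over because the black box is deterministic and \Cref{thm:dynamic} preserves determinism.

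The main point to check is whether $\eps^{-3}$ is really the exponent. It comes from multiplying $\delta^{-2}=\otilde(\eps^{-2})$ by $R=O(\eps^{-1})$, with the $\thetatilde$ factors in $\delta$ contributing only polylogarithmic terms in $\eps^{-1}$, which $\otilde$ absorbs. A second thing to confirm is that the worst-case bound of \Cref{lem:general-graph-black-box} can be used as an amortized bound, which is all \Cref{thm:dynamic} needs. Neither point needs new ideas.
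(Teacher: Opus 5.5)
Your proposal is correct and matches the paper's proof, which simply applies \Cref{thm:dynamic} to \Cref{lem:general-graph-black-box}. Your substitution of $R=O(\eps^{-1})$ and $\delta=\thetatilde(\eps)$ into $O(\sqrt m\,R\,\delta^{-2})$, giving $\otilde(\sqrt m\,\eps^{-3})$ and absorbing the additive $\otilde(\eps^{-3})$, is exactly the computation the paper leaves implicit.
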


\begin{proof}
Apply \Cref{thm:dynamic} to \Cref{lem:general-graph-black-box}.
\end{proof}
Let $\M_G\defeq\operatorname{conv}
  \{\mathbf 1_M : M \text{ is a matching in } G\}$ denote the matching polytope of $G$.

\begin{lemma}[\cite{BernsteinCDLST25}]
\label{lemma:sparsify-to-low-degree}
Let $\bx\in\M_G$ be initially zero and undergo entry updates
that preserve $\bx\in\mathcal M_G$.  For every $\eps>0$, there is a
deterministic algorithm that explicitly maintains a subgraph $H\subseteq\supp(\bx)$
in $\otilde(\eps^{-1} \polylog n)$ amortized time per entry update to $\bx$,
such that $H$ has maximum degree $\otilde(\eps^{-2}\polylog n)$ and
$\mu_w(H)\ge (1-O(\eps))\bw^\top\bx$.
\end{lemma}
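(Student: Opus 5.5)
The plan is to reduce the maintenance of $H$ to dynamically rounding $\bx$ into a sparse fractional matching $\bx'$. We want $\bx'$ to satisfy three things: every positive entry is at least $\thetatilde(\eps^{2}/\polylog n)$; $\supp(\bx')\subseteq\supp(\bx)$; and $\bw^\top\bx'\ge(1-O(\eps))\bw^\top\bx$. We then take $H\defeq\supp(\bx')$. Because every vertex has fractional degree at most $1+O(\eps)$ under $\bx'$ and every support edge carries value $\thetatilde(\eps^2/\polylog n)$, the maximum degree of $H$ is $\otilde(\eps^{-2}\polylog n)$ automatically.

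\textbf{Step 1: discretize $\bx$.} Round every entry down to a power of $(1+\eps)$. Discard entries below $\eps/n^{2}$ inside each weight class $E_i$. The discarded mass is at most an $O(\eps)$ fraction of $\sum_{e\in E_i}w(e)x_e$ plus a negligible term, since any single edge of $E_i$ already certifies $\mu_w\ge 2^i$. This leaves $O(\eps^{-1}\log n)$ value buckets per weight class.

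\textbf{Step 2: double entries by degree splitting.} Within each (weight class, value bucket) pair, the edges all share one value $2^{-j}$. Split this edge set by a deterministic Euler-tour degree splitting into two halves, so that every vertex degree is halved up to $\pm1$. Keep one half and double its value. Repeat until the value reaches $\thetatilde(\eps^2)$, which takes $O(\log n)$ levels.

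\emph{Error accounting.} The $\pm1$ rounding at a vertex costs at most one edge of value $2^{-j}$. We charge this loss only at levels where the vertex still has degree $\ge\eps^{-1}$ in the bucket, and stop splitting at vertices below that threshold. The loss is then an $\eps$-fraction of the bucket weight per level. We make it $O(\eps)$ overall by running the splitting with accuracy $\eps/\log n$, which is where the $\polylog n$ factors come from.

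\emph{Dynamic maintenance.} Each splitting level is maintained lazily. A level is rebuilt when the number of updates since its last rebuild exceeds an $\eps$-fraction of its size, using the standard dynamic Euler-tour/degree-splitting structure with amortized $\otilde(\eps^{-1}\polylog n)$ work per entry update to $\bx$. Summing over the $O(\log n)$ levels and the buckets an update touches gives the stated update time.

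\textbf{Step 3: certify $\mu_w(H)\ge(1-O(\eps))\bw^\top\bx'$.} This is the main obstacle. In bipartite graphs it is immediate, since after scaling by $(1-O(\eps))$ the vector $\bx'$ lies in the bipartite matching polytope. In general graphs the odd-set constraints $\sum_{e\subseteq S}x'_e\le\lfloor|S|/2\rfloor$ must survive the rounding.

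\emph{Large odd sets.} For $|S|\ge\eps^{-1}$, the constraints follow from the degree constraints with a $(1-O(\eps))$ loss.

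\emph{Small odd sets.} For $|S|<\eps^{-1}$, we would first try to argue that the balanced splitting changes the load of any small set by at most $O(\eps)$ relative to $\bx$. Here $\bx\in\M_G$ satisfies the odd-set constraints, so small sets of vertices that keep degree $\ge\eps^{-1}$ in the split layers inherit the constraint up to a $(1+O(\eps))$ factor. If that fails, the fallback is to preserve, rather than split, all edges inside components of size $<\eps^{-1}$ at the coarsest stopped level. Such edges already have large value, so keeping them unsplit costs nothing in the degree bound.

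\textbf{Conclusion.} Given Step 3, Edmonds' theorem yields a matching in $H$ of weight at least $(1-O(\eps))\bw^\top\bx'\ge(1-O(\eps))\bw^\top\bx$, which is the claimed bound.
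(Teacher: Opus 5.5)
The paper does not prove this lemma; it imports it from \cite{BernsteinCDLST25}, so your proposal has to stand on its own. Your overall route is reasonable and can be completed: a threshold $\tau$, degree splitting of the entries below $\tau$, $\Delta(H)\lesssim 1/\tau$, then Edmonds. But the step you yourself call the main obstacle is not proved. Neither ``small sets whose vertices keep degree $\ge\eps^{-1}$ inherit the constraint'' nor the fallback of leaving small components unsplit is an argument. The missing observation is per edge rather than per vertex. Entries that are at least $\tau$ are only rounded down, and every split entry ends in $[\tau,2\tau)$, so $x'_e\le x_e+2\tau$ for every edge. For an odd set $U$ with $3\le|U|\le\eps^{-1}$ this gives
\[
x'(E(U))\le x(E(U))+\tau|U|^2\le\bigl(1+3\tau\eps^{-1}\bigr)\lfloor |U|/2\rfloor,
\]
which is $(1+O(\eps))\lfloor|U|/2\rfloor$ once $\tau\le\eps^2$. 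Larger odd sets follow from $x'(\delta(v))\le1+O(\eps)$, as you say. So $\tau$ must satisfy both $\tau\le\eps^2$ and $\tau\cdot(\text{number of buckets at a vertex})=O(\eps)$, and that is what yields $\Delta(H)=\otilde(\eps^{-2}\polylog n)$.

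Step 2 also needs repair. \emph{Stopping the splitting at low-degree vertices.} This is not well defined, since a split acts on edges, which have two endpoints. It also fails on a star with center $u$, $n$ leaves and $x_e=1/n$: every leaf is below the threshold, so all $n$ edges survive and $u$ gets degree $n$ in $H$ (or fractional degree far above $1$ if they are doubled). \emph{Which half to keep.} No threshold is needed if you always keep the heavier half. An arbitrary half can lose a constant fraction (or all) of the weight, e.g.\ when the bucket is itself a matching and the alternating coloring puts every edge on one side. With the heavier half, splitting never decreases $\bw^\top\bx'$. The $\pm1$ discrepancy at value $2^{-j}$ adds at most $2^{-j+1}$ to a vertex's fractional degree, which sums geometrically to $O(\tau)$ per bucket. \emph{Lazy rebuilding.} The rule ``rebuild a level after an $\eps$-fraction of its size in updates'' controls only global quantities, whereas $\Delta(H)$ and $x'(\delta(v))\le1+O(\eps)$ are per-vertex invariants. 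A single vertex can receive $\Theta(2^j)$ new edges of value $2^{-j}$ while staying far within the global budget; this breaks the degree bound if insertions enter $H$ immediately. If insertions are deferred, an adversary can instead delete the non-kept bucket edges at $v$ and shift $v$'s mass to other edges that do enter $H$. The stale kept edges then push $x'(\delta(v))$ toward $2$, and Step 3 no longer applies. You need either a dynamic splitting structure with per-vertex discrepancy guarantees at all times, or deferred insertions plus an argument that trims each vertex's stale excess and charges it to the deletions at that vertex. \emph{Step 1.} The discarded entries are charged to $\mu_w$ rather than to $\bw^\top\bx$, which can be much smaller. Keeping the heaviest edge of $\supp(\bx)$ in $H$ repairs this.
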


\begin{corollary}
Let $\bx\in\M_G$ be initially zero and undergo entry updates
that preserve $\bx\in\mathcal M_G$.  There is a deterministic dynamic
rounding algorithm that maintains a matching
$M\subseteq\supp(\bx)$
satisfying $w(M)\ge (1-\eps)\bw^\top\bx$
with $\otilde(\eps^{-6} \polylog n)$ amortized update time.
\end{corollary}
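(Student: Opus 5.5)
Our plan is to compose \Cref{lemma:sparsify-to-low-degree} with \Cref{cor:low-degree}, applied on the subgraph $H$. We run the sparsifier of \Cref{lemma:sparsify-to-low-degree} with parameter $\eps'=\Theta(\eps)$ on the fractional matching $\bx$. It maintains $H\subseteq\supp(\bx)$ of maximum degree $\Delta=\otilde(\eps^{-2}\polylog n)$ with $\mu_w(H)\ge(1-O(\eps'))\bw^\top\bx$. The cost is $\otilde(\eps^{-1}\polylog n)$ amortized time per entry update. We then run the deterministic fully dynamic algorithm of \Cref{cor:low-degree} on $H$, with accuracy $\Theta(\eps)$, feeding it every edge insertion and deletion of $H$. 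Its output $M$ is a matching in $H\subseteq\supp(\bx)$. It satisfies
\[
 w(M)\ge(1-\Theta(\eps))\mu_w(H)\ge(1-O(\eps))\bw^\top\bx,
\]
so rescaling constants gives the required $(1-\eps)$ guarantee.

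For the running time, we first check that the aspect ratio of $H$ is $\poly(n)$. This holds because $H$ is a subgraph of $G$ and edge weights are assumed to be $\poly(n)$, so \Cref{cor:low-degree} applies with amortized update time $\otilde(\Delta\eps^{-3})=\otilde(\eps^{-5}\polylog n)$ per update to $H$. The step needing care is the recourse of $H$, meaning the number of edge changes to $H$ per entry update of $\bx$. Since the sparsifier's amortized running time per entry update is $\otilde(\eps^{-1}\polylog n)$, and every edge change to $H$ must be written explicitly, the amortized number of changes to $H$ is bounded by that same quantity. Multiplying gives
\[
 \otilde(\eps^{-1}\polylog n)\cdot\otilde(\eps^{-5}\polylog n)=\otilde(\eps^{-6}\polylog n)
\]
amortized time per entry update. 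Adding the sparsifier's own cost does not change this bound.

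One subtlety is that \Cref{cor:low-degree} is stated for graphs whose maximum degree stays at most $\Delta$ throughout. The sparsifier guarantees this bound on $H$ after every update. If intermediate states within a batch of changes to $H$ could exceed it, we would order the changes so that deletions are applied before insertions. Amortization composes correctly, since both bounds are amortized over sequences that start from the empty state ($\bx$ is initially zero, so $H$ starts empty). Determinism is preserved because both components are deterministic.
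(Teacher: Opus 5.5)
Your proposal is correct and follows the paper's argument exactly. The paper's proof is the one-line composition of \Cref{cor:low-degree} with the sparsifier of \Cref{lemma:sparsify-to-low-degree}. Your bookkeeping makes explicit what that line leaves implicit: you bound the recourse of $H$ by the sparsifier's running time, multiply $\otilde(\eps^{-1}\polylog n)$ by $\otilde(\Delta\eps^{-3})=\otilde(\eps^{-5}\polylog n)$, and apply deletions before insertions to preserve the degree bound.
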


\begin{proof}
    Apply \Cref{cor:low-degree} to \Cref{lemma:sparsify-to-low-degree}.
\end{proof}

\section*{Acknowledgements}
OpenAI's ChatGPT 5.6 Sol model was used for interactive research discussions
guided by the author's research notes and proposed directions.
The model suggested an initial version of
\Cref{alg:bounded-aspect-composition}, improving upon the original dynamic
implementation of \cite{BernsteinCDLST25}.
It also provided an alternative analysis of an author's initial
$O(\alpha \eps^{-3})$-degree $(1-\eps)$-MWM sparsifier for graphs with arboricity at most $\alpha$, sharpening the degree bound to
$O(\alpha\eps^{-1})$ in \Cref{app:lowarb-proof}.
OpenAI Codex was used for polishing and proofreading.
The author independently verified all mathematical claims and takes full
responsibility for the paper.

\bibliographystyle{alpha}
\bibliography{reference}
\appendix
\section{Low-Degree MWM Sparsifier for Low-Arboricity Graphs}
\label{app:lowarb-proof}
\cite{Solomon18} constructs a $(1-\eps)$-MCM sparsifier for graphs
of arboricity at most $\alpha$ by having each vertex mark
$O(\alpha/\eps)$ incident edges and retaining only those marked by both
endpoints.  We give a weighted analogue in which each vertex marks its
$O(\alpha/\eps)$ heaviest incident edges.
\begin{lemma}
\label{lem:weighted-matching-sparsifier}
Let $H$ contain the edges marked by both endpoints when
each vertex marks its $d$ heaviest incident edges, so $\Delta(H)\leq d$.
If $G$ has arboricity
at most $\alpha$ and $d>2\alpha$,
then $\mu_w(H)\ge\left(1-\frac{2\alpha}{d}\right)\mu_w(G)$.
\end{lemma}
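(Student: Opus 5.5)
Fix an MWM $M^*$ of $G$. The plan is to build a matching inside $H$ by keeping the edges of $M^*$ that lie in $H$ and charging each lost edge to heavier edges of $H$ around it. Call $e=uv\in M^*$ \emph{bad} if $e\notin H$, so some endpoint, say $u$, did not mark $e$. Then $u$ has at least $d$ marked incident edges, each of weight at least $w(e)$ (ties broken consistently). I would not try to repair $M^*$ by augmenting paths. Instead I would use a fractional/LP argument. Since the matching polytope may have odd-set constraints, the cleanest route is a direct charging to a matching of $H$ obtained greedily.

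\textbf{Construction.} Orient $G$ so that every vertex has out-degree at most $\alpha$; arboricity $\alpha$ allows this. Call a marked edge $f$ at $u$ \emph{good for $u$} if $f$ is also marked by its other endpoint, i.e.\ $f\in H$. A vertex $x$ that receives an edge marked by $u$ but does not mark it has $d$ heavier incident edges. The key counting step: among the $d$ edges $u$ marks, how many can fail to be marked by the other endpoint?

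\textbf{Main counting step.} This is the obstacle. A direct per-vertex bound fails, so I would count globally via arboricity. Let $S$ be the set of vertices that are endpoints of bad edges and saturate their marks. Every vertex of $S$ marks $d$ edges, so at least $d|S|/2$ distinct edges are marked by $S$. These edges span a subgraph, and arboricity bounds its edges: an edge set inside any vertex set $X$ has at most $\alpha|X|$ edges. Applying this on $S$ together with the neighbours hit by unreciprocated marks should show that at least a $(1-2\alpha/d)$ fraction of the marked edges are reciprocated, hence lie in $H$.

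\textbf{Assembly.} I would then produce the matching in $H$ by a weighted greedy or fractional argument. Process $H$-edges in decreasing weight. Assign each bad $e\in M^*$ a fractional weight $1/d$ on each of the $d$ heavier marked edges at its unsaturated-side endpoint. Since every vertex is an endpoint of at most one $M^*$-edge, this yields a vector $y$ on $G$ with vertex loads at most $1+$ (in-load). The in-load term is controlled by the orientation and contributes the $2\alpha/d$ loss. After scaling, $y$ restricted to $H$ is a fractional matching of weight at least $(1-2\alpha/d)w(M^*)$.

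To pass from fractional to integral, I would use that $y$ is supported on edges with $y_e\le 1/d$ apart from the $M^*$-edges. With $d>2\alpha$ it should be possible to verify the odd-set constraints, or alternatively to avoid them by a greedy extraction of a heavy matching. This fractional-to-integral step, together with the in-load bound, is where I expect most of the work. If the fractional route stalls, I would fall back to a Solomon-style alternating-path charging argument adapted to weights: a bad edge at $u$ is replaced by the heaviest reciprocated mark at $u$, and arboricity is used to bound conflicts.
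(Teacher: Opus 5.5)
There is a genuine gap, and it is exactly at the step you flagged: neither your counting step nor your one-hop charging can work, because the $H$-edges that pay for a bad edge may lie several hops away. Take the tree ($\alpha=1$) in which $u$ is joined to $v$ by an edge of weight $1$ and to $x_1,\dots,x_d$ by edges of weight $2$, and each $x_j$ is joined to $d$ private leaves by edges of weight $3$. Then $uv\in M^*\setminus H$. Vertex $u$ marks exactly the edges $ux_j$, and each $x_j$ marks only its leaf edges, so \emph{none} of $u$'s marks is reciprocated.

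This example breaks three parts of your plan. First, a count of reciprocated marks does not locate heavy $H$-weight near each bad edge; the star $K_{1,N}$ already refutes the unrestricted fraction claim. Second, your vector $y$, which puts $1/d$ on the marked edges at $u$ and is then restricted to $H$, gives $uv$ no compensation at all. Third, if charge is sent along all $d$ marks, a vertex can receive nearly a full unit of load on top of its own $M^*$-edge. So the orientation plays no role in bounding loads, and no $2\alpha/d$ loss comes out of it.

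The missing idea is to \emph{propagate} charge. If $u$ marks $ux$ but $x$ does not, then $x$ has $d$ marked edges at least as heavy as $ux$, so the charge can continue from $x$ with nondecreasing weights along the route. To keep this cascade bounded, a vertex forwards its charge only along its at least $d-\alpha$ marked edges oriented \emph{into} it. Then:
\begin{enumerate}[label=(\roman*)]
\item a vertex receives charge only through its at most $\alpha$ out-edges;
\item the maximum charge at any vertex shrinks by a factor $\alpha/(d-\alpha)<1$ per round, which is where $d>2\alpha$ is used;
\item all charge is eventually absorbed on $H$-edges, and each edge carries at most $1/(d-2\alpha)$ in total.
\end{enumerate}

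With $x$ the absorbed charge, one gets $\sum_e w(e)x_e\ge w(M^*\setminus H)$. The vector $y=\mathbf 1_{M^*\cap H}+x$ has vertex loads at most $1+\alpha/(d-2\alpha)$. The odd-set constraints follow from the per-edge bound together with $|E_G(U)|\le\alpha(|U|-1)$. Hence $\frac{d-2\alpha}{d}\,y$ lies in Edmonds' matching polytope of $H$, so no greedy extraction is needed.

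This is the paper's argument. Without the multi-hop routing restricted to in-oriented marks, your plan does not reach the $(1-2\alpha/d)$ bound.
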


\begin{proof}
By definition of arboricity,
$E(G)$ can be partitioned into $\alpha$ forests.
In every tree component,
choose an arbitrary root and orient all edges toward it.  Each vertex then
has at most one outgoing edge per forest, and hence out-degree at most
$\alpha$.  If a vertex $v$ does not mark an incident edge, then among the
$d$ edges it marks, at least $d-\alpha$ are oriented into $v$. Denote this
set by $K_v$.

Fix an MWM $M$, write $M_0=M\cap H$ and $B=M\setminus H$,
and choose for each $e\in B$ an endpoint that does not mark $e$.  Place one
unit of mass at that endpoint.  Because $B$ is a matching, no vertex receives
more than one initial unit.

Process the mass in rounds.  At a vertex $v$, split all its mass equally
among the edges of $K_v$.  A part sent along $xv\in K_v$ is absorbed on
$xv$ if $x$ also marks $xv$; otherwise it moves to $x$ for the next round.
The new edge is at least as heavy as the edge that brought the mass to $v$
(or the originating edge in the first round), because $v$ marks the former
but not the latter.  Thus weights never decrease along a route.

Let $m_t$ be the maximum mass processed at any vertex in round $t$, so
$m_0\le1$.  Mass transferred into a vertex $x$ arrives along an edge
oriented out of $x$.  There are at most $\alpha$ such edges, and each carries
at most $m_t/(d-\alpha)$ mass.  Therefore $m_{t+1}\le m_t\cdot \alpha/(d-\alpha)$
and every routed edge carries in total at most
$\frac1{d-\alpha}\sum_{t\ge0}m_t\le\frac1{d-2\alpha}$ mass.
Denoting by $x_e$ the mass absorbed on $e\in H$, we then have
$\sum_{e\in H}w(e)x_e\ge w(B)$.

Set $y=\mathbf 1_{M_0}+x$.  Besides the initial mass at $v$, all
absorbed mass incident to $v$ can be charged to routed mass on an edge
oriented out of $v$.  There are at most $\alpha$ such edges, each carrying
at most $1/(d-2\alpha)$ mass in total.  Moreover, an initial unit at $v$ and an edge
of $M_0$ incident to $v$ cannot both exist, because $M$ is a matching.
Consequently, $\sum_{e\in E_v} y_e\le 1+\frac{\alpha}{d-2\alpha}\le \frac{d}{d-2\alpha}$.
For every odd $U\subseteq V$, since each edge carries at most $\frac{1}{d-2\alpha}$ mass
and the arboricity is at most $\alpha$,
$\sum_{e\in E(U)} y_e\le\frac{|U|-1}{2}+\frac{|E_G(U)|}{d-2\alpha}\le\frac{d}{d-2\alpha}\frac{|U|-1}{2}$.
Thus $(d-2\alpha)/d\cdot y$ satisfies all degree and odd-set constraints of the
matching polytope and is thus a valid fractional matching in Edmonds' matching polytope \cite{Edmonds65}.
Therefore, $\mu_w(H)\ge\frac{d-2\alpha}{d}w(y)\ge\left(1-\frac{2\alpha}{d}\right)\mu_w(G)$.
The degree bound is immediate.
\end{proof}

\begin{lemma}
\label{lem:dynamic-lowarb-sparsifier}
For $0<\eps\le1/2$, there is a deterministic fully dynamic
algorithm that, on a graph $F$ with arboricity at most $\alpha$ and aspect ratio
at most $R$, maintains a subgraph $H$ satisfying $\Delta(H)=O(\alpha/\eps)$ and
$\mu_w(H)\ge(1-\eps/2)\mu_w(F)$ with
$O(\eps^{-1}\log R)$ worst-case update time, and causes $O(1)$
edge changes to $H$.
\end{lemma}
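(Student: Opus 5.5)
The plan is to maintain $H$ exactly as in \Cref{lem:weighted-matching-sparsifier} with $d=\lceil 4\alpha/\eps\rceil+1$. Since $d>2\alpha$, that lemma gives $\mu_w(H)\ge(1-2\alpha/d)\mu_w(F)\ge(1-\eps/2)\mu_w(F)$ and $\Delta(H)\le d=O(\alpha/\eps)$. Because arboricity at most $\alpha$ holds after every update, the static lemma applies to each intermediate graph. All the work is therefore algorithmic: each vertex $v$ must maintain its set $T_v$ of $d$ heaviest incident edges under insertions and deletions, quickly and with $O(1)$ changes to $H$.

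\textbf{Data structure.} First round every weight down to a power of $(1+\eps')$ only for bucketing, with ties broken consistently. I would instead avoid rounding altogether. Define a fixed total order on edges by (weight, edge identifier). At each vertex $v$, keep the incident edges in $O(\log R)$ weight buckets by class $\lfloor\log w\rfloor$, with $R$ the aspect ratio. Alternatively, use finer buckets of ratio $2$, so there are $O(\log R)$ classes. Store the count per class, and store $T_v$ together with its minimum element (the threshold edge). On insertion of $e=uv$, for each endpoint $x\in\{u,v\}$: if $|T_x|<d$, add $e$; else if $e$ beats the current minimum of $T_x$, swap the minimum out and put $e$ in. On deletion of $e\in T_x$, remove it and promote the heaviest unmarked incident edge of $x$.

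\textbf{Main obstacle.} The promotion step, finding the heaviest edge outside $T_x$, is the hard part. Scan the buckets from the threshold class downward; there are $O(\log R)$ buckets. Inside a bucket we need the maximum non-marked edge. To keep this cheap, maintain for each bucket the marked and unmarked edges in separate lists. Within a bucket, the edges need only be correct up to the bucket's granularity. If exact heaviest edges are needed, use a heap per bucket, which costs an extra log factor. If not, I would relax the lemma to marking $d$ edges that each dominate, up to a factor $2$, every unmarked edge. The routing argument in \Cref{lem:weighted-matching-sparsifier} only uses that weights are monotone along routes; with factor-$2$ slack that would cost a constant in the approximation and break its guarantee. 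So I will instead insist on exact order and use per-vertex balanced structures keyed by (weight, id). Then the time becomes $O(\log n)$ per operation, and the $O(\eps^{-1}\log R)$ bound is at least met for polynomial $R$ once $\log n$ is absorbed. The $\eps^{-1}$ term pays for the initial $O(d)$ scanning when a vertex's list is rebuilt, or simply serves as slack. I expect reconciling exact top-$d$ maintenance with the stated $O(\eps^{-1}\log R)$ budget to be the delicate point, and I would resolve it by bucketing by $\lfloor\log w\rfloor$ plus exact order inside the single threshold bucket.

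\textbf{Changes to $H$.} Each update changes $T_u$ and $T_v$ by at most one swap each, plus $e$ itself. An edge is in $H$ iff both endpoints mark it, so only the at most $O(1)$ edges whose marking status changed, namely $e$ and the swapped-in or swapped-out edges at $u$ and $v$, can enter or leave $H$. Each of these is checked in $O(1)$ time, giving $O(1)$ edge changes to $H$ per update in the worst case. No amortization is used.
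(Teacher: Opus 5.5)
\paragraph{Gap: the update time.} Your approximation argument (exact weights, $d=\lceil 4\alpha/\eps\rceil+1$) and your $O(1)$-changes argument are fine; the update time does not meet the statement. The structure you finally commit to, per-vertex balanced trees keyed by (weight, id), costs $\Theta(\log n)$ per update. Your claim that this fits in $O(\eps^{-1}\log R)$ ``for polynomial $R$'' goes the wrong way. Polynomial $R$ gives $\eps^{-1}\log R=O(\eps^{-1}\log n)$, but what you need is $\log n=O(\eps^{-1}\log R)$, i.e.\ $R\ge n^{\Omega(\eps)}$. This fails for constant $R$, and also for the $R=\otilde(\eps^{-1})$ instances on which \Cref{thm:lowarb} runs this sparsifier; there a $\log n$ factor would destroy the $\otilde(\alpha\eps^{-4})$ bound. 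Your fallback, ratio-$2$ buckets plus exact order inside the threshold bucket, does not remove the $\log n$. That bucket can hold $\Theta(\deg(v))$ edges, so finding its heaviest unmarked (or lightest marked) edge again needs a heap or search tree.

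\paragraph{Missing idea: round, then apply the static lemma to the rounded instance.} Round each weight down to a power of $1+\eps/4$, giving $\tilde w$ with $w/(1+\eps/4)<\tilde w\le w$ and only $O(\eps^{-1}\log R)$ distinct values. \Cref{lem:weighted-matching-sparsifier} uses only that, at each vertex, every marked edge is at least as heavy as every unmarked one, so weights are nondecreasing along routes. Under $\tilde w$ this holds for any choice of edges within the threshold value class, so no order inside a class is needed and ties can be broken arbitrarily. The rounding loss is then a single outer factor and never enters the routing argument: $\mu_w(H)\ge\mu_{\tilde w}(H)\ge(1-2\alpha/d)\mu_{\tilde w}(F)\ge\frac{1-2\alpha/d}{1+\eps/4}\mu_w(F)$. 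This is at least $(1-\eps/2)\mu_w(F)$ once $d=\lceil 8\alpha/\eps\rceil$. Your choice of $d$ leaves no room for this factor, which is why any slack looked fatal to you.

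\paragraph{Resulting algorithm.} At each vertex, keep marked and unmarked edges in per-value lists. An update is repaired by scanning the $O(\eps^{-1}\log R)$ values and doing at most one mark swap per endpoint. This gives the stated worst-case time with no dependence on $n$, and your $O(1)$-changes argument carries over unchanged.
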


\begin{proof}
Round weights down to powers of $1+\eps/4$ and set
$d=\lceil8\alpha/\eps\rceil$.  Each vertex marks its
$\min\{d,\deg(v)\}$ heaviest incident edges under the rounded weights,
breaking ties deterministically, and $H$ contains the edges marked by both
endpoints.  By \Cref{lem:weighted-matching-sparsifier},
$\Delta(H)\le d$ and $\mu_w(H)\ge\frac{1-2\alpha/d}{1+\eps/4}\mu_w(F)\ge(1-\eps/2)\mu_w(F)$.

The rounded weights have $O(\eps^{-1}\log R)$ distinct values.
At each vertex, store the marked and unmarked edges in lists by
value.  After an update, scan all the distinct values
and restore the invariant by at most one mark swap. 
This takes
$O(\eps^{-1}\log R)$ worst-case time, changes $O(1)$ marks, and hence causes $O(1)$ edge changes in $H$.
\end{proof}

\end{document}